\def\BibTeX{{\rm B\kern-.05em{\sc i\kern-.025em b}\kern-.08em
    T\kern-.1667em\lower.7ex\hbox{E}\kern-.125emX}}
\newcommand{\Source}{\mathcal{S}}
\newcommand{\Target}{\mathcal{T}}
\newcommand{\Algo}{\mathcal{A}}
\newcommand{\Randomizer}{\mathcal{R}}
\newcommand{\Output}{\mathcal{O}}
\newcommand{\Dataset}{\mathcal{D}}
\newcommand{\Plie}{\sigma}
\newcommand{\NotPlie}{\Bar{\sigma}}
\newcommand{\truth}{\NotPlie}
\newcommand{\lie}{\left(\frac{\Plie}{T-1}\right)}
\newcommand{\tinylie}{\frac{\Plie}{T-1}}
\newcommand{\I}{\mathbb{I}}
\newcommand{\Div}{\mathbb{D}}
\newtheorem{theorem}{Theorem}
\newtheorem{lemma}{Lemma}
\newtheorem{example}{Example}
\newtheorem{definition}{Definition}
\newtheorem{remark}{Remark}
\newcommand{\aurelien}[1]{\todo[inline,caption={},color=green!40]{{\it Aurelien:~}#1}}
\newcommand{\riad}[1]{\todo[inline,caption={},color=blue!40]{{\it Riad:~}#1}}
\newcommand{\nicolas}[1]{\todo[inline,caption={},color=orange!40]{{\it Nicolas:~}#1}}
\newcommand{\gs}[1]{\todo[inline,caption={},color=purple!40]{{\it Guillaume:~}#1}}
\renewcommand{\arraystretch}{0.3} 
\newcommand{\Inn}[6]{\!\!
\left(
    \begin{tabular}{r|l}
        \!\tiny $#1$ & \tiny $#4$\!\\
        \!\tiny $#2$ & \tiny $#5$\!\\
        \!\tiny $#3$ & \tiny $#6$\!
    \end{tabular}
\right)
}
\newcommand{\Ink}[3]{
\left(
    \begin{tabular}{c}
        \tiny $#1$ \\
        \tiny $#2$ \\
        \tiny $#3$ 
    \end{tabular}
\right)
}
\newcommand{\PNDum}[6]{ \mathbb{P}_{n,d}^{\text{dum}} \Inn{#1}{#2}{#3}{#4}{#5}{#6} }
\newcommand{\PND}[6]{\mathbb{P}_{n,d} \Inn{#1}{#2}{#3}{#4}{#5}{#6} }
\newcommand{\PNZ}[6]{\mathbb{P}_{n,0} \Inn{#1}{#2}{#3}{#4}{#5}{#6} }
\newcommand{\PreserveBackslash}[1]{\let\temp=\\#1\let\\=\temp}
\newcolumntype{C}[1]{>{\PreserveBackslash\centering}m{#1}}
\newcolumntype{R}[1]{>{\PreserveBackslash\raggedleft}m{#1}}
\newcolumntype{L}[1]{>{\PreserveBackslash\raggedright}m{#1}}
\newcommand*\rel@kern[1]{\kern#1\dimexpr\macc@kerna}
\newcommand*\widebar[1]{%
  \begingroup
  \def\mathaccent##1##2{%
    \rel@kern{0.8}%
    \overline{\rel@kern{-0.8}\macc@nucleus\rel@kern{0.2}}%
    \rel@kern{-0.2}%
  }%
  \macc@depth\@ne
  \let\math@bgroup\@empty \let\math@egroup\macc@set@skewchar
  \mathsurround\z@ \frozen@everymath{\mathgroup\macc@group\relax}%
  \macc@set@skewchar\relax
  \let\mathaccentV\macc@nested@a
  \macc@nested@a\relax111{#1}%
  \endgroup
}
\newcommand{\Abar}{\widebar{\Algo}}
\newcommand{\Sbar}{\widebar{\mathcal{S}}}
\newif\ifpreprint
\newcommand{\inPreprint}[2]{\ifpreprint #1\else #2\fi}
\begin{document}
\title{Mitigating Leakage from Data Dependent Communications in Decentralized Computing using Differential Privacy}


\author{\IEEEauthorblockN{Riad Ladjel\IEEEauthorrefmark{1}, Nicolas Anciaux\IEEEauthorrefmark{1}, Aurélien Bellet\IEEEauthorrefmark{2}, Guillaume Scerri\IEEEauthorrefmark{1}}
\IEEEauthorblockA{
\IEEEauthorrefmark{1}Petrus team, Inria, France
\IEEEauthorrefmark{2}Magnet team, Inria, France}}

\maketitle
\begin{abstract}

  Imagine a group of citizens willing to collectively contribute their personal
  data for the common good to produce socially useful information, resulting
  from data analytics or machine learning computations. Sharing raw personal
  data with a centralized server performing the computation could raise concerns
  about privacy and a perceived risk of mass surveillance. Instead, citizens may
  trust each other and their own devices to engage into a decentralized
  computation to collaboratively produce an aggregate data release to be shared.
  In the context of secure computing nodes exchanging messages over secure
  channels at runtime, a key security issue is to protect against external
  attackers observing the traffic, whose dependence on data may reveal personal
  information. Existing solutions are designed for the cloud setting, with the goal
  of hiding all properties of the underlying dataset, and do not address the
  specific privacy and efficiency challenges that arise in the above context. In
  this paper, we define a general execution model to control the data-dependence
  of communications in user-side decentralized computations, in which
  differential privacy guarantees for communication patterns in global execution
  plans can be analyzed by combining guarantees obtained on local clusters of
  nodes. We propose a set of algorithms which allow to trade-off between
  privacy, utility and efficiency. Our formal privacy guarantees leverage and
  extend recent results on privacy amplification by shuffling. We illustrate the
  usefulness of our proposal on two representative examples of decentralized
  execution plans with data-dependent communications.

\end{abstract}






\pagestyle{plain}


\section{Introduction} 
\label{sec:intro}
  
\aurelien{introduire un peu de structure explicite dans l'intro serait pas mal je pense; genre "Problem statement"; "Related work"; "Contributions", en utilisant des backslash paragraph \nicolas{J'avoue que les intro découpées de cette facon me plaisent en général un peu moins, ca risque de couper le fil de l'histoire racontée. Aurélien si tu as une idée plus précise, je te laisse bien sûr carte blanche. De mon côté, vus les délais je me concentrerai sur le running ex et la section 5 (qui reste encore à adapter à la nouvelle section 2, et avec le pb de l'évaluation de la privacy sur l'arbre Agregate à rectifier)}}

In many areas such as health, social networks or smart cities, the need to obtain information from citizens for the social good is growing. Hospitals are asking patient groups to provide health statistics\footnote{See e.g., the ComPaRe initiative in Paris hospitals: \url{ https://compare.aphp.fr/}} related to diseases for which little data is available or about recently recognized pathologies \cite{robert2021production}. Researchers also ask citizen volunteers to collectively use their social media accounts to identify potential abuse and political propaganda \cite{silva:hal-03048337}. Similarly, municipalities are using participatory sensing applications to collect urban statistics (e.g., on street noise exposure \cite{hachem2015monitoring, popa2019mobile, brahem2021consent}).
While legal frameworks such as data portability \cite{EUGDPR2016, pardau2018california} or data altruism \cite{EUDataGovAct2020} allow citizens to retrieve and share their personal data, asking citizens to provide raw data raises privacy concerns, with a risk of being perceived as mass surveillance \cite{wagner2021pandemica} and limited chances for widespread adoption~\cite{kramer2020making}. 

Instead, in this work we consider a setting where groups of citizens wish to collectively compute and share an \emph{aggregate data release}, resulting from a distributed database query or a federated machine learning process \cite{kairouz2019advances} executed on their personal data. 
In terms of security, this implies considering a distributed set of trusted computing nodes processing data locally, and exchanging results at runtime through secure communication channels. While techniques aiming to ensure that a computation does not leak information (including through communication patterns) exist, from secure multiparty computations \cite{Bonawitz17} to secure outsourcing \cite{Chowdhury20}, 
they either have a very high overhead when massive number of parties are involved or rely on distributing trust between a small number of servers which is contradictory with our massively distributed setting. In order to avoid such drawbacks, we thus avoid changing the way distributed computations are performed and wish to keep computations at the level of client nodes.
Securing local processing from an attacker 
can be tackled by solutions ranging from software security~\cite{bater2017smcql,MPC:MPCgoneLive} to hardware-based protection~\cite{LMR,Ryoan, sh:oblivious2}. These solutions suggest that avoiding secure servers and relying only on simple client nodes is a feasible and sound approach.

In this paper, we focus on the problem of protecting against traffic analysis between computing nodes, whose dependence on data may reveal sensitive (e.g., personal) information. This is a standard, independent security problem that has been less studied than leakage from other side channels (such as memory access patterns~\cite{zheng2017opaque, xu2019hermetic,DP-Orimenkho,DP-Mazloom}), especially in the context of massively decentralized processing. 
Data-dependent communications in execution plans of distributed queries arise for
efficiency reasons (to enable nodes to do more work in parallel) and depend on the query to be evaluated. For example, in distributed SQL and MapReduce analytics, nodes process data based on given (group-by) value intervals or (join) hashed key values (see e.g., \cite{bellamkonda2013adaptive}). In distributed data mining and machine learning algorithms, iterative updates to the models are split across computing nodes based on the distance to given centroids, according to regions of the feature space, or based on the similarity of local updates \cite{zhao2009parallel,kmedoids,RP-DBSCAN,clustfl1,clustfl2}.

Existing solutions to hide the dependency between communication patterns and underlying data values have been proposed for the so-called \emph{confidential computing} model \cite{russinovich2021toward, sturzenegger2020confidential, CCCWhitePaper, CCCWhitePaper2, rashid2020rise}, where computing nodes are allocated in cloud settings within secure enclaves and generate data exchanges at query time over secure communication channels.
A first option is to make all communications (computationally) data independent. 
As a more practical alternative to the naive option of broadcasting dummy messages to all nodes, \cite{sh:observing} proposes to use communication padding (data transmitted from node to node is padded to a maximum size) and clipping (when the maximum size is reached, messages are discarded) for MapReduce computations. 
However, to avoid high communication overheads and preserve the utility of the computation, appropriately tuning the padding and clipping parameters is crucial. This requires prior knowledge on the data distribution to obtain a good balance between computing nodes, which is unrealistic in our context with potentially unknown sets of volunteers holding small data sets (in extreme cases, citizens contribute with a single personal record).

A second option is to anonymize communication patterns by resorting to secure shuffling \cite{Prochlo}, mixnets \cite{M2R} or oblivious data exchanges \cite{zheng2017opaque}. 
However, the privacy guarantees obtained in a massively decentralized setting are difficult to formally quantify, especially against attackers with auxiliary knowledge. Even without such knowledge, anonymized communication patterns may still leak sensitive information (e.g., observing communication patterns to computing nodes that collect personal data for a given range of values would reveal the number of involved citizens with values in this range).

Hence, hiding communication patterns in the case of massively decentralized computations on end-user nodes 
calls for new solutions.
A key challenge is to design techniques that can appropriately trade-off between privacy, utility and efficiency.

In this work, we propose to mitigate the leakage from data-dependent communication patterns with the tools of differential privacy, a mathematical definition of privacy with many interesting properties (including robustness against auxiliary knowledge). 
Our first contribution is to define a general execution model, in which the differential privacy guarantees of communications patterns resulting from global execution plans can be analyzed by combining guarantees obtained for local clusters of nodes through composition.
Our second contribution is a set of algorithms to hide cluster-level communication patterns. Our algorithms rely on a combination of local sampling (each node randomizing the targets of messages), flooding (adding dummy messages) and shuffling subsets of messages with scramblers. We prove analytical differential privacy guarantees for our solutions by relying and extending recent results on privacy amplification by shuffling \cite{Cheu2019,amp_shuffling,Balle2019}. 
Finally, our last contribution is to highlight the practical usefulness of our solutions on two representative examples of execution plans with data-dependent communication, showing possible privacy-utility-efficiency trade-offs.

The rest of the paper is organized as follows. Section~\ref{sec:problem} defines the problem. 
Section~\ref{sec:DP-data} introduces a first approach where each node locally randomizes its messages. 
In Section~\ref{sec:Privacy-anal}, we propose algorithms where the noise can be shared across nodes.
Section~\ref{sec:eval} presents an evaluation of our approach on two concrete use-cases. We review some related work in Section~\ref{sec:rw}, and conclude in Section~\ref{sec:conclusion}.

\section{Problem Definition} 
\label{sec:problem}

In this section, we describe our execution
and adversary models, 
formulate the problem of mitigating the dependency of communication to data when executing distributed queries, 
and present
the performance metrics used to evaluate the quality of solutions. 
We begin with a motivating example used throughout the section to illustrate the various concepts.


\subsection{Motivating Example} 
\label{sec:usecase}

We introduce here a representative example of distributed query producing aggregated values of certain attributes grouped by other sets of attributes for a dataset contributed by a large set of participants. This type of queries, called \emph{Aggregate} in the paper, is used to understand frequency distributions and collect marginal statistics from the data, as routinely performed in a data exploration phase. Such queries are also used in many data mining/machine learning algorithms (e.g., to learn decision trees). The data, computation and distribution models are as follows.

\paragraph{Data model} Personal data is hosted in users' source nodes and forms a partition of the dataset under consideration, as in federated machine learning or federated database scenarios. For simplicity, the dataset is represented by a single table $\mathcal{D}(\mathbb{A})$, with $\mathbb{A}$ a fixed set of attributes used in the computation, 
and any source node hosts a single tuple $t$ in $\mathcal{D}$.

\paragraph{Computation model} An Aggregate query on the dataset $\mathcal{D}$ is expressed as $Q=\{q\}$ a set of sub-queries, where each sub-query $q = (\mathcal{G}_q, \mathcal{V}_q, \mathcal{F}_q)$ is defined by a set $\mathcal{F}_q$ of statistical functions (e.g., \{count, avg, max\}) evaluated over a set of aggregate attributes $\mathcal{V}_q \subset \mathbb{A}$ and grouped by another set of attributes $\mathcal{G}_q \subset \mathbb{A}$. Note that $Q$ is a typical case of a SQL aggregate database query with a \emph{grouping sets} clause. 

\paragraph{Distribution model} For evaluating query $Q$, the dataset $\mathcal{D}$ is partitioned vertically by sub-query $q$, and horizontally on grouping attribute values of $\mathcal{G}_q$ for each sub-query. Each partition $\mathcal{D}_q^i$ of the dataset is then assigned to a distinct compute node $c_q^i$ in charge of computing $q(\mathcal{D}_q^i)$. The partitions $\mathcal{D}_q^i$ can be formed by locally (i.e., at source node) assigning any tuple $t \in \mathcal{D}$ to the appropriate partitions. The result of $Q = \cup_{q,i}\{q(\mathcal{D}_q^i)\}$ is obtained by the union of the results produced by all compute nodes $\{c_q^i\}$ on their partition $\mathcal{D}_q^i$.

For illustration purposes, in the rest of this section we consider the following query instance as a running example.
\begin{example}[Running example, Figure~\ref{fig:QEPex3}]
  \label{ex:use-case-full}
A dataset for a diabetes study consists of a table $\mathcal{D}(A,B,I)$ with two grouping columns, Age range ($A$) and Body mass index ($B$), and one aggregation column, Insulin rate ($I$). Each row of the table is a tuple $t_k=(a,b,i)$ with the values taken for a contributing patient.
A query $Q$ is the union of $q_1 = (A, I, avg)$ and $q_2 = (B, I, avg)$ which compute respectively the average insulin rate grouped by age range and grouped by body mass index. Note that this query is equivalent to the SQL database query: \it{SELECT A,B,AVG(I) FROM D GROUP BY GROUPING SETS (A),(B)}.
The dataset $\mathcal{D} = \{t_k\}_{k \in [1,d]}$ is
  distributed on a set of source nodes each holding one tuple
  $t_k$. To evaluate the query, a first vertical partition of the dataset
  $\mathcal{D}_1=\mathcal{D}(A,I)$ is used to evaluate $q_1$ and a second
  $\mathcal{D}_2=\mathcal{D}(B,I)$ to evaluate $q_2$. The grouping domains $A$ for $q_1$ and
  $B$ for $q_2$ are respectively partitioned horizontally into $C$ parts
  $\{A_i\}_{i \in [1,C]}$ and $\{B_j\}_{j \in [1,C]}$, with
  $\mathcal{D}_1^i = \{t \in \mathcal{D}_1, t.a \in A_i\}$ and $\mathcal{D}_2^j = \{t \in \mathcal{D}_2, t.b \in
  B_j\}$. The query execution is distributed on two groups $\mathcal{C}_1$ and $\mathcal{C}_2$ of $C$ compute
  nodes each, computing $q_1(\mathcal{D}_1)$ and $q_2(\mathcal{D}_2)$ respectively. A source node
  holding $t_k=(a_i,b_j,i_k)$ with $a_i\in A_i$ and $b_j\in B_j$ sends $(a_i,i_k)$ to compute node $c_1^i \in \mathcal{C}_1$
  which evaluates $q_1(\mathcal{D}_1^i)$ and $(b_j,i_k)$ to
  $c_2^j \in \mathcal{C}_2$
  which evaluates $q_2(\mathcal{D}_2^j)$. The result $R$ of the query
  is the union of the results produced by compute nodes.
\end{example}

\begin{figure}
\centering
    \includegraphics[width=.5\textwidth]{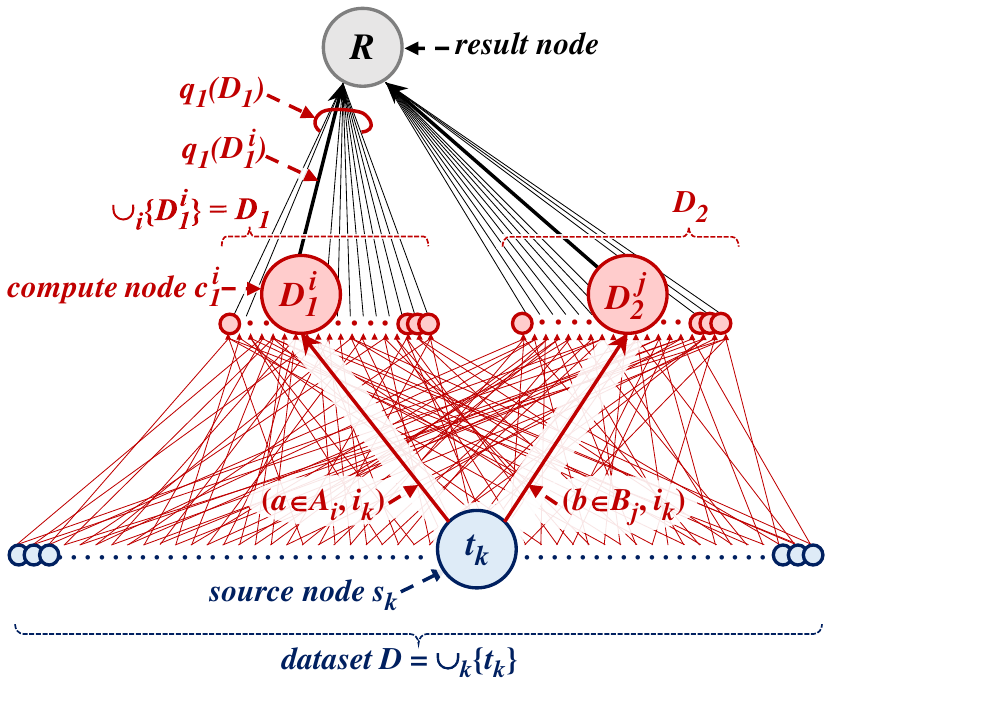}
    \caption{Distributed query plan (Example~\ref{ex:use-case-full}).}
    \label{fig:QEPex3}
    \vspace{-1em}
\end{figure}

\emph{Data leakage from communications.} In general, an attacker observing the destination of (encrypted)
messages sent by a user contributing to the computation 
deduces the values of the grouping attributes
$\mathcal{G}_1,\dots,\mathcal{G}_n$ for this user. For example, the user contributing tuple
$(A=19,B=18,I=128)$ in Example~\ref{ex:use-case-full} 
sends $(A=19,I=128)$ to the compute node 
processing $A=19$, and $(B=18,I=128)$ to the one processing $B=18$, and thus
reveals their age and body mass index to the attacker. By extension, an attacker observing a
compute node identifies users sharing the values processed by that compute
node. If some compute nodes process sensitive values (e.g., nodes corresponding
to low or high BMI values in Example~\ref{ex:use-case-full}), the communications reveal potentially
discriminated users. More generally, the higher the number of grouping sets (desirable in terms of utility) and the finer the partitioning (beneficial
from a performance viewpoint), the more extensive and accurate is the information
inferred about contributors. Ultimately, an attacker observing the messages in
the computation could infer all the grouping values of the participants, which
amounts to revealing their profile (if not their identity, as a few demographic
attributes are enough to uniquely identify an individual
\cite{DBLP:conf/www/RocherMM21}).

\subsection{Execution Model}


Our execution model aims to abstract away the specific computation to focus on
capturing generic communication patterns. In order to achieve this, we break
down the distributed computation between a set of elementary \emph{nodes} that
execute arbitrary computations but have simple communication behavior. We assume
that nodes communicate on secure channels enforcing integrity and secrecy of
communications (e.g. TLS).
We then combine these nodes in order to obtain an \emph{execution plan}.

\paragraph{Nodes}
The simplest building block of our execution model is a node, which executes an
elementary step of the computation. We assume that each node has a unique
unforgeable public identifier (typically provided by a PKI).
The input of this elementary computation consists of data received from other
nodes and/or internal data of the node (represented as a set of tuples). As we
focus on communication between nodes, we allow nodes to run any program on
received data but explicitly require
that 
a node sequentially follows three steps:
\begin{enumerate}
\item Receive a number of input messages from a set of source nodes (if any);
\item Process input data together with internal data of the node;\footnote{We
  do not consider the leakage that may occur during computation due to side channels (e.g. timing), 
  see Section~\ref{sec:rw} for a review of mitigation techniques based e.g. on constant time code, ORAMs, etc.}
\item Establish secure channels with target nodes and send a number of
  messages to a set of target nodes (if any).
\end{enumerate}
We assume that the number of messages received and sent by a given node is fixed
(if it is not, we add a dummy target node that represents discarding the
message). \nicolas{un dummy target convient au cas où le noeud retire des messages, mais quid s'il en rajoute? ex: join, produit cartésien...}
We further assume that the size of messages does not depend on the
value taken by the data (content of the message).

Given a node $\nu$ and its input data $\mathcal{D}_\nu$ (a set of tuples consisting of internal
data and messages received from other nodes), the execution of $\nu$ on $\mathcal{D}_\nu$ generates
some communication in the form of a set of triplets $(\nu, m_i, t_i)$ indicating
that node $\nu$ sent a message with content $m_i$ to target node $t_i$. We denote by $\nu(\mathcal{D}_\nu)$ the set of triplets produced by the execution of $\nu$ on $\mathcal{D}_\nu$. Since messages are indistinguishable to attackers
(as we use secure channels between nodes and assume same size messages), in
the rest of the paper we will often omit $m_i$.

\begin{example}
\label{ex:nodes}
  The computation described in Example~\ref{ex:use-case-full} consists of four types of nodes:
  \begin{itemize}
  \item A set of source nodes, each holding one data tuple $t_k=(a_i,b_j,i_k)$ where $a_i\in A_i$ and $b_j\in B_j$. They do not receive
    messages and send two messages: $(a_i,i_k)$ to the compute node $c_1^i$ in $\mathcal{C}_1$ processing $\mathcal{D}_1^i$, and
    $(b_j,i_k)$ to the compute node $c_2^j$ in $\mathcal{C}_2$ processing $\mathcal{D}_2^j$.
  \item Two types of compute nodes $\mathcal{C}_1 = \{c_1^i : 1\leq i\leq C\}$ and $\mathcal{C}_2 = \{c_2^j : 1\leq j\leq C\}$ which receive a number of messages from the source nodes and send out one message (the aggregated
    result).
  \item A result node $R$ which receives the (partially) aggregated data from the compute nodes.
  \end{itemize}
\end{example}

\gs{Peut-être donner un ou deux exemples ici. Group by MR me semble bien avec
  sending node, reducing node, aggregating node par ex.
  \aurelien{ce serait super bien. par exemple en s'appuyant sur une figure, potentiellement en 2 parties: 1/ execution plan avec juste des nodes (référencée ici), et 2/ variante avec les clusters (référencée plus loin). mais on manque sans doute de temps}}


\paragraph{Execution plans}
An execution plan is a set of nodes with disjoint node identifiers (if an ``agent''
needs to execute multiple computation steps, it executes several nodes). A
distributed execution is simply the execution of all nodes where each message is
taken as input to the specified target node.

\begin{definition}[Communication graph of an execution plan]
  Given a set of nodes $\mathcal{N}$, a  dataset 
  $\mathcal{D} = \bigcup_{\nu\in\mathcal{N}} \mathcal{D}_\nu$, and $r$ the randomness used in the
  computation (divided across all nodes), we define the communication graph
  $\mathcal{N}^r(\mathcal{D})$ of the execution plan $\mathcal{N}$ on data
  $\mathcal{D}$ with randomness $r$ as the union of the communication outputs
  $\{(\nu, t_i)\}=\bigcup_{\nu\in\mathcal{N}} \nu(\mathcal{D}_\nu)$ produced by each node $\nu\in\mathcal{N}$ with input data $\mathcal{D}_\nu$ (message contents are excluded as they are hidden by secure channels and have same size) during the computation, with edges labeled by
  the order number of the message sent. 
\end{definition}
Probabilities will be taken over $r$. When clear from the context, $r$ may be omitted.

\begin{example}
\label{ex:com_graph}
  In our example, the communication graph for the execution plan outlined in
  Example~\ref{ex:use-case-full} is the graph presented in
  Figure~\ref{fig:QEPex3} where a source node $s$ holding tuple $t_k=(a_i,b_j,i_k)$ with $a_i\in A_i$ and $b_j\in B_j$ has an edge
  labeled $1$ to the compute node $c_1^i\in\mathcal{C}_1$ processing $(a_i,i_k)$, and an
  edge labeled $2$ to the compute node $c_2^j\in\mathcal{C}_2$ processing $(b_j,i_k)$, and all compute nodes have an outgoing edge to the result node $R$.
\end{example}

\subsection{Adversary and Security Models}
\label{sec:adversary-model}


We assume that node identifiers cannot be forged by the adversary in order to
impersonate nodes. This could typically be achieved through the use of a public
key infrastructure (PKI). In the case of SGX for example, Intel would play the
role of such a PKI, ensuring that the computation environment is what we expect.
We assume that all nodes communicate on secure channels that provide two-way
authentication, secrecy of messages (i.e. an adversary cannot distinguish
between a true message and a random message of the same size), and integrity of
communications (i.e. an adversary may not convince the receiver that a forged
message was sent by the sender), under some cryptographic assumption
$\mathcal H$ (e.g. decisional Diffie–Hellman). For example, TLS with client authentication would
satisfy these conditions under the cryptographic hypotheses ensuring that the
adversary may not break the underlying encryption and signature schemes.

We consider an adversary observing all communications, and only communications,
in particular the adversary cannot observe the internal state of
nodes. Additionally we assume that the adversary cannot not break the cryptographic
hypotheses $\mathcal H$ (and is otherwise unbounded). Therefore, the adversary cannot
break the security of secure channels and thus cannot observe the content
of messages. 

The natural way to model the amount of information leakage of an execution plan
$\mathcal{N}$ is differential privacy, applied here to communication patterns.
Formally, given an execution plan $\mathcal{N}$ and a dataset $\mathcal{D}$, the
adversary is given access to a computation of $\mathcal{N}(\mathcal{D})$ and
tries to infer information about individual tuples in $\mathcal{D}$ (e.g., the
values of attributes $A$ and $B$ of a user in Example~\ref{ex:use-case-full})
leading to a natural definition similar to the computational differential
privacy of \cite{ComputationalDP}.  As usual with differential privacy, we do
not make any assumption on the auxiliary knowledge of the adversaries (they may
have arbitrary knowledge and even know some tuples of the dataset). We then require that the adversary should not be able to distinguish two runs of the protocol with neighboring datasets (i.e. datasets differing by exactly one tuple) with good probability. As the
adversary cannot break secure channels, we can abstract away the content of messages
and the construction of secure channels (for more details see
Appendix~\ref{app:comp-diff-priv}), giving only the communication graph as
observables for the adversary.  Formally, under hypotheses $\mathcal{H}$, we have the following privacy definition.


\begin{definition}[DP for execution plans]
\label{def:dp1}
  An execution plan $\mathcal{N}$ is $(\epsilon, \delta)$-differentially private
  if for any neighboring $\mathcal{D}_0, \mathcal{D}_1$ (differing by at most one
  tuple), and for all possible sets $\mathcal{O}$ of communication graphs
  , we have:
  \[
    P[\mathcal{N}(\mathcal{D}_0)\in\mathcal{O}] \leq e^\epsilon P[\mathcal{N}(\mathcal{D}_1)\in\mathcal{O}] + \delta.
  \]
\end{definition}

Note that providing privacy guarantees for releasing the \emph{result} of the
computation is an orthogonal problem, but one advantage of using differential
privacy for protecting communications is that it will compose well with techniques that provide
differential privacy guarantees for protecting the result or other side channels (see Section~\ref{sec:RW-side}).

\subsection{Reducing the Problem to Clusters of Nodes}
\label{sec:simplifying-problem}

In order to analyze the problem and control data dependency, we restrict
the type of nodes we consider as follows.

\begin{definition}[Simple node]
  A simple node is a node such that 
  modifying one input tuple in $\mathcal{D}$ may only change one output message of this node (content
  and target).
\end{definition}

Importantly, this does not restrict the type of distributed computations that we can do in
practice. For instance, although in Example~\ref{ex:nodes} a source node is not simple (two
messages depend on the tuple $t_k$), it can be seen as a pair of simple
``logical'' nodes running on the same client, one sending $(a,i)$, the other one
sending $(b,i)$. Note that when doing this transformation the input data of
nodes is no longer disjoint between nodes. These two ``logical'' nodes share the same
base identity (the identity of the physical node) and are differentiated by the
order number of the message they send.

As a preliminary step towards achieving differential privacy for communications
in an arbitrary execution plan, we reduce the problem to considering a set of simple
nodes with the same communication behavior, namely a \emph{cluster} of
nodes. This allows for a generic analysis as we will see in Section~\ref{sec:DP-data}.\footnote{In particular, for a cluster, the only data dependency is which target receives a message, and differential privacy of all communications essentially becomes recipient anonymity as defined in \cite{AnoA}.}
Moreover, working at the level of a cluster will allow us to share noise
addition for these nodes (see Section~\ref{sec:Privacy-anal}), obtaining better
guarantees with less traffic 
while reasoning locally
on the potential communications. Crucially, we show that if we provide
privacy guarantees for each cluster of an execution plan, they translate into
privacy guarantees for the overall execution plan by composition (see
Theorem~\ref{th:compo-clusters}).

\begin{definition}[Cluster of nodes]
  In an execution plan $\mathcal{N}$, a set of nodes $\mathcal{C} \subseteq \mathcal{N}$ is a cluster if
  \begin{itemize}
  \item the set of potential target nodes $\mathcal{T}\subseteq \mathcal{N}$ for each message sent by any
    node in $\mathcal{C}$ is the same,
  \item all nodes in $\mathcal{C}$ are simple,
  \item nodes in $\mathcal{C}$ operate on disjoint data.  
\end{itemize}
\end{definition}

\begin{example}
\label{ex:clusters}
  In our running example, source nodes (users contributing data) are not \emph{simple} as they produce two messages (see Figure~\ref{fig:QEPex3}) and thus cannot directly be divided into clusters. However, breaking down each source node $s$ into two logical nodes $s_1$ and $s_2$ sending out messages for the first (resp. second)
  set of target nodes $\mathcal{C}_1$ (resp. $\mathcal{C}_2$), all nodes are simple. Denoting by $\mathcal{S}_1$ (resp. $\mathcal{S}_2$) the nodes communicating with $\mathcal{C}_1$ (resp. $\mathcal{C}_2$), note that the nodes in $\mathcal{S}_1$ operate on disjoint data (different tuples of $\mathcal{D}$), as
  do nodes in $\mathcal{S}_2$ (but nodes in $\mathcal{S}_1$ share data with nodes in $\mathcal{S}_2$, as nodes $s_1$ and $s_2$ process the same tuple $t \in \mathcal{D}$). The
  computation can therefore be divided into five clusters: $\mathcal{S}_1$, $\mathcal{S}_2$, $\mathcal{C}_1$
  and $\mathcal{C}_2$ and $R$, as shown in Figure~\ref{fig:Clusters-ex3}. Note that communications originating from $\mathcal{C}_1$ and $\mathcal{C}_2$
  are not data dependent (they only send out one message to $R$), hence they do not need any countermeasure to hide their
  communication patterns. $R$ does not send messages at all, and will therefore
  be ignored in the remainder of the paper.
\end{example}

For simplicity and without loss of generality, in the rest of the paper we assume  that nodes in a cluster only send one message.
Indeed, as each input tuple of a simple node may only change one output message, all output messages can be computed independently.


As our goal is to modify execution plans without altering the underlying
computation, we are restricted to working on communication graphs rather than
the internal workings of nodes. Therefore, all our algorithms will take
communication graphs as input and return new, perturbed communication graphs.
With a reduction similar to the one presented in \cite{AnoA}, based on our previous assumptions, we can abstract away
the dependence on data and provide the following differential privacy
definition, which is equivalent to Definition~\ref{def:dp1} for clusters of
nodes. Indeed, with our definition of simple nodes, two neighboring datasets may only produce
communication patterns differing by the recipient of at most one message, defining \emph{neighboring communication graphs}.\footnote{This is in line with the notion of adjacency for recipient anonymity in \cite{AnoA}.}

\begin{definition}[DP for communication graphs]
  \label{def:dp2}
  An algorithm $\Algo$ is $(\epsilon,\delta)$-differentially private if for any two neighboring
  communication graphs $\mathcal{G}_1, \mathcal{G}_2$ (differing in at most one message) and any set of communication graphs $\Output$, we have:
\[
  P[\Algo(\mathcal{G}_1) \in \Output]  \leq e^{\epsilon} P[\Algo(\mathcal{G}_2) \in \Output)] + \delta.
\]
\end{definition}
If $\Algo$ can be written as a transformation of an
execution plan $\mathcal{N}$ (potentially by adding nodes in the middle of the original communication paths),
with a slight abuse of notations we denote this transformation as $\Algo(\mathcal{N})$.




Finally, as execution plans may be composed of multiple clusters, we need a way
of combining the privacy guarantees of each cluster into a guarantee for the overall
execution plan. In order to do this, we take advantage of the composition
properties of differential privacy and use the simple observation that a given data tuple may only
influence communications along the communication path originating from its source.

\begin{theorem}
  \label{th:compo-clusters}
  Let $\mathcal{N}$ be a valid execution plan where vertices form a set $I$ of
  disjoint clusters $(\mathcal{N}_i)_{i\in I}$. If for each $i\in I$,
  $\Algo(\mathcal{N}_i)$ is $(\epsilon_i, \delta_i)$-differentially private, then
  $\Algo(\mathcal{N})$ is $(\epsilon, \delta)$-differentially private where $\epsilon$
  and $\delta$ are the worst possible sums of $\epsilon_i$'s and $\delta_i$'s encountered
  along the path of a data item the communication graph:
  \begin{align*}
    \epsilon &= \max_{\text{path }p\in\mathcal{N}}~\sum_{i : \exists \nu\in\mathcal{N}_i \text{ s.t. } n\in p}\epsilon_i,\\ 
    \delta &= \max_{\text{path }p\in\mathcal{N}}~\sum_{i : \exists \nu\in\mathcal{N}_i \text{ s.t. } n\in p}\delta_i.
  \end{align*}
\end{theorem}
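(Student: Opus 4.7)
The plan is to reduce the theorem to standard adaptive sequential composition of differential privacy, by exploiting the fact that a single tuple only influences communications along its source's path. First, I would fix neighboring datasets $\mathcal{D}_0$ and $\mathcal{D}_1$ differing in exactly one tuple $t$, and let $p$ be the set of clusters whose nodes handle (directly or transitively) a message derived from $t$. Because every node in a cluster is simple, a single tuple change alters at most one outgoing message of each node that processes $t$; since clusters consist of simple nodes operating on disjoint inputs, $t$ can alter the input of $\Algo(\mathcal{N}_i)$ in at most one message, and this happens precisely when $\mathcal{N}_i$ intersects the path $p$ originating from $t$'s source.

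Next, I would handle the two kinds of clusters separately. For any cluster $i \notin p$, the input of $\Algo(\mathcal{N}_i)$ is unchanged between $\mathcal{D}_0$ and $\mathcal{D}_1$, so the distribution over its output is identical in the two executions and can be coupled exactly. For a cluster $i \in p$, conditioned on all upstream randomness and outputs, the two input communication graphs of $\Algo(\mathcal{N}_i)$ differ in at most one message, i.e.\ they are neighboring in the sense of Definition~\ref{def:dp2}, and the $(\epsilon_i, \delta_i)$-DP guarantee applies.

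Then I would combine the per-cluster guarantees via adaptive sequential composition along the topological order of clusters. The overall communication graph produced by $\Algo(\mathcal{N})$ is the concatenation of the outputs of each $\Algo(\mathcal{N}_i)$, each of which is a (post-processed) function of the outputs of the upstream clusters together with fresh randomness. Clusters off the path contribute a zero term to the $(\epsilon,\delta)$ budget thanks to the coupling, while clusters on the path compose additively, yielding $\bigl(\sum_{i\in p}\epsilon_i,\ \sum_{i\in p}\delta_i\bigr)$-DP for this specific pair of neighbors. Taking the maximum over all possible source-rooted paths (equivalently, over all choices of the differing tuple) gives the bounds stated in the theorem.

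The main obstacle I anticipate is to formalize cleanly that clusters off $p$ can be coupled identically rather than merely being identically distributed in isolation: otherwise one would be forced into a union bound over all clusters, yielding $\sum_{i \in I}\epsilon_i$ instead of the path-restricted sum. Establishing this requires being precise that the randomness $r$ decomposes into independent per-cluster components, that the dependence of downstream clusters on the changed tuple factors through their incoming messages only (a consequence of simplicity and the cluster-disjointness of internal data), and that post-processing preserves DP so the chaining is sound. Once this coupling and adaptive-composition skeleton is in place, the remainder is a direct accounting of $(\epsilon,\delta)$ along the worst-case path.
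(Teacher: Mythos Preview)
Your proposal is correct and is essentially what the paper's proof consists of: the paper gives a one-line proof stating that ``the result follows from applying a combination of classical sequential and parallel composition results for differential privacy'' with a citation to \cite{Dwork2014a}. You have simply unpacked that sentence, making explicit that parallel composition handles the off-path clusters (your coupling argument) and adaptive sequential composition handles the on-path ones, which is exactly the intended reading.
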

\begin{proof}
The result follows from applying a combination of classical sequential and parallel composition results 
for differential privacy, see \cite{Dwork2014a}.
\end{proof}

\begin{remark}
Execution plans typically require a small number of clusters. For instance, a single cluster is sufficient for simple Aggregate queries (with one grouping set).
For computations that require an iterative process (e.g., K-means and machine
learning algorithms in general), we make the graph acyclic by considering that each iteration is done
by a different set of nodes (and each ``agent'' will execute several nodes). In
terms of clusters, this means that each iteration adds more clusters and the privacy
guarantee we obtain degrades with the number of iterations, as usual when
considering differential privacy for this type of computation. Some
optimizations can be made for specific computations, but we leave this for future
work.
\end{remark}

\begin{figure}
\centering
    \includegraphics[width=.4\textwidth]{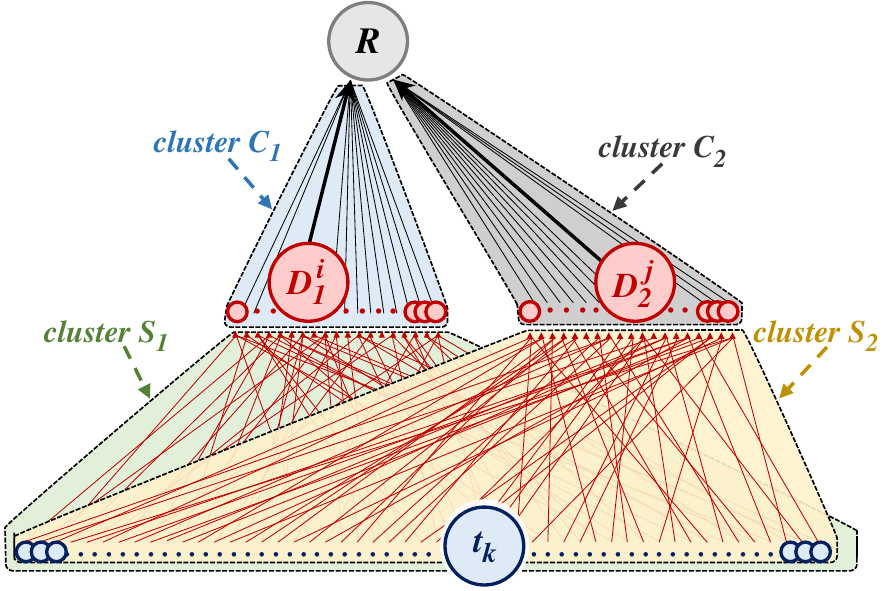}
    \caption{Clusters of nodes (Example~\ref{ex:clusters}).}
    \label{fig:Clusters-ex3}
    \vspace{-1em}
\end{figure}

\begin{example}
  In our running example, each data item encounters clusters $\mathcal{S}_1,\mathcal{S}_2,\mathcal{C}_1,\mathcal{C}_2$
  along its path. As mentioned before, $\mathcal{C}_1$ and $\mathcal{C}_2$ have data-independent
  communication outputs, and thus provide perfect privacy and do not need
  countermeasures. Therefore, if $\Algo$ provides $(\epsilon_1, \delta_1)$ and $(\epsilon_2, \delta_2)$-DP for $\mathcal{C}_1$ and $\mathcal{C}_2$ respectively, then $\Algo(\mathcal{N})$
  is $(\epsilon_1+\epsilon_2, \delta_1+\delta_2)$-DP.
\end{example}

\begin{example}
  If one is willing to trade off utility for a gain in privacy, another possible way
  of implementing (an analogue of) this computation would be for each
  participant to decide whether they want to participate to the aggregate on $a$
  or the aggregate on $b$. In this new execution plan $\mathcal{N}'$, each
  data tuple goes through exactly two clusters along its path (either $\mathcal{S}_1,\mathcal{C}_1$ or $\mathcal{S}_2,\mathcal{C}_2$)
  and $\Algo(\mathcal{N}')$ is
  $(\max\{\epsilon_1, \epsilon_2\}, \max\{\delta_1, \delta_2\})$-DP.
\end{example}

As any execution plan can be broken down into clusters, and Theorem~\ref{th:compo-clusters} provides DP guarantees on the whole execution plan from guarantees on clusters, we focus on providing guarantees at the cluster level. Specifically, Section~\ref{sec:DP-data} takes advantage of the fact that all nodes in a cluster have similar communication behavior in order to derive a DP bound using local differential privacy, while Section~\ref{sec:Privacy-anal} uses this common behavior to mutualize the cost of countermeasures across clusters while amplifying privacy guarantees. We then empirically show in Section~\ref{sec:eval} that for typical execution plans the guarantees provided at the cluster level transfer well to the execution plan level.

\subsection{Performance Metrics}
\label{sec:metrics}
In the general case it is theoretically impossible to have both perfect privacy and low overhead (in terms of latency and bandwidth) and solutions have to trade one for the other, as shown in \cite{Trilemma}. While we study a more specific problem, solutions similarly lie on a spectrum from no privacy and no overhead to perfect privacy and huge overhead (e.g., broadcast).  In this work, we explore trade-offs between privacy, utility and efficiency, defined as follows:
\begin{itemize}
\item \emph{Privacy}, measured by the parameters $\epsilon$ and $\delta$ of differential privacy (which bound the amount of 
leakage about individual data point).
\item \emph{Utility}, measured as the number of 
tuples effectively used in the final result.
\item \emph{Efficiency}, divided into three distinct dimensions:
  \begin{itemize}
  \item Network load: the amount of additional traffic generated by our solution w.r.t. non-private communications,
  \item 
        Individual load on users' nodes: the number of secure channels per node,
  \item Additional users' consents: the number of additional user's nodes involved in the execution plan.
  \end{itemize}
\end{itemize}

Note that depending on the specific implementation and computation considered,
the relative importance of these metrics may vary. For instance, efficiency aspects may be crucial to make the approach practical when using secure hardware. Our solutions will provide simple ways to adjust the trade-off between these metrics.
\section{Solution by Local Sampling and Flooding}
\label{sec:DP-data}

Following the reduction described in Section~\ref{sec:simplifying-problem}, we consider a single cluster of nodes composed of a set $\Source$ of $S$ sources nodes, each source seeking to send one message to a set $\Target$ of $T$ target nodes. In this context, a communication graph $\mathcal{G}$ can be represented as a set of $S$ messages $\mathcal{G}=\{(s_1,t_1),\dots, (s_S,t_S)\}$ where each $s_i\in\Source$ and $t_i\in\Target$. Our goal is to design a differentially private algorithm $\Algo$ which takes as input a (non-private) communication graph $\mathcal{G}$ and returns a perturbed graph $\Algo(\mathcal{G})$ which preserves the utility and efficiency of the distributed computation as much as possible.

In this section, we propose and analyze a baseline solution in which each message $(s,t)$ is randomized at the source node $s$, independently of others. In other words, the algorithm $\Algo$ will be of the form $$\Algo(\mathcal{G})=\{\Randomizer(s_1,t_1),\dots,\Randomizer(s_S,t_S)\},$$
where $\Randomizer$ is a \emph{local randomizer} applied independently to each message.
This setting corresponds to the so-called local model of differential privacy \cite{LDP}.

\subsection{Proposed Algorithm}

Our algorithm is based on two principles: \emph{sampling} and \emph{flooding}.
Sampling consists in randomizing the target of the message, following the idea of randomized response \cite{DP:randomized-response,kRR}. Note however that in contrast to typical use-cases in which randomized response is used to perturb the content of the message, here we use it \emph{perturb the destination of the message}. Sampling allows to trade utility for privacy without affecting efficiency. On the other hand, flooding consists in producing additional dummy messages that are indistinguishable from real messages from the attacker point of view, but can be discarded by target nodes at execution time so they do not affect the final result of the computation. Flooding alone cannot guarantee differential privacy (except in the extreme case when it becomes equivalent to a broadcast), but combined to sampling we will show that it allows to trade efficiency for improved privacy without affecting utility.

Formally, let $\Plie\in[0,1]$ be the sampling parameter and $d\in \{1,\dots,T-1\}$ the flooding parameter. Our local randomizer $\Randomizer_\Plie^d$, executed at each source node $s$, takes as input a message $(s,t)$ and returns a collection of $d+1$ messages to be sent by the source node. These messages, which constitute the output observable by an adversary, are generated as follows: 
first, with probability $1 - \Plie$, the source node keeps the true message to the true target $t$, otherwise it sends a dummy message to a target $t'$ chosen uniformly at random;\footnote{If it happens that $t'=t$, we obviously send to true message to maximize utility.} then, the source node creates $d$ additional dummy messages aimed at a set of $d$ targets chosen uniformly without replacement from $\Target$ (excluding the target of the first message).
From this local randomizer $\Randomizer_\Plie^d$, we define the global algorithm as $\Algo_{\Plie}^{d}(\mathcal{G})=\{\Randomizer_\Plie^d(s_1,t_1),\dots,\Randomizer_\Plie^d(s_S,t_S)\}$, which applies $\Randomizer_\Plie^d$ to each message in $\mathcal{G}$. For convenience, we denote the sampling-only variants by $\Randomizer_\Plie:=\Randomizer_\Plie^0$ and $\Algo_\Plie:=\Algo_\Plie^0$.

\subsection{Privacy Analysis}

We can show the following differential privacy guarantees for $\Algo_{\Plie}^{d}$. The proof can be found in Appendix~\ref{app:proof-priv-local}.

\begin{theorem}
\label{thm:privacy_local}
Let $\Plie\in[0,1]$ and $d\in \{1,\dots,T-2\}$.
The algorithm $\Algo_{\Plie}^{d}$ satisfies $\epsilon$-differential
privacy with 
$$\epsilon=\ln\left( \frac{(1-\Plie)T}{\Plie ( d + 1 ) }+1 \right).$$
For the case where $d=T-1$, $\Algo_{\Plie}^{d}$ is equivalent to a broadcast and thus $\epsilon=0$.
\end{theorem}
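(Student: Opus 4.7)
The plan is to exploit the fact that $\Algo_\Plie^d$ applies the local randomizer $\Randomizer_\Plie^d$ independently to each message, so two neighboring communication graphs $\mathcal{G}_1, \mathcal{G}_2$ that differ only in a single message $(s,t)$ vs.\ $(s,t')$ induce output distributions that factor identically except on the component corresponding to $s$. By parallel composition it therefore suffices to bound the worst-case pointwise privacy loss of $\Randomizer_\Plie^d(s,t)$ versus $\Randomizer_\Plie^d(s,t')$ on every possible observation.

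Since all messages emitted by a source look identical to the adversary and are sent together, I will treat the observable output of a single randomizer call as an unordered $(d+1)$-subset $O \subseteq \Target$ of destinations. For a fixed $O$, I will compute $P[\Randomizer_\Plie^d(s,t) = O]$ by conditioning on which element of $O$ is the ``first'' target $t_0$: the probability of picking a specific $t_0$ is $(1-\Plie)+\Plie/T$ when $t_0 = t$ and $\Plie/T$ otherwise, and conditional on $t_0$ the remaining $d$ entries of $O$ must coincide with the uniform $d$-subset drawn from $\Target\setminus\{t_0\}$, an event of probability $1/\binom{T-1}{d}$ that does not depend on $t$.

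Summing over the $d+1$ possible choices of $t_0 \in O$ produces a clean dichotomy. If $t \in O$, exactly one summand carries the bonus $1-\Plie$ while the remaining $d$ summands each contribute $\Plie/T$, yielding $P[\Randomizer_\Plie^d(s,t)=O] = \bigl[(1-\Plie) + (d+1)\Plie/T\bigr]/\binom{T-1}{d}$. If $t \notin O$, every summand contributes $\Plie/T$, yielding $(d+1)\Plie/T \cdot 1/\binom{T-1}{d}$. Since both quantities depend on $t$ only through the indicator $\mathbb{I}[t\in O]$, the worst-case ratio between the two neighboring input distributions is attained on outputs $O$ satisfying $t\in O$ and $t'\notin O$, giving
\[
e^\epsilon \;=\; \frac{(1-\Plie) + (d+1)\Plie/T}{(d+1)\Plie/T} \;=\; \frac{(1-\Plie)T}{\Plie(d+1)} + 1,
\]
which matches the claimed bound. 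The degenerate case $d = T-1$ forces $O = \Target$ deterministically, so the two distributions coincide and $\epsilon = 0$.

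The step I expect to be most delicate is justifying that the adversary observes only the \emph{set} of outgoing edges of $s$ rather than the labelled sequence $(t_0, t_1, \dots, t_d)$: if the position of $t_0$ in the batch were observable, the bonus $1-\Plie$ would no longer be shared across the $d+1$ choices of $t_0$, and the denominator would lose its factor $d+1$, degrading the bound to $\ln\bigl((1-\Plie)T/\Plie + 1\bigr)$. This must be discharged either by appealing to the fact that same-batch messages issued simultaneously by one source carry no meaningful order to the attacker, or by implicitly symmetrizing the dispatch order; either way the set-valued interpretation of the observable is where the flooding parameter buys privacy, so this is the conceptual crux of the argument.
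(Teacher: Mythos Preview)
Your proof is correct and follows essentially the same approach as the paper: reduce to a single application of $\Randomizer_\Plie^d$ via independence, then identify the worst-case output as one containing $t$ but not $t'$. The only difference is bookkeeping: you compute the probability of a \emph{specific} output set $O$ by conditioning on which element of $O$ was the first-chosen target $t_0$, whereas the paper computes the probability of the \emph{event} $\{t\in O,\,t'\notin O\}$ by conditioning on whether sampling occurred; by symmetry over the remaining targets these yield the same ratio, and your version is arguably more transparent. Your closing remark about set-valued versus ordered observables is well taken---the paper implicitly assumes the adversary cannot distinguish the position of $t_0$ among the $d+1$ outgoing messages (its four-case analysis only looks at membership of $t$ and $t'$ in $\Output$), and you are right that this is exactly where the factor $d+1$ in the denominator comes from.
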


\begin{figure}[t]
\centering
\begin{subfigure}{.4\textwidth}
\centering
    \includegraphics[width=.8\textwidth,keepaspectratio]{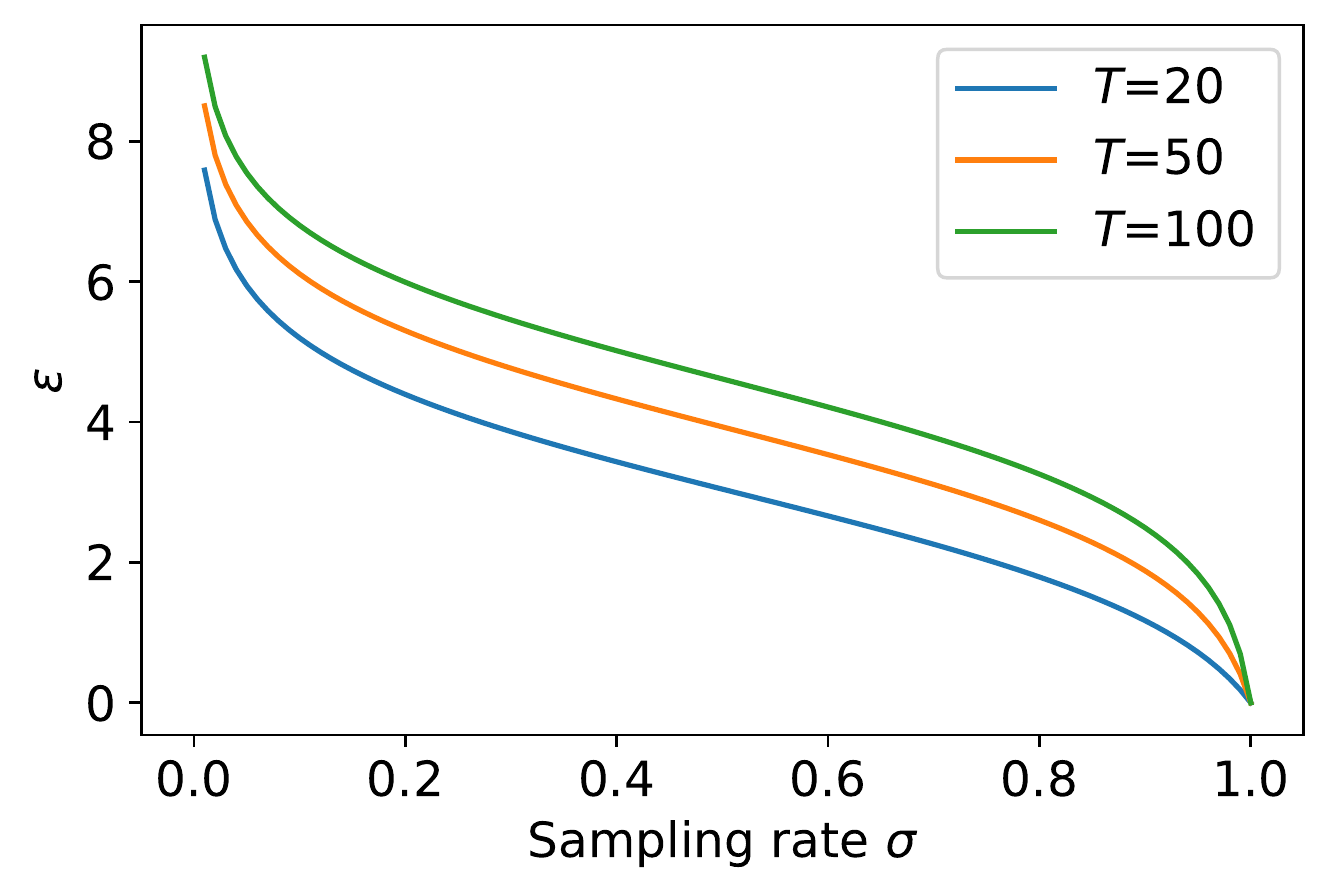}
    \caption{Privacy of $\Algo_{\Plie}$}
    \label{fig:Epsilon_Sampling}
\end{subfigure}
\hspace{.01\textwidth}
\begin{subfigure}{.4\textwidth}
\centering
    \includegraphics[width=.8\textwidth,keepaspectratio]{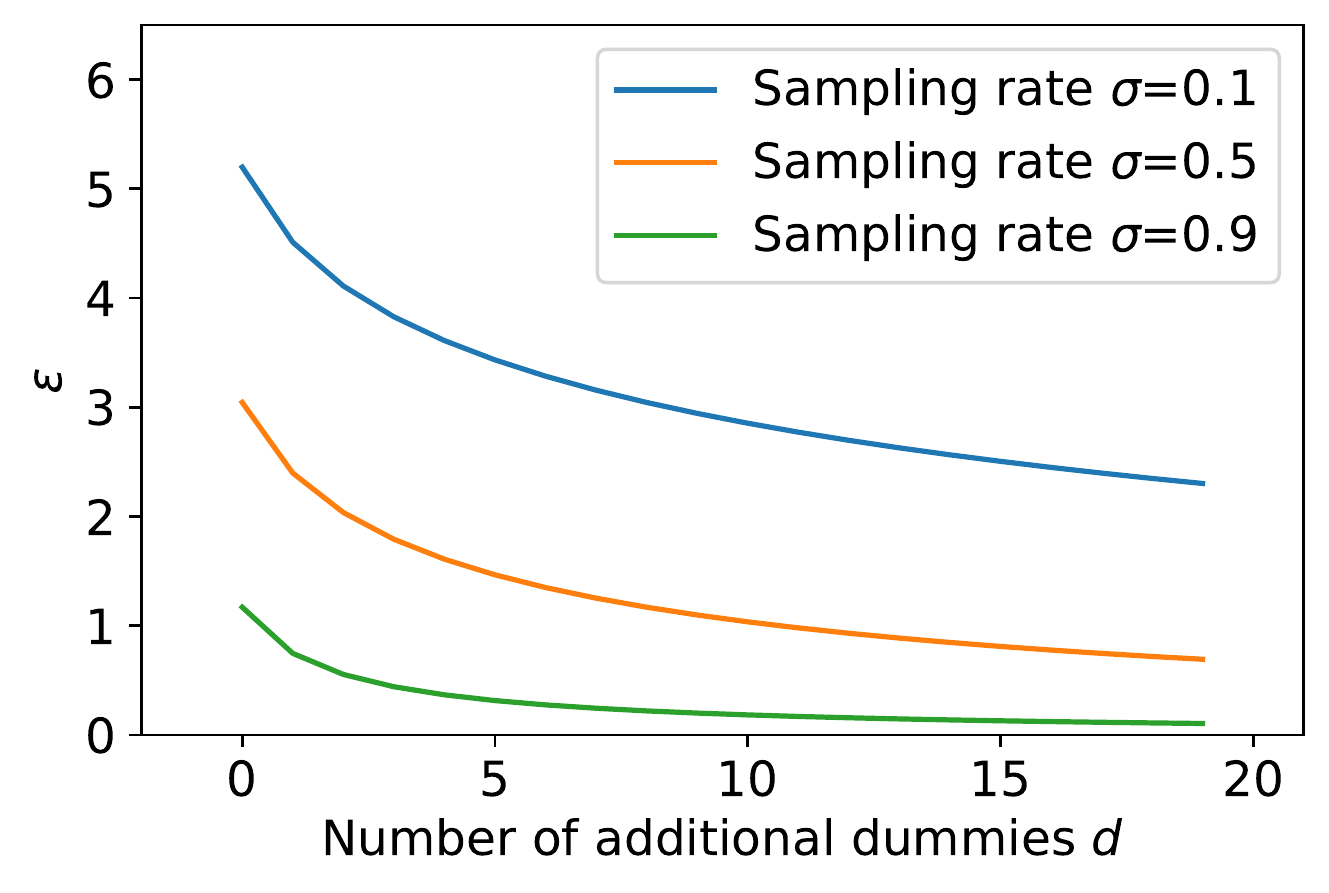}
    \caption{Privacy of $\Algo_{\Plie}^d$ ($T=20$)}
    \label{fig:with_dummies}
\end{subfigure}
\caption{Privacy of $\Algo_{\Plie}$ and $\Algo_{\Plie}^d$ varying $\Plie$ and $d$.}
\label{fig:dp1}
\vspace{-1em}
\end{figure}

To illustrate the influence of all parameters, we plot the values of $\epsilon$ as given in Theorem~\ref{thm:privacy_local} for different number of targets $T$, sampling parameter $\Plie$ and flooding parameter $d$.
Figure~\ref{fig:Epsilon_Sampling} shows the case of the sampling-only variant $\Algo_{\Plie}$ (no flooding). We see that achieving reasonable values of $\epsilon$ (typically, $\epsilon$ is recommended to be smaller than $\ln(3)$~\cite{dwork2008differential}) requires a large $\Plie$ (typically larger than $\Plie=0.9$), making $\Algo_{\Plie}$ quite impractical for use-cases that require low sampling (e.g., when recruiting additional consenting users to act as source nodes is very costly).
Flooding helps to reduce $\epsilon$ while keeping $\Plie$ fixed, as can be seen in Figure~\ref{fig:with_dummies}. 
Interestingly, there are diminishing returns: the bulk of the gains in privacy come from the first dummies. We also see that the bigger $\Plie$, the faster the decrease of $\epsilon$ with $d$.

\subsection{Performance Analysis}
\label{sec:perf_local}

We analyze the performance in terms of \emph{utility} and \emph{efficiency} metrics defined in Section~\ref{sec:metrics}.

\paragraph{Utility} In order to maintain the same utility (i.e., number of real contributions taken into account in the compute nodes) as the non-private algorithm with $S$ sources, the total number of source nodes must be $S_t=\frac{S}{(1-\Plie)}$, with also an impact on efficiency with $S_t-S$ additional users' consents.


\paragraph{Efficiency}
In $\Algo_{\Plie}^{d}$, $d+1$ messages are sent by each source to at most $d+1$ targets. The individual load  on computation (target) nodes in total number of secure communication channels to be initiated is hence bounded by $S_t \times (d+1)$ and the total volume of exchanged messages is $S_t \times (d+1) \times \mu$, with $\mu$ the size of a single message. Overall, $\Algo_{\Plie}^{d}$ introduces a factor $(d+1)$ overhead compared to a non-private execution. 

While effective to enforce high privacy while maintaining high utility, this baseline solution based on sampling and flooding thus comes at a significant cost in efficiency.

\section{Amplifying Privacy via Scramblers} 
\label{sec:Privacy-anal}

In the baseline algorithm proposed in the previous section, the privacy of each communication only comes from the sampling rate and flooding applied locally by the source node. In this section, we propose an approach where source nodes can benefit from the sampling applied at other source nodes as well as from shared dummy messages, thereby ``hiding in the crowd''. To this end, we introduce an additional type of node called \emph{scrambler} whose role is to collect a set of $n$ locally randomized messages (from $n$ source nodes), add $d$ extra dummy messages and shuffle the output before transmitting the messages to the target nodes.

After introducing our approach in Section~\ref{sec:scrambler}, we first state a pure $\epsilon$-DP guarantee in Section~\ref{sec:puredp}. Arguing that this result is quite conservative, in Section~\ref{sec:amplification} we prove $(\epsilon,\delta)$-DP guarantees which capture the desired ``hiding in the crowd'' effect, by extending recent results on amplification by shuffling \cite{Balle2019}. Finally, we briefly discuss the efficiency of the proposed approach in Section~\ref{sec:scrambler_perf}.

\subsection{Proposed Algorithm}
\label{sec:scrambler}

\begin{figure}[t]
\centering
\begin{subfigure}{.22\textwidth}
    \includegraphics[width=.7\textwidth,keepaspectratio]{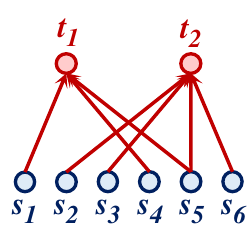}
    \caption{No scrambler (Section~\ref{sec:DP-data})}
    \label{fig:without_scramblers}
\end{subfigure}
\hspace{.01\textwidth}
\begin{subfigure}{.22\textwidth}
    \includegraphics[width=.7\textwidth,keepaspectratio]{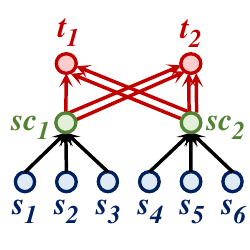}
    \caption{$2$ scramblers (Section~\ref{sec:Privacy-anal})}
    \label{fig:with_scramblers}
\end{subfigure}
\caption{Cluster with $2$ target, $6$ source and $2$ scrambler nodes.}
\label{fig:scramblers}
\vspace{-1em}
\end{figure}

One of the tools at our disposal to balance privacy, utility and efficiency is to add computation nodes. We propose to add a set of \emph{scrambler nodes}, which are assumed to have the same security properties as the other nodes and will be responsible for collecting messages from source nodes before transmitting them to target nodes. Recall that the content of messages is encrypted by source nodes and can only be decrypted by the target node. As relying on a single scrambler would introduce a single point of failure and require that the scrambler opens secure channels with each source node (which is impractical considering the load limitation constraints stated in Section~\ref{sec:problem}), we propose to add a set of $S/n$ \emph{scramblers} and assign each source node to one scrambler in an input-independent manner. The parameter $n$, assumed to divide $S$ for simplicity, thus corresponds to the number of source nodes assigned to each scrambler.
Figure~\ref{fig:scramblers} shows the difference between this new architecture and the one considered in Section~\ref{sec:DP-data}.

\begin{algorithm2e}[t]
  \DontPrintSemicolon
   \LinesNumbered
  \SetKwComment{Comment}{{//\ \scriptsize}}{}
  \newcommand{\comr}[1]{\Comment*[r]{#1}}
  \caption{Algorithm $\mathcal{S}_d$ executed at each scrambler}
  \label{alg:scrambler}
  \KwIn{Set of messages $\{(s_i,t'_i)\}_{i=1}^n$, flooding parameter $d$, list of potential targets $\Target$, boolean $B$ (optional: to cap the number of messages per target, default \texttt{False})}
  \KwOut{set of messages $\mathcal{O}$}
  \BlankLine
%
  $\Output = \{t'_i\}_{i=1}^n$\;
  \For(\texttt{~//\ add $d$ dummy messages}){$j\gets1$ \KwTo $d$}{
     \If{$B=$\texttt{True}}{
        $\Target \leftarrow \Target \setminus \{t : \Output.\texttt{count}(t) = n\}$ 
     }
     $t_j \leftarrow$ uniformly random target from $\Target$\;
     $\Output \leftarrow \Output \cup \{t_j\}$
  }
  $\Output$.\texttt{shuffle()} \Comment{random permutation}
  \Return{$\Output$}
\end{algorithm2e}

We now describe the algorithm we propose in this setting. Let $\Plie\in[0,1]$ be the sampling parameter and $d\in \mathbb{N}$ the flooding parameter. Given an input communication graph $\mathcal{G}=\{(s_1,t_1),\dots, (s_S,t_S)\}$, each message $(s_i,t_i)$ is first processed by its source node $s_i$ who applies the local randomizer $\Randomizer_\Plie(s_i,t_i)$ (i.e., with sampling rate $\Plie$ but no flooding) and sends the resulting message $(s_i,t_i')$ to the scrambler. Then, each scrambler collects the $n$ inputs from its assigned sources, adds $d$ dummy messages, shuffles the whole set of messages and sends them to the corresponding targets.
The scrambler algorithm is shown in Algorithm~\ref{alg:scrambler}. Note that the default behavior for creating dummy messages is to select targets uniformly at random with replacement, which we denote by $\mathcal{S}_d$. We will also consider a variant where the number of messages sent to each target is capped by $n$, which we denote by $\Sbar_d$. In any case, the scrambler can be implemented by a simple shuffling primitive, which is becoming standard in the design of private systems \cite{Prochlo,Cheu2019,amp_shuffling,Balle2019,esa2}.

Crucially, the communication between sources and scramblers is input-independent, therefore from the point of view of the adversary the output can be restricted to the communication between scramblers and targets.
Furthermore, since each scrambler operates over a distinct partition of $n$ source nodes, from the differential privacy point of view it will be sufficient to analyze the algorithm at the level of a single partition. \nicolas{dire explicitement que le résultat en trme de DP est le même pour un ensemble de scramblers donc pour le cluster de noeuds} We denote the partition-level algorithm by $\Algo_\Plie^{n,d}$, which can be written as a sequential composition of the local randomizer $\Randomizer_\Plie$ and the scrambler algorithm $\mathcal{S}_d$:
\begin{equation}
\label{eq:shuffle_alg_decompo}
\Algo_\Plie^{n,d}(\mathcal{G}) = \mathcal{S}_d\big(\Randomizer_\Plie(s_1,t_1),\dots,\Randomizer_\Plie(s_n,t_n)\big).
\end{equation}
Similarly, we denote by $\Abar_\Plie^{n,d}$ the variant based on $\Sbar_d$.

Intuitively, $\Algo_\Plie^{n,d}$ and $\Abar_\Plie^{n,d}$ should achieve a better privacy-utility-efficiency trade-off than the baseline approach of Section~\ref{sec:DP-data} because an adversary can only infer information from the communication pattern between the scrambler and the targets. In particular, (i) each dummy message added by the scrambler should improve privacy for all sources nodes in the partition, and (ii) local sampling at each source node should further help to hide each message destination. This is what we will examine in our privacy analysis.

\subsection{Privacy Analysis: Pure $\epsilon$-DP}
\label{sec:puredp}


Our first result quantifies the privacy guarantees provided by the algorithm in terms of (pure) $\epsilon$-DP. For dummies to provide some benefit in terms of $\epsilon$-DP, we need to consider the variant $\Abar_\Plie^{n,d}$ where the number of messages to each target is capped by $n$. We have the following result.

\begin{theorem}
Let $\Plie\in[0,1]$ and $d\in \{1,\dots,n-1\}$.
The algorithm $\Abar_\Plie^{n,d}$ satisfies $\epsilon$-differential
privacy with 
$$e^\epsilon=\frac{  \sum_{k=0}^{d}{\binom{d}{k} \binom{n-1}{k} (1-\Plie)^{k}{R_{lie}}^{n-k-1} \Big( 1-\Plie+ k \cdot  \frac{{R_{lie}}^{2}}{1-\Plie} \Big)  }}{\sum_{k=0}^{d}{\binom{d}{k} \binom{n-1}{k} (1-\Plie)^{k}{R_{lie}}^{n-k-1} \Big( R_{lie}+ k \cdot \frac{{R_{lie}}^{2}}{1-\Plie}  \Big) }} $$
where $R_{lie}=\tinylie$ and $\binom{n}{k}$ is the binomial coefficient.
\label{theorem-exact}
\end{theorem}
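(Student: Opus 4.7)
My plan is to reduce the DP analysis to a single scrambler's multiset output, decompose its probability over the distinguished source's contribution, and then isolate the worst-case histogram via a combinatorial enumeration.

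First I would appeal to decomposition~\eqref{eq:shuffle_alg_decompo}: since the assignment of sources to scramblers is input-independent and source-to-scrambler communications are concealed by secure channels, the adversary's only observation is the multiset $X$ of $n+d$ targets emitted by one scrambler (parallel composition handles the other scramblers, which operate on disjoint sources). The cap ensures that the per-target multiplicity in $X$ is at most $n$, and the uniform shuffle erases all order information. Bounding DP for $\Abar_\Plie^{n,d}$ therefore reduces to bounding $\max_x P[X=x\mid\mathcal{G}_1]/P[X=x\mid\mathcal{G}_2]$ over all neighboring pairs $\mathcal{G}_1,\mathcal{G}_2$, which differ only in one source $s$'s true target (say $t_1$ versus $t_2$).

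Next I would marginalize the multiset probability over source $s$'s output target:
\begin{equation*}
P[X=x\mid\mathcal{G}_j] = \sum_{t\in\Target} P_s(t\mid\mathcal{G}_j)\,Q(x\setminus\{t\}),
\end{equation*}
where $P_s(t\mid\mathcal{G}_j)$ equals $1-\Plie$ at the true target of $s$ under $\mathcal{G}_j$ and $R_{lie}=\Plie/(T-1)$ otherwise, and where $Q(\cdot)$---the multiset distribution of the other $n-1$ local randomizers together with the $d$ capped dummies---does not depend on $j$. In the ratio $P[X=x\mid\mathcal{G}_1]/P[X=x\mid\mathcal{G}_2]$, every summand with $t\notin\{t_1,t_2\}$ has identical coefficient $R_{lie}$ in numerator and denominator and merely drags the ratio toward $1$; hence the worst case lies on multisets $x$ supported in $\{t_1,t_2\}$.

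Restricting to such bipartite outputs $x=\{t_1^a,t_2^b\}$ with $a+b=n+d$ and $a,b\le n$, the ratio simplifies to $\bigl((1-\Plie)Q_1+R_{lie}Q_2\bigr)/\bigl(R_{lie}Q_1+(1-\Plie)Q_2\bigr)$ with $Q_i=Q(x\setminus\{t_i\})$, which is monotone increasing in $Q_1/Q_2$. An exchange argument should then identify the extremum at $a=n$, $b=d$ (saturating the cap at $t_1$) with the $n-1$ remaining sources all having true target $t_1$. Computing $Q_1$ and $Q_2$ in this configuration reduces to enumerating, over $k\in\{0,\dots,d\}$, the number of dummies landing at $t_1$ (and, by conservation, the matching number of truthful sources among the $n-1$ others also hitting $t_1$): these choices produce the $\binom{d}{k}\binom{n-1}{k}$ prefactor, the truth events weigh $(1-\Plie)^k$, the remaining $n-k-1$ lies to $t_2$ weigh $R_{lie}^{n-k-1}$, and the leading factors $1-\Plie$ or $R_{lie}$ plus the correction $kR_{lie}^2/(1-\Plie)$ in the bracketed terms encode source $s$'s own contribution under each neighboring graph, recovering exactly the stated expression for $e^\epsilon$.

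The main obstacle will be the worst-case characterisation underlying the previous paragraph: one must justify (i) that restricting the support of $x$ to $\{t_1,t_2\}$ really does maximise the ratio (ruling out outputs with residual mass on other targets), (ii) that the $n-1$ non-distinguished sources' true targets all concentrate on $t_1$ via a swap argument, and (iii) that the cap's correlation between dummy placements still factorises cleanly into $\binom{d}{k}$ uniform choices. The cap activates precisely when the $n$-th copy of $t_1$ has been deposited; verifying that the dummy distribution on the remaining $T-1$ targets is unaffected and that the combinatorial rearrangement collapses to the claimed double sum is the most intricate step of the proof.
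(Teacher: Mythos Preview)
Your reduction to a single scrambler and the marginalization over the distinguished source's output target are sound and match the paper's overall structure. However, your identification of the worst case is incorrect in a crucial way, and this error propagates to both the claimed worst-case input and the final combinatorial count.

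Your argument that the worst-case output $x$ must be supported on $\{t_1,t_2\}$ is flawed. You correctly observe that summands with $t\notin\{t_1,t_2\}$ carry the same coefficient $R_{lie}$ in numerator and denominator and hence pull the ratio toward $1$. But the $t_2$ summand does \emph{not} behave symmetrically: it carries coefficient $R_{lie}$ in the numerator but $1-\sigma$ in the \emph{denominator} (since under $\mathcal{G}_2$ the distinguished source truthfully lands at $t_2$). Keeping $t_2$ in the support of $x$ therefore inflates the denominator more than the numerator and drags the ratio \emph{down}. The actual worst case has $x$ with \emph{zero} copies of $t_2$---forcing the distinguished source to lie under $\mathcal{G}_2$---with the $d$ overflow messages (which the cap prevents from landing at $t_1$) going to a \emph{third} target. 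This is precisely what the paper establishes: in its notation the extremal output is $(\alpha,\beta,\gamma)=(n,d,0)$, i.e., $n$ at $t_1$, $d$ spread over $\mathcal{T}\setminus\{t_1,t_2\}$, and $0$ at $t_2$.

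The worst-case input is consequently also different from what you claim. The paper shows (via its factorization of the ratio into $\chi_1,\chi_2,\chi_3$ terms) that the extremum occurs when the $n-1$ non-distinguished sources all have true target in $\mathcal{T}\setminus\{t_1,t_2\}$, i.e., $(x,y,z)=(0,n-1,0)$ in the paper's parameterization---not when they all target $t_1$ as you propose. Intuitively, putting the other sources at a third target $t''$ isolates the distinguished source's truth/lie as the sole mechanism for populating $t_1$, which is exactly what separates $\mathcal{G}_1$ from $\mathcal{G}_2$. Once the correct worst-case output and input are in place, the $\binom{d}{k}\binom{n-1}{k}(1-\sigma)^k R_{lie}^{n-k-1}$ structure arises from a different enumeration than the one you sketched (it counts how many of the $d$ dummies and how many of the $n-1$ sources at $t''$ end up at $t_1$ versus $t''$), and the bracketed terms $1-\sigma+kR_{lie}^2/(1-\sigma)$ and $R_{lie}+kR_{lie}^2/(1-\sigma)$ come from the distinguished source's contribution together with the $\chi_2$ correction, not from the mechanism you describe.
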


\begin{proof}[Sketch of proof]
The main challenges are to deal with the combinatorial number of possible inputs and outputs, and the fact that each dummy message depends on the input messages as well as previously drawn dummies (due to the cap on the maximum number of messages per target). Our proof is based on factorizing the probability distribution of outputs in a way that allows us to identify the combination of neighboring communication graphs $\mathcal{G},\mathcal{G}'$ and output $\mathcal{O}$ which produces the worst-case ratio $P[\Algo_\Plie^{n,d}(\mathcal{G})=\Output]/P[\Algo_\Plie^{n,d}(\mathcal{G}') = \Output]$. We can then compute the exact value of this ratio, which gives $\epsilon$.
\inPreprint{The detailed proof can be found in Appendix~\ref{app:Appendix1}.}{}
\end{proof}

\begin{figure*}[!t]
\centering
\begin{subfigure}{.32\textwidth}
    \includegraphics[width=\textwidth,keepaspectratio]{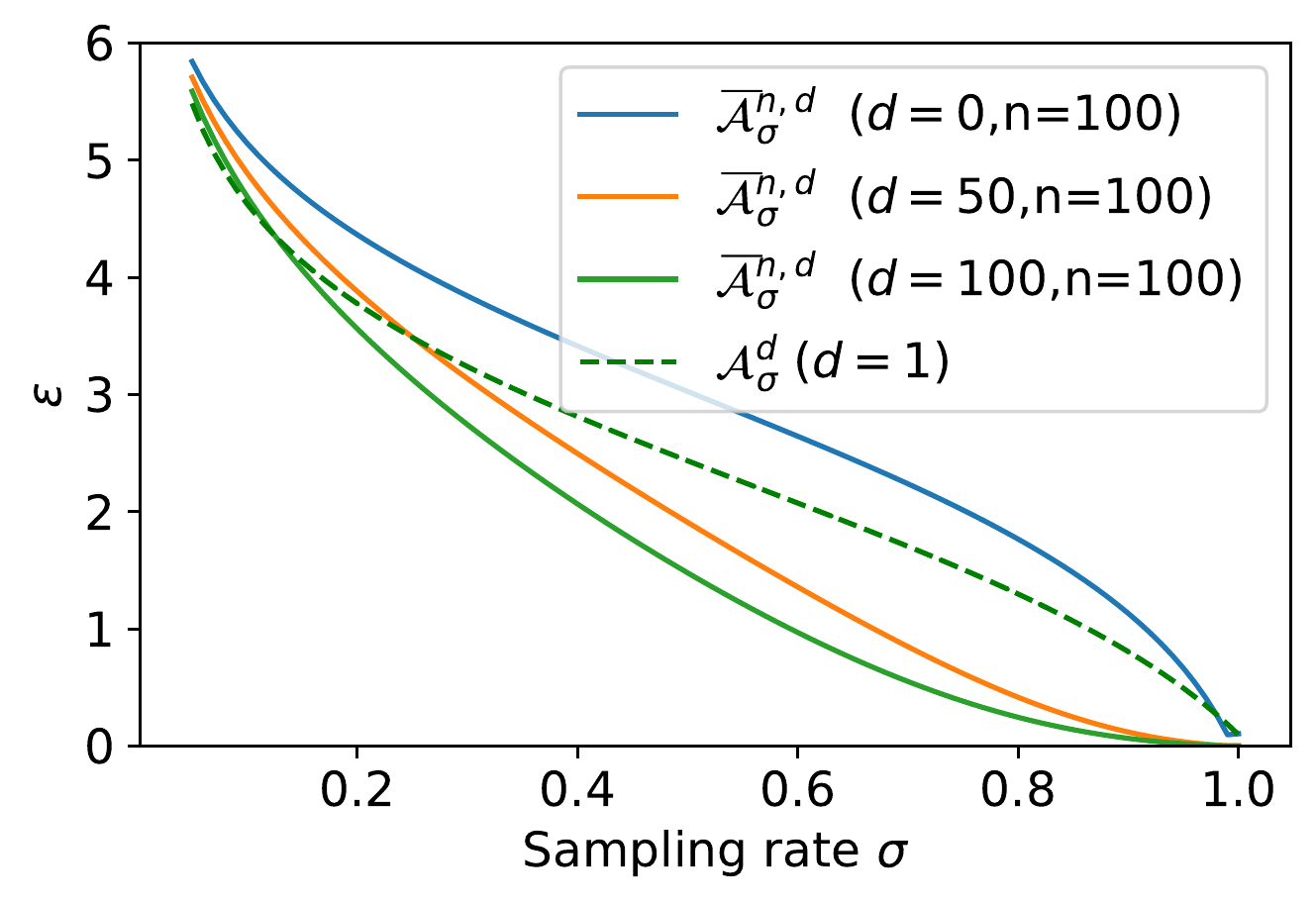}
    \caption{Varying sampling rate $\Plie$ ($n=100$)}
   \label{fig:nbound-plie}
\end{subfigure}
\begin{subfigure}{.32\textwidth}
    \includegraphics[width=\textwidth,keepaspectratio]{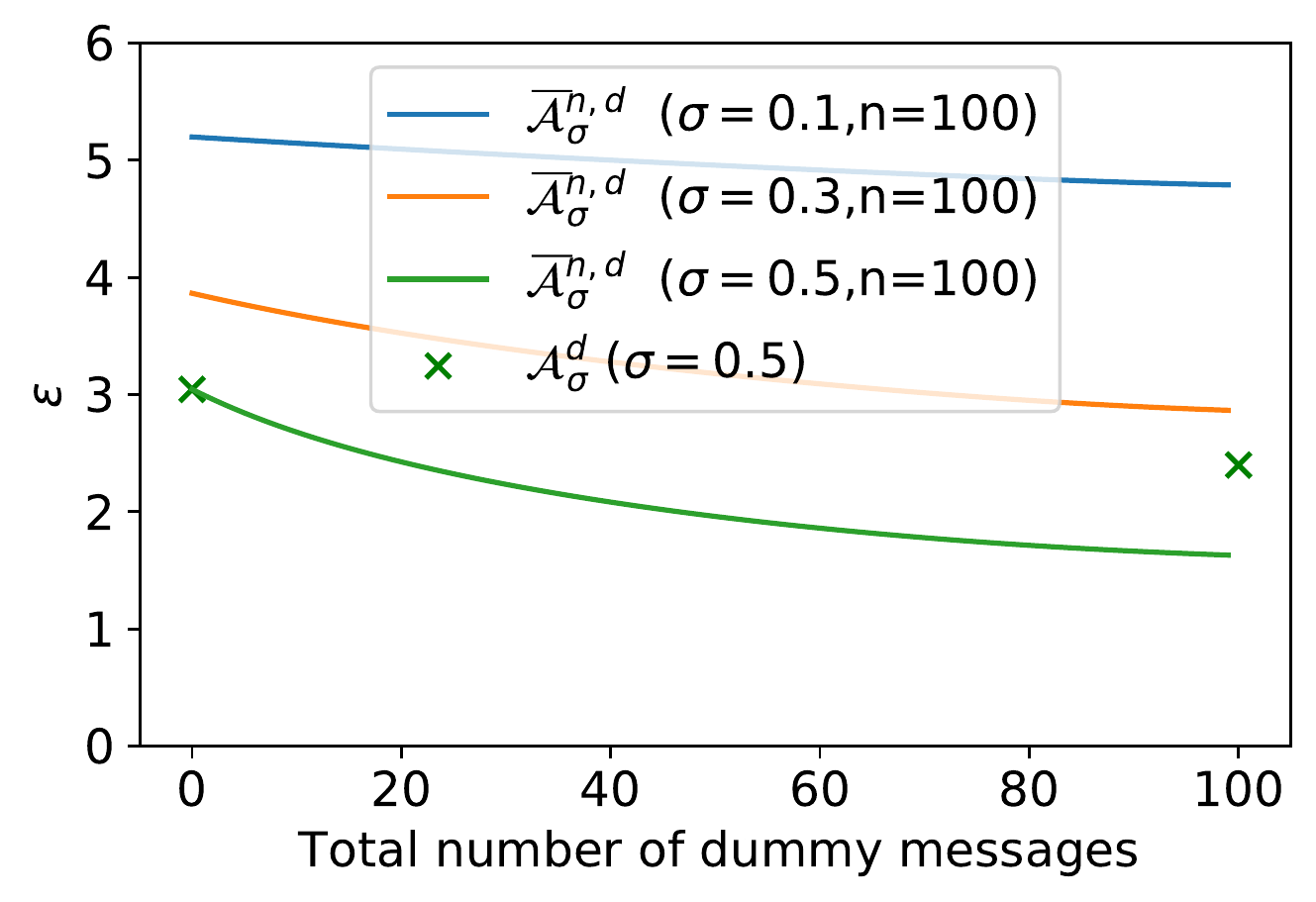}
    \caption{Varying number $d$ of dummies ($n=100$)}\label{fig:nbound-d}
\end{subfigure}
\begin{subfigure}{.32\textwidth}
    \includegraphics[width=\textwidth,keepaspectratio]{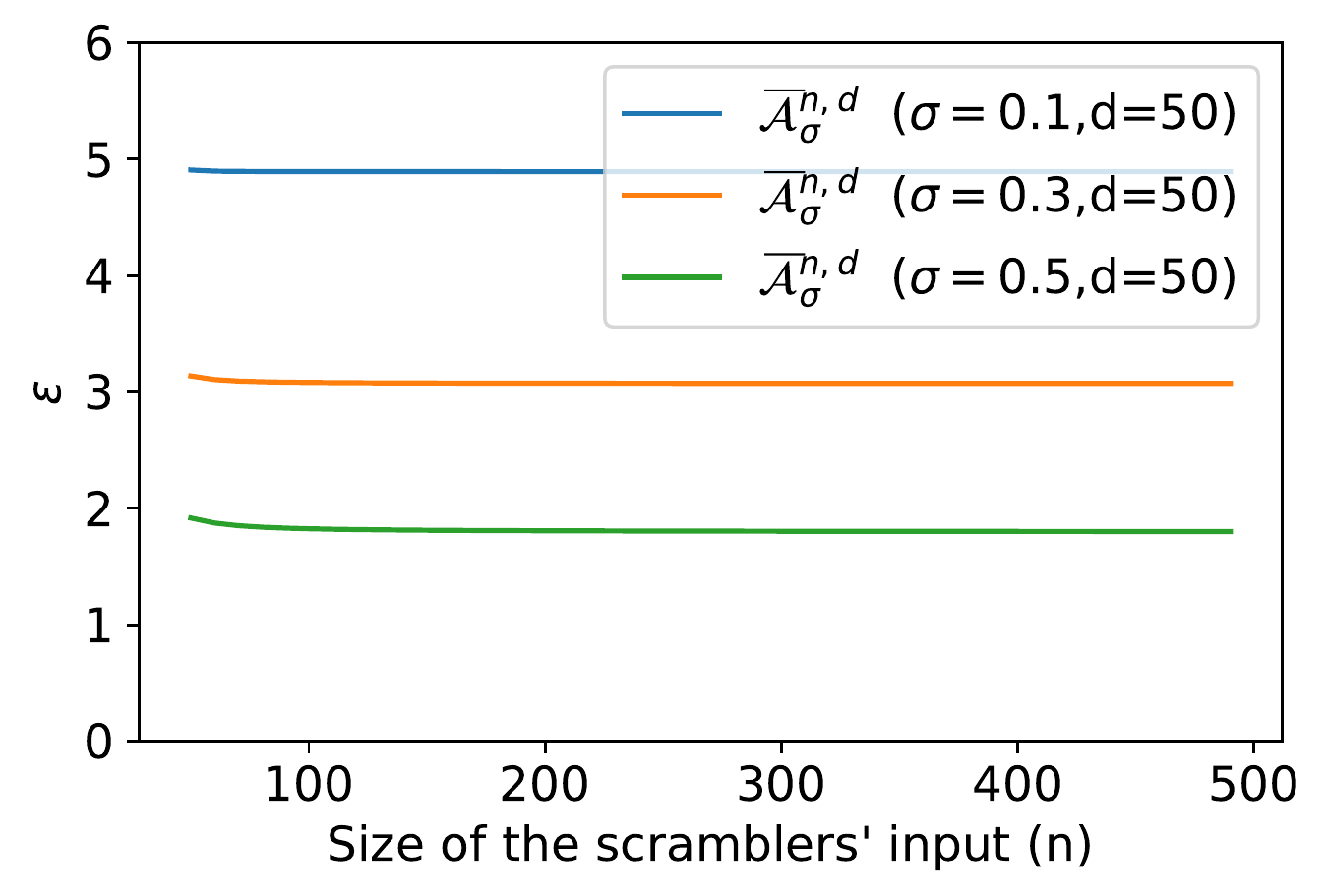}
    \caption{Varying number $n$ of messages ($d=50$)}
    \label{fig:nbound-n}
\end{subfigure}
\caption{Impact of the parameters $\sigma$, $d$ and $n$ on the privacy of $\protect\Abar_\Plie^{n,d}$, measured by $\epsilon$, for $T=20$ targets. For comparison purposes, we also plot the privacy of $\Algo_{\Plie}^{d}$ (the local algorithm without scrambler), which adds $nd$ dummy messages in total.}
\label{fig:nbound}
\vspace{-1em}
\end{figure*}
    
As the formula in Theorem~\ref{theorem-exact} is difficult to interpret, we plot the value of $\epsilon$ when varying the parameters $\Plie$, $d$ and $n$ in Figure~\ref{fig:nbound}.
\Cref{fig:nbound-d,fig:nbound-plie} confirm that the privacy guarantees provided by $\Abar_\Plie^{n,d}$ increase with the local sampling rate $\Plie$ and the number of dummies $d$ added by the scrambler. More interestingly, we see that $\Abar_\Plie^{n,d}$ provides stronger privacy than the local algorithm $\Algo_{\Plie}^{d}$ (without scrambler) when compared at equal sampling rate $\sigma$ and total number of dummies. Equivalently, $\Abar_\Plie^{n,d}$ can match the privacy of $\Algo_{\Plie}^{d}$ with fewer dummy messages, i.e., with better efficiency.

\inPreprint{
\begin{figure}[!t]
\centering
\includegraphics[width=.4\textwidth,keepaspectratio]{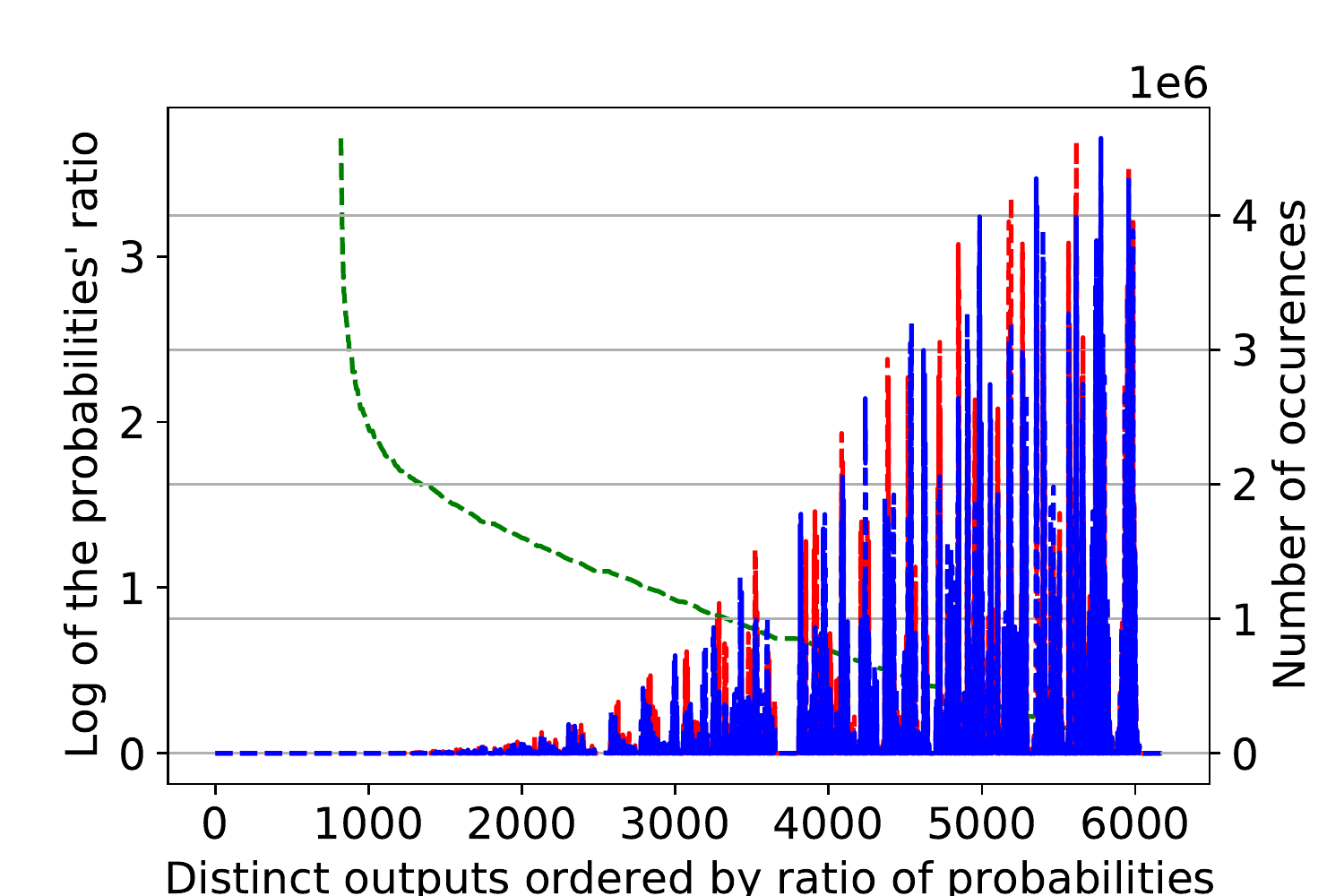}
\caption{Number of occurrences and corresponding log-ratio of probabilities (in decreasing order) for the distinct outputs obtained across $7\times 10^8$ runs of $\Algo_\Plie^{n,d}$ ($T=4$, $\Plie=0.2$, $d=20$ and $n=20$). See main text and Appendix~\ref{app:simu} for details.}
\label{fig:exp}
\end{figure}
}{}

However, \Cref{fig:nbound-n} shows that the number of messages $n$ given as input to the scrambler does not have a significant impact on the privacy guarantee. This is due to the fact that pure $\epsilon$-DP is governed by the ratio of probabilities of the worst-case output, even if the probability of that output actually occurring is extremely small.
\inPreprint{}{These extremely improbable bad events can be accounted for by the parameter $\delta$ of $(\epsilon, \delta)$-differential privacy.}
\inPreprint{
To illustrate the fact that $\epsilon$-DP is quite restrictive and gives a somewhat pessimistic view of the privacy guarantees provided by our algorithms, we performed a numerical simulation on a small problem instance for a large number of random runs (see Appendix~\ref{app:simu} for details on this simulation).
Figure~\ref{fig:exp} shows how many times each output occurred across runs of algorithm $\Algo_\Plie^{n,d}$ on two neighboring communication graphs (the red and blue histograms) and gives the estimated log-ratio of probabilities for each of these outputs (the green dots). The results clearly show the largest log-ratios (i.e., the ones that drive up the value of $\epsilon$) correspond to very rare outputs. This was observed consistently across all inputs we tried.

Our simulation results suggest that Theorem~\ref{theorem-exact} may not fully reflect the privacy benefits of our algorithms. 
}{}
In the next section, we prove $(\epsilon,\delta)$-differential privacy results which give a more faithful account of the privacy guarantees of $\Algo_\Plie^{n,d}$.


\subsection{Privacy Analysis: $(\epsilon,\delta)$-DP via Amplification by Shuffling}
\label{sec:amplification}


The goal of this section is to prove $(\epsilon,\delta)$-DP guarantees for our algorithm $\Algo_\Plie^{n,d}$. Obtaining such guarantees is more challenging than proving $\epsilon$-DP, as the latter boils down to identifying and computing (or upper bounding) the worst-case ratio of probabilities over all possible outputs. In contrast, establishing $(\epsilon,\delta)$-DP requires to characterize the distribution of outputs so that its ``long tails'' 
(where the desired $\epsilon$-DP guarantees may not hold) can be accounted for in $\delta$.

Recall that $\Algo_\Plie^{n,d}$ can be seen as the composition of two algorithms: a local randomizer $\Randomizer_\Plie$ and a scrambler $\mathcal{S}_d$ (see Eq.~\ref{eq:shuffle_alg_decompo}).
To prove $(\epsilon,\delta)$-DP guarantees, we leverage and extend techniques from the recent literature on privacy amplification by shuffling \cite{Cheu2019,amp_shuffling,Balle2019}. Privacy amplification by shuffling allows to prove differential privacy guarantees for algorithms of the form $\Algo=\mathcal{S}_0\circ\Randomizer^n$, where $\Randomizer$ is any differentially private local randomizer and $\mathcal{S}_0$ simply returns a shuffle of its inputs. Privacy amplification by shuffling was designed to provide privacy for the \emph{content} of messages. Here, we use this idea in the novel context of providing privacy for the \emph{communication patterns}, and also extend the proof of \cite{Balle2019} to account for the additional privacy brought by dummy messages added by our scrambler $\mathcal{S}_d$.


A key quantity in our analysis is the so-called \emph{privacy amplification random variable} introduced by \cite{Balle2019}. In our context, given $\epsilon>0$, two messages $(s,t)$ and $(s,t')$ with $t\neq t'$ and a random variable $t''\sim\text{Uniform}(\{1,\dots,T\})$ uniformly distributed over the set of targets, the privacy amplification random variable can be written as:
\begin{align}
    L_\epsilon^{t,t'} 
    &= \sigma(1-e^\epsilon) + (1-\sigma)T(\I[t''=t] - e^\epsilon \I[t''=t']),\label{eq:privacy_amp_var_main}
\end{align}
where $\I$ is the indicator function.
$L_\epsilon^{t,t'}$ is related to the difference in the probabilities that $\Randomizer_\Plie$ outputs $(s,t'')$ when applied to two different inputs $(s,t)$ and $(s,t')$. It will quantify the contribution of each input-independent random message present in the final output (be it obtained from sampling by source nodes or from dummy messages added by the scrambler) to the $(\epsilon,\delta)$-differential privacy guarantees of $\Algo_\Plie^{n,d}$.

\paragraph{Analytical bound} We are now ready to state the main result of this section: an analytical $(\epsilon,\delta)$-DP guarantee for $\Algo_\Plie^{n,d}$ (the proof can be found in Appendix~\ref{app:shuffling_proof}).
\begin{theorem}
\label{thm:main_amplification}
Let $d\in \mathbb{N}$. If $\Plie>0$, then the algorithm $\Algo_\Plie^{n,d}$ satisfies $(\epsilon,\delta)$-differential privacy for any $\epsilon>0$ with
$$\delta \geq \frac{1}{\Plie n}\sum_{m=1}^{n} \frac{m}{m+d}\binom{n}{m}\Plie^{m}(1-\Plie)^{n-m} \frac{b^2}{4a}e^{-\frac{2(m+d)a^2}{b^2}},$$
where $a=1-e^\epsilon$ and $b=(1-\Plie)T(1+e^\epsilon)-2\Plie(1-e^\epsilon)$.

If $\Plie=0$, then $\Algo_\Plie^{n,d}$ satisfies $(\epsilon,\delta)$-DP for any $\epsilon>0$ with
$$\delta \geq \frac{1}{d+1} \frac{b^2}{4a}e^{-\frac{2(d+1)a^2}{b^2}}.$$
\end{theorem}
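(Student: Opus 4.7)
The plan is to extend the privacy amplification by shuffling argument of \cite{Balle2019} to accommodate the extra $d$ uniform dummies added by the scrambler $\mathcal{S}_d$ on top of the $n$ outputs of the local randomizers. First I would fix two neighboring communication graphs $\mathcal{G},\mathcal{G}'$ differing only in source $1$'s target, say $t\neq t'$. Since source-to-scrambler traffic is input-independent and the scrambler output is a uniform permutation of $n+d$ targets, the adversary's observation reduces to a random-ordered sequence of $n+d$ targets, on which we must bound the privacy loss. I then decompose the randomness of $\Algo_\Plie^{n,d}$ via the blanketing indicators $B_i\in\{0,1\}$ for $i=1,\dots,n$, where $B_i=1$ (with probability $\Plie$) means the local randomizer at source $i$ replaced its true target by a uniformly random one. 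Conditional on $B_1=1$ the output distribution is independent of source $1$'s input, which explains the factor $1/(\Plie n)$ in the statement as a normalization by the probability that source $1$'s true message actually appears in the scrambler output. On the complementary event $B_1=0$, setting $K=\sum_{i\geq 2}B_i\sim\bin(n-1,\Plie)$, the shuffled output is a uniform interleaving of source $1$'s true target, the $n-1-K$ untouched true targets of the other sources, and $K+d$ messages drawn i.i.d.\ uniformly from $\Target$ (the $K$ blanket samples plus the $d$ scrambler dummies).

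The core of the argument mimics the factorization of \cite{Balle2019}: the untouched true-target factors cancel between numerator and denominator of $P[\Algo_\Plie^{n,d}(\mathcal{G})=\Output]/P[\Algo_\Plie^{n,d}(\mathcal{G}')=\Output]$, so only source $1$'s message and the $K+d$ uniform positions drive the privacy loss. After linearizing the log-ratio one obtains a sum of independent copies of the amplification random variable~(\ref{eq:privacy_amp_var_main}), whose mean is $\mathbb{E}[L_\epsilon^{t,t'}]=a=1-e^\epsilon<0$ and whose range is bounded by $b=(1-\Plie)T(1+e^\epsilon)-2\Plie(1-e^\epsilon)$. The key observation that makes the extension to $\mathcal{S}_d$ painless is that dummies and blanket samples are i.i.d.\ uniform on $\Target$ and are therefore interchangeable in this sum.

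Hoeffding's inequality applied to the $K+d+1$ effective cover terms (the extra $+1$ accounting for the random position of source $1$'s own message in the shuffle) yields
\begin{equation*}
  P\bigg[\sum_{j=1}^{K+d+1} L_\epsilon^{t,t'}(j)\geq 0\bigg]\;\leq\;\exp\bigg(-\frac{2(K+d+1)a^2}{b^2}\bigg).
\end{equation*}
Plugging this into the standard identity $\delta=\mathbb{E}_{\Output}[(f(\Output)-e^\epsilon)^+]$, where $f$ denotes the likelihood ratio of the two output distributions, and integrating the resulting Gaussian-type tail produces the prefactor $b^2/(4a)$ and the denominator $K+d+1$. Averaging over $K\sim\bin(n-1,\Plie)$ and rewriting the binomial weights via $\binom{n-1}{m-1}\Plie^{m-1}(1-\Plie)^{n-m}=\frac{m}{\Plie n}\binom{n}{m}\Plie^{m}(1-\Plie)^{n-m}$ with the substitution $m=K+1$ gives the bound for $\Plie>0$; the $\Plie=0$ case is the degenerate specialization $K\equiv 0$, leaving exactly $d+1$ cover terms.

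The main technical obstacle will be this factorization step: while the dummies are uniform like the blanket samples, they arrive in a \emph{deterministic} number $d$ whereas the blanket samples have random cardinality $K$. One must verify carefully that, once conditioned on $(B_1,K)$, the two kinds of uniform messages can indeed be treated symmetrically in the amplification machinery of \cite{Balle2019}, that the range $b$ of $L_\epsilon^{t,t'}$ is not inflated by the presence of the extra dummies, and that the combinatorial factors produced by integrating the Hoeffding tail line up exactly with those announced in the statement.
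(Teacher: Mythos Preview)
Your overall architecture---blanket decomposition, removal of the untouched true targets via post-processing, reduction to a sum of i.i.d.\ amplification variables, and Hoeffding---is exactly the paper's. The gap is in how you treat source~$1$'s own blanketing indicator.

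The paper (following \cite{Balle2019}) conditions only on $B_2,\dots,B_n$ and \emph{never} on $B_1$. The differing source's output is kept as the full mixture $Y_1\sim\mu_t=(1-\Plie)\delta_t+\Plie\,\omega$, and it is this mixture structure that, after the shuffle-factorization, produces the variable $L_\epsilon^{t,t'}$ of Eq.~(\ref{eq:privacy_amp_var_main}) with its $\Plie$-dependent range $b=(1-\Plie)T(1+e^\epsilon)-2\Plie(1-e^\epsilon)$. Concretely, with $m$ of the other $n-1$ sources blanketing, the hockey-stick divergence of the shuffled multiset $\{Y_1\}\cup\mathcal{W}_m\cup\mathcal{Z}_d$ against its neighbor equals $\tfrac{1}{m+d+1}\mathbb{E}\big[\sum_{i=1}^{m+d+1}L_i\big]_+$, and averaging over $m\sim\bin(n-1,\Plie)$ followed by the binomial identity you wrote gives the prefactor $1/(\Plie n)$. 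That factor is a pure re-indexing artifact; it has nothing to do with $P[B_1=1]$.

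If instead you condition on $B_1=0$ as you propose, source~$1$'s contribution becomes the point mass $\delta_t$, and the same shuffle computation yields $K+d+1$ i.i.d.\ copies of $\tilde L=T(\I[W=t]-e^\epsilon\I[W=t'])$, whose range is $T(1+e^\epsilon)\neq b$. You would also pick up an extra factor $(1-\Plie)$ from $P[B_1=0]$. The resulting bound is valid but is \emph{not} the one stated in the theorem: you would get $\tfrac{1-\Plie}{\Plie n}\sum_m\cdots\tfrac{\tilde b^{2}}{4a}e^{-2(m+d)a^{2}/\tilde b^{2}}$ with $\tilde b=T(1+e^\epsilon)$. So your proposal is internally inconsistent: you condition on $B_1=0$ but then invoke the variable of Eq.~(\ref{eq:privacy_amp_var_main}) and its range $b$, which only arise when $B_1$ is left unconditioned. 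Dropping the conditioning on $B_1$ and keeping $Y_1\sim\mu_t$ fixes the argument and recovers the paper's proof verbatim.
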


\begin{remark}[Generality of our analysis]
\label{rem:general}
Beyond the particular case of the local randomizer $\Randomizer_\Plie$ we use in this work, our analysis readily applies to any other differentially private local randomizer $\Randomizer$. The only condition is that dummy messages drawn by $\mathcal{S}_d$ must follow a specific distribution which depends on the choice of $\Randomizer$, see Appendix~\ref{app:shuffling_proof} for details.
\end{remark}

Theorem~\ref{thm:main_amplification} highlights the trade-off between $\epsilon$ and $\delta$. Given a value for $\epsilon$, the formulas directly give the lowest admissible value of $\delta$. Conversely, we can fix $\delta$ and numerically search for the lowest value of $\epsilon$ for which the inequality is satisfied. As $\delta$ can be interpreted as the probability that the privacy loss exceeds $\epsilon$, it should be kept to a small value.
A general guideline for $\delta$ is that it must be smaller than $1/n$ to provide meaningful guarantees \cite{Dwork2014a}.
Crucially, Theorem~\ref{thm:main_amplification} shows that the privacy guarantees improve with the number $n$ of messages, as each original message will become input-independent with probability $\sigma$. Dummy messages further contribute by guaranteeing a minimal number $d$ of input-independent messages in the output. As shown by the second inequality, we can obtain $(\epsilon,\delta)$-DP guarantees even if $\sigma=0$, i.e., without affecting the utility but only the efficiency.


For the sake of simplicity, Theorem~\ref{thm:main_amplification} relies on Hoeffding's inequality to upper bound the sum of privacy amplification random variables. Numerically tighter (albeit more complex) bounds can be obtained by applying Bennett's inequality (which leverages the variance of $L_\epsilon^{(t,t'}$), see Appendix~\ref{app:shuffling_proof} for details.

\begin{figure*}[!t]
\centering
\begin{subfigure}{.32\textwidth}
    \includegraphics[width=\textwidth,keepaspectratio]{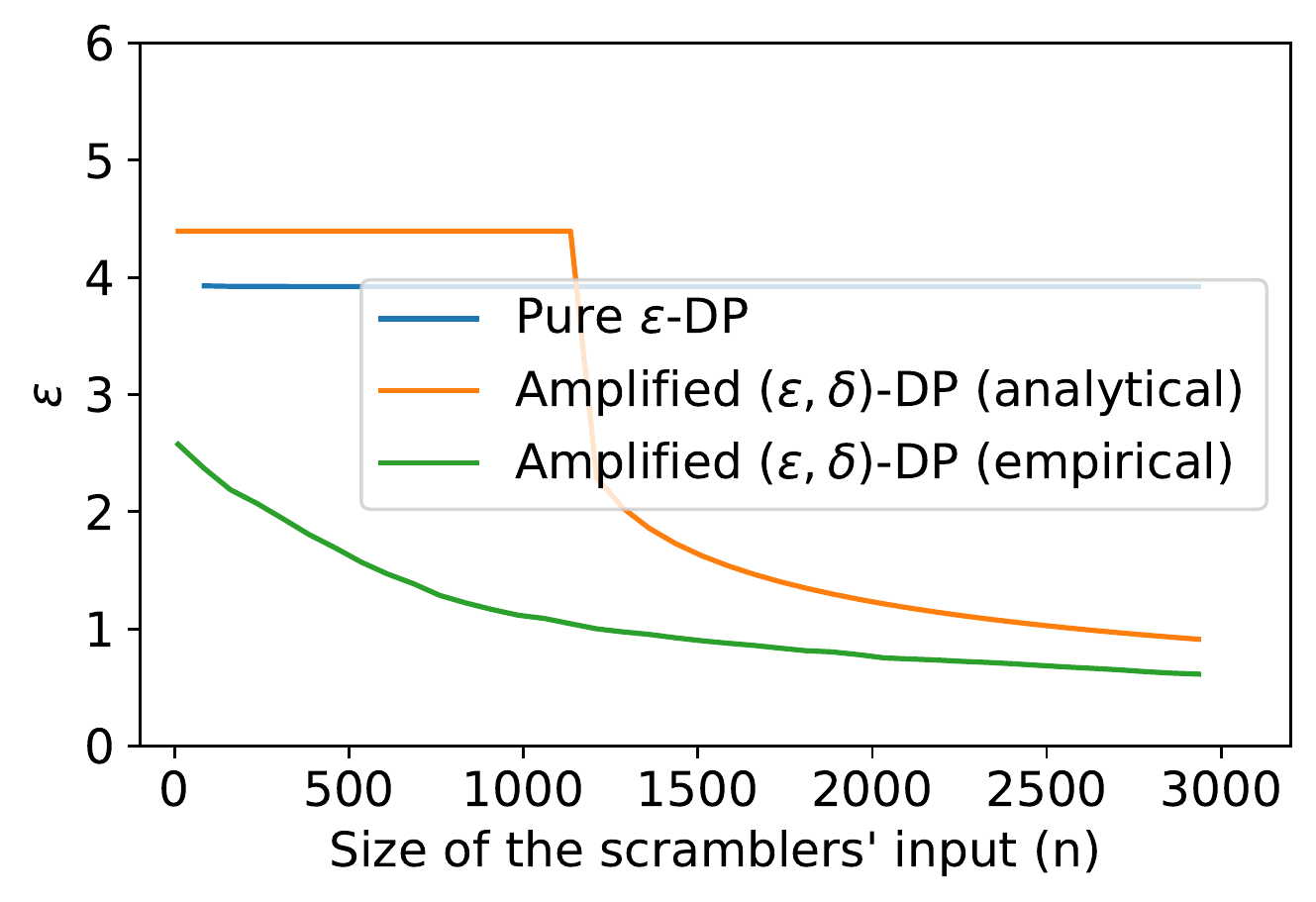}
    \caption{Impact of $n$ ($d=50$, $\sigma=0.2$)}\label{fig:innboundComparison-n}
\end{subfigure}
\begin{subfigure}{.32\textwidth}
    \includegraphics[width=\textwidth,keepaspectratio]{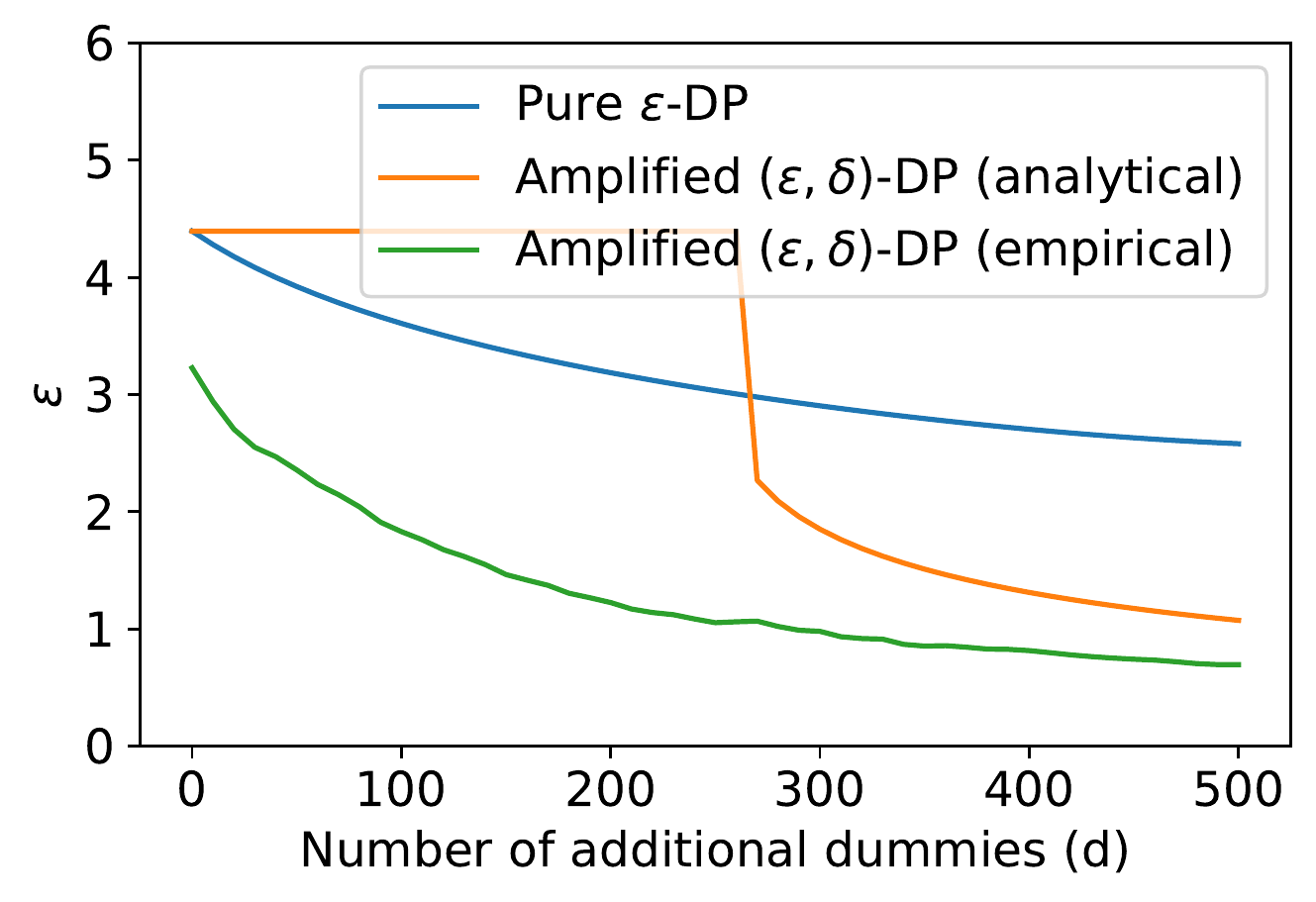}
    \caption{Impact of $d$ ($n=100$, $\sigma=0.2$)}\label{fig:innboundComparison-d}
\end{subfigure}
\begin{subfigure}{.32\textwidth}
    \includegraphics[width=\textwidth,keepaspectratio]{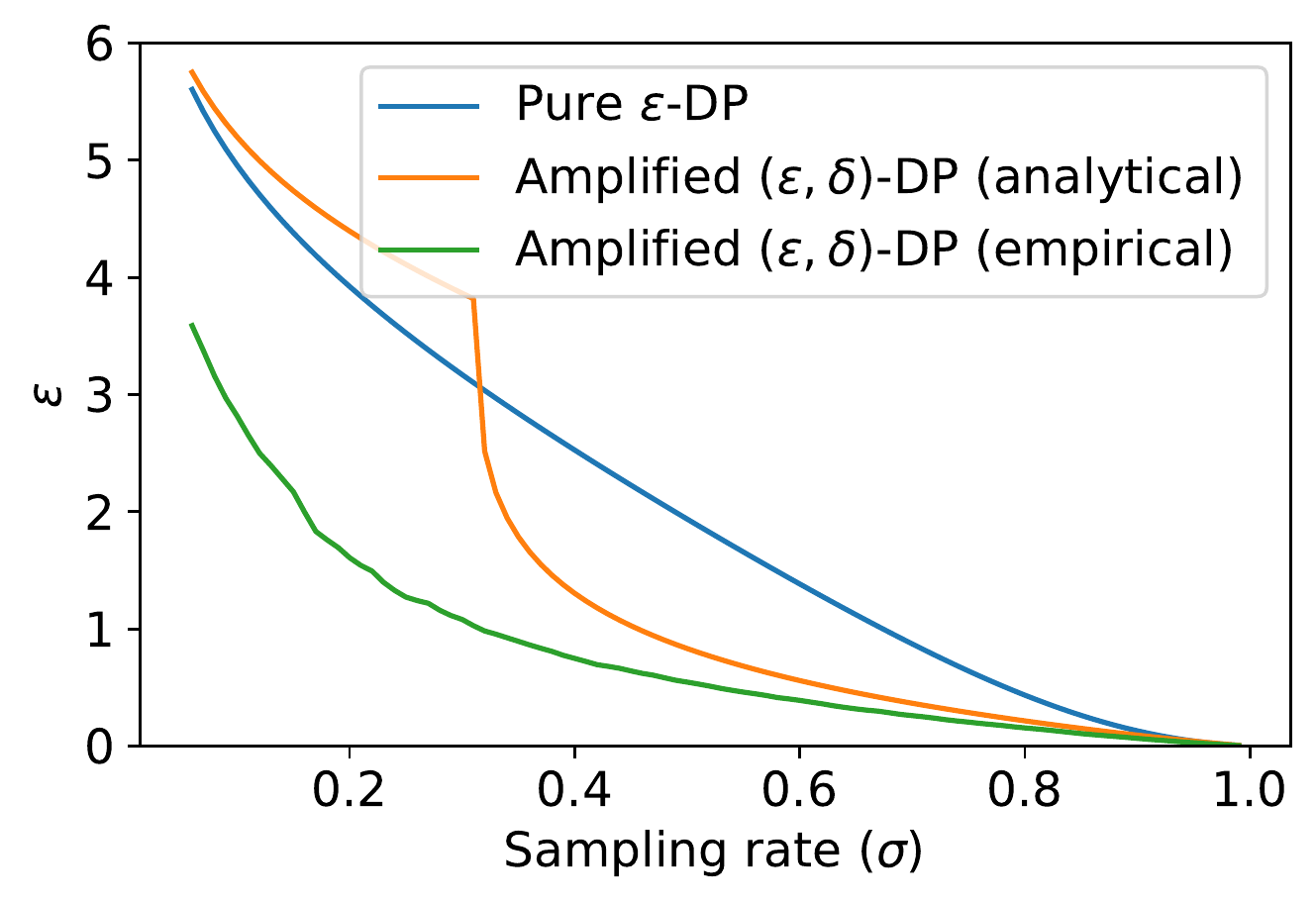}
    \caption{Impact of $\Plie$ ($n=500$, $d=50$)}\label{fig:innboundComparison-plie}
\end{subfigure}
\caption{Impact of $n$, $d$ and $\Plie$ on the privacy of $\Algo_\Plie^{n,d}$, measured by $\epsilon$, for $T=20$ targets. For $(\epsilon,\delta)$-DP, we set $\delta=10^{-4}$.}
\gs{mettre les figures à la même échelle \riad{Done, c'est bon comme ça ?}}
\label{fig:innboundComparison}
\vspace{-1em}
\end{figure*}

We use this improved version of our analytical bound to plot the privacy guarantees of $\Algo_\Plie^{n,d}$ as a function of the different parameters and comparing to our previous $\epsilon$-DP result (Theorem~\ref{theorem-exact}).
Figure~\ref{fig:innboundComparison} shows that our amplified $(\epsilon,\delta)$-DP guarantees provide large improvements over the previous $\epsilon$-DP result in regimes where the expected number of input-independent messages in the output (roughly $\sigma n+d$) is sufficiently large (in the order of $300$). When $\sigma n$ and $d$ are both small, Theorem~\ref{theorem-exact} does not provide any privacy improvement and we fall back on the privacy guarantees provided by the local randomizer $\Randomizer_\Plie$ alone.

\paragraph{Further improvements} The result of Theorem~\ref{thm:main_amplification} is in fact quite pessimistic when both $\sigma n$ and $d$ are small: this is because the concentration inequalities used to bound $\mathbb{E}[\sum_{i=1}^{m+d+1} L_i ]_+$ are known to be loose when the number $m+d+1$ of terms in the sum is small.
To get tighter privacy guarantees in such regimes, we can instead compute a Monte Carlo estimate of $\mathbb{E}[ \sum_{i=1}^{m+d+1} L_i ]_+$, i.e., we can approximate the expectation empirically using a finite number $R$ of random samples. The procedure is outlined in Algorithm~\ref{alg:empirical_est}. Note that drawing a sample of the privacy amplification random variable amounts to fixing two arbitrary targets $t\neq t'$, sampling a target $t''$ uniformly at random from $\Target$, and computing Eq.~\ref{eq:privacy_amp_var_main}.

The larger $R$, the closer the empirical estimate $\hat{L}$ is to the true value. We can bound the deviation $|\hat{L} - \mathbb{E}[\sum_{i=1}^{m+d+1} L_i ]_+|$ with high probability using concentration inequalities, which gives a high probability bound on the error in estimating $\delta$. In all our plots and experiments we use $R=5000$, which is sufficient for Hoeffding's inequality to ensure that the probability of the relative estimation error of $\delta$ being larger than $1/1000$ is negligibly small (below $10^{-30}$).

\begin{algorithm2e}[t]
  \DontPrintSemicolon
   \LinesNumbered
  \SetKwComment{Comment}{{//\ \scriptsize}}{}
  \newcommand{\comr}[1]{\Comment*[r]{#1}}
  \caption{Empirical estimation of $\mathbb{E}[ \sum_{i=1}^{m+d+1} L_i ]_+$}
  \label{alg:empirical_est}
  \KwIn{Number of random samples $R$}
  \BlankLine
  \For{$r\gets1$ \KwTo $R$}{
     Draw $L^r_1,\dots,L^r_{m+d}$ according to Eq.~\ref{eq:privacy_amp_var_main}\;
     $L^r \leftarrow \big[ \sum_{i=1}^{m+d} L^r_i \big]_+$
  }
  $\hat{L}\leftarrow \frac{1}{R}\sum_{r=1}^R L^r$\;
  \Return{$\hat{L}$}
\end{algorithm2e}

Figure~\ref{fig:innboundComparison} shows the strong gains in the privacy guarantees obtained using this empirical estimation: we are able to obtain significant privacy amplification compared to pure $\epsilon$-DP (Theorem~\ref{theorem-exact}) even in regimes where both $\sigma n$ and $d$ are small.

In summary, we have derived analytical $(\epsilon,\delta)$-DP guarantees for our algorithm $\Algo_\Plie^{n,d}$ by leveraging and extending techniques from the literature of privacy amplification by shuffling. We have also shown how to obtain tighter empirical bounds. We will see in Section~\ref{sec:eval} how to use our results to tackle practical use-cases.

\subsection{Performance Analysis}
\label{sec:scrambler_perf}

As before, recall that we consider a simple computation with $S=|\Source|$ source nodes delivering a single message to one of the $T=|\Target|$ potential target nodes.

\paragraph{Utility} In order to maintain the same utility (i.e., number of real contributions) as the non-private algorithm with $S$ sources, the total number of source nodes must be $S_t=\frac{S}{(1-\Plie)}$. We thus consider $SF=S_t/n$ scramblers.

\paragraph{Efficiency} Each scrambler must open a secure communication channel with $n$ sources and $T$ targets and exchanges $2\times n+d$ messages. Hence the total number of secure channels is $S_t+SF\times T$ and the volume of exchanged messages is $S_t + SF \times(n+d)\times \mu$.

\section{Evaluation} 
\label{sec:eval}



User-side collaborative computing is gaining interest with the emergence of (1) cross-device federated learning~\cite{kairouz2019advances,fed_systems} where large sets of personal devices collaboratively train machine learning models, and (2) personal database management systems~\cite{urquhart2018realising, DBLP:journals/is/AnciauxBBNPPS19} where populations of trusted user devices are engaged in collective database aggregation queries~\cite{ladjel2019trustworthy,loudet2019sep2p}. 


Data-dependent communication schemes increase performance by distributing data to compute nodes based on the data values of group-by/join keys (e.g., parallel Oracle SQL Analytics~\cite{bellamkonda2013adaptive}), data points distance to given centroids or regions of feature space (e.g., parallel $K$-means~\cite{zhao2009parallel}, $K$-medoids~\cite{kmedoids}, DBSCAN~\cite{RP-DBSCAN}) or similarity of users' profiles~\cite{clustfl1,clustfl2}.



We focus on two types of distributed queries representative of these contexts, and illustrate the trade-offs between privacy, utility and efficiency obtained with our proposal.

\subsection{Queries and Datasets} 
\label{sec:data-and-queries}


We consider two queries representative of above cases, called \emph{Aggregate} and \emph{$K$-means}. Aggregate is used to understand frequency distributions and collect marginal statistics from a set of consenting users (see example in Section~\ref{sec:usecase}). $K$-means is representative of iterative data processing algorithms used for instance in data mining and machine learning. We describe the two queries and their execution plans below.


\emph{Aggregate (see Section~\ref{sec:usecase}).} A set of $C \times G$ compute nodes, with $G$ the number of grouping sets of attributes (e.g., $G=2$ in Example~\ref{ex:use-case-full}), evaluates statistical functions (min, max, avg), with each compute node processing a partition of the input dataset and sending its output to the result node. A set of $S$ source nodes, each holding a single tuple with numeric values (on which statistics are computed) and grouping values (according to which tuples are grouped), send values from their tuple to $G$ compute nodes. 

\emph{$K$-means:} A set of $S$ source nodes, each holding a single data tuple, compute the distance of their tuple to the $K$ centroids and send their tuple to the compute node managing the closest centroid. A set of $C \times I$ compute nodes (with $I$ the number of iterations, and $C=K$ the number of centroids), each managing a single centroid for a single iteration, update their centroid using the data tuples received from the source nodes, send their updates to the compute nodes for the next iteration, and propagate the updated centroids back to the source nodes they interact with. These steps are repeated for a fixed number $I$ of iterations. The final result is transmitted to the result node. The initial state (first iteration) consists of $K$ (random) points representing the initial centroids of $K$ clusters.

In both cases, communication patterns between source and compute nodes reveal potentially sensitive information about the input data (grouping keys in Aggregate and close/similar users' tuples in $K$-means). Our proposal adds local sampling at source nodes and a set of $\mathrm{SC}$ scrambler nodes per grouping set/iteration between the source and compute nodes (see Figure~\ref{fig:with_scramblers}). For $K$-means, the new centroid obtained by each compute node at the end of the current iteration is sent back to all scrambler nodes which propagate them back to the source nodes they interact with to initiate the next iteration.

\emph{Datasets.} For Aggregate, we use a synthetic dataset. We tested both uniform and biased distributions: for both cases we generated from 10k to 100k records distributed across $20$ grouping intervals for 4 grouping attributes. For $K$-means, we use the classic MNIST dataset of handwritten digits, composed of 70k records in dimension 784 and distributed among 10 classes. We execute $K$-means on the training set (60k), and measure the quality of the clusters we obtain on the test set (10k) using the rand index metric to compare to the ground-truth class labels and evaluate utility.

\subsection{Practical Trade-offs and Results}

\emph{Privacy evaluation.} 
In both execution plans any source node potentially sends messages (via scrambler nodes) to any compute node, and any partition of mutually disjoints sets of source nodes can define a set of clusters of nodes. Each of the $\mathrm{SC}$ 
scrambler nodes associated to a given grouping set (in Aggregate) or iteration (in $K$-means) takes as 
input a partition of the source nodes and hence belongs to different clusters that are never on the same data path. On the contrary, 
scrambler nodes assigned to successive iterations or different grouping sets use as inputs the same sets 
of input source nodes and are therefore on the same data path. In Aggregate, according to Theorem~\ref{th:compo-clusters} the privacy of the 
overall execution plan is thus given by $\epsilon \leq G\times \underset{1\leq i\leq\mathrm{SC}}{\max}(\epsilon_{i})$ and 
$\delta \leq G\times \underset{1\leq i\leq\mathrm{SC}}{\max}(\delta_{i})$ where $\epsilon_i$ and $\delta_i$ denote the privacy guarantees for the cluster with the $i$-th scrambler.
Similarly, the privacy of the execution plan for $K$-means is $\epsilon=I\times 
\underset{1\leq i\leq\mathrm{SC}}{\max}(\epsilon_{i})$ and $\delta=I\times \underset{1\leq i\leq\mathrm{SC}}{\max}(\delta_{i})$.

\emph{Parameters.} Different trade-offs between privacy, utility, and performance can be
studied. The parameters of the experiments are shown in Table~\ref{tab:experiments-param}. 
$C$ is the number of compute nodes and determines the degree of parallelism of the algorithm. Its value depends on the computation and cannot be changed without 
impacting the performance. The value of the number $S$ of source nodes (consenting users) 
influences both efficiency and utility. It is varying in our experiments between 10k and 
20k. $\mathrm{SC}$ is the number of scrambler nodes involved per grouping set (for Aggregate, the 
number $G$ of grouping sets varies from 1 to 4) and per iteration (for $K$-means, the number 
$I$ of iterations is $10$). Parameters $\epsilon$ and $\delta$ determine privacy. The numbers $n$ of source nodes per scrambler, and $d$ of dummy messages added per scrambler, affect 
privacy and efficiency, while the sampling rate $\sigma$ affects privacy and utility. Fixing 
two of these three parameters and increasing the third would increase privacy.  

\emph{Results.} 
To study the impact of the different parameters in the trade-offs between privacy, utility and efficiency metrics (see Section~\ref{sec:metrics}), for simplicity, we work at fixed utility, i.e., we fix the number of tuples effectively used by a compute node (except in Fig.~\ref{fig:gb-utility} which varies utility). We then plot privacy as a function of the efficiency metrics along its three dimensions: 
(i) the \emph{network load} overhead, evaluated as the number of messages added compared to a regular distributed execution, depends on the number of scrambler nodes $\mathrm{SC}$ added per grouping set or iteration, number of dummies $d$ introduced by each scrambler node, and number of source nodes added for privacy reasons,  
(ii) the \emph{individual load}, evaluated as the number of secure channels created per node, which depends mainly on the number $n$ of source nodes per scrambler node\footnote{And to a lesser extent on the number $C$ of compute nodes, as $C$ is small.} and is limited by the power/bandwidth of end-user devices,
(iii) the number of additional \emph{users' consents} required, influenced by the sampling rate $\sigma$ and the number of contributors $S$. 
Note that we consider the three metrics for Aggregate, but only the first two for $K$-means. Indeed, the impact of sampling on utility is known to be negligible for $K$-means \cite{sculley2010web}. Preliminary measures using the rand index metric against the ground-truth clusters confirm that the proportion of correctly clustered tuples for high sampling rates ($\sigma=0.9$) is very close to that obtained when we do not sample. 

\begin{figure*}[!t]
\centering
\begin{subfigure}{.32\textwidth}
    \includegraphics[clip, trim=0 0 0 1cm, 
                width=\textwidth,keepaspectratio]{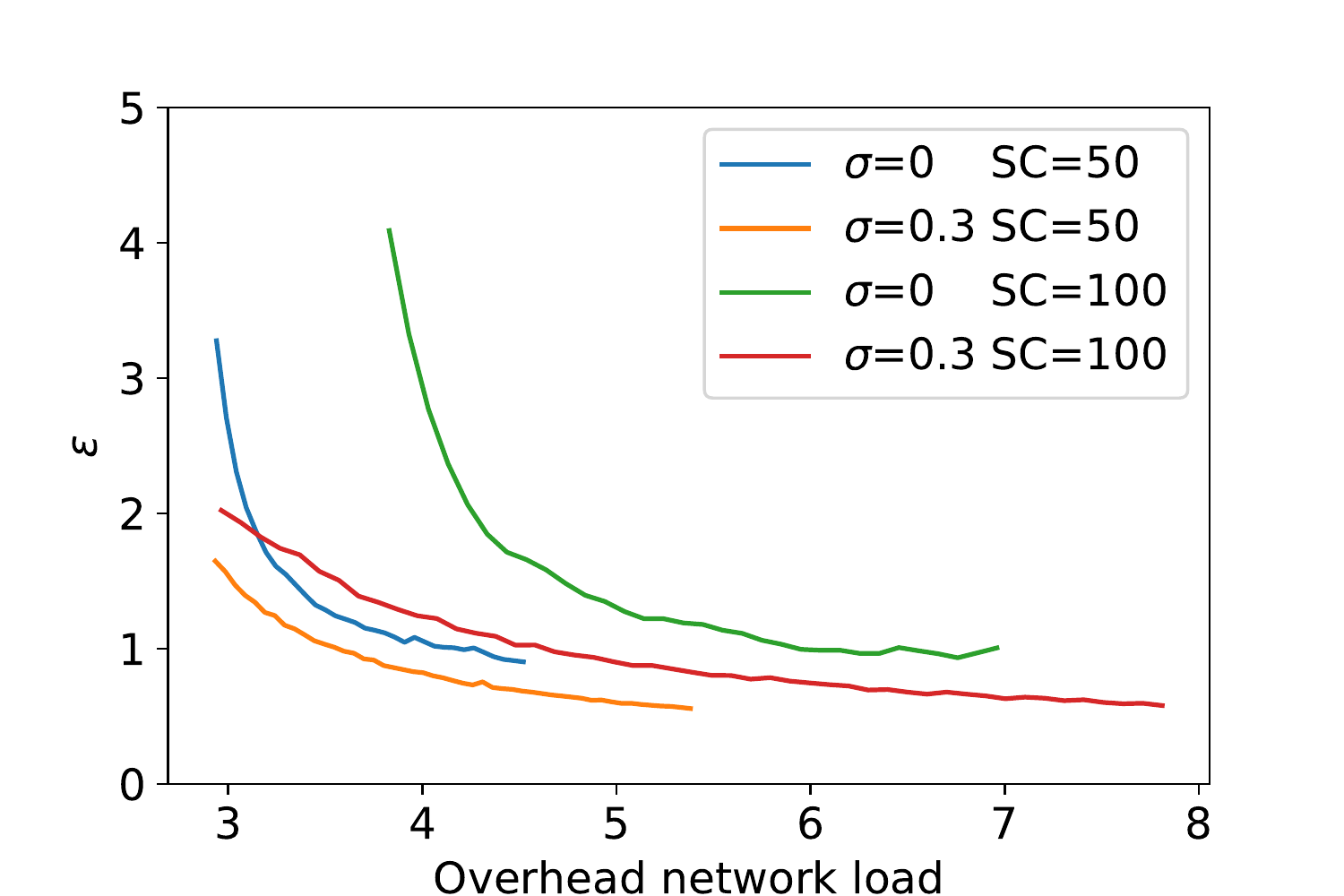}
    \caption{Aggregate: privacy vs network load ($G$=1)}
   \label{fig:gb-global-load}
\end{subfigure}
\begin{subfigure}{.32\textwidth}
    \includegraphics[clip, trim=0 0 0 1cm, 
                width=\textwidth,keepaspectratio]{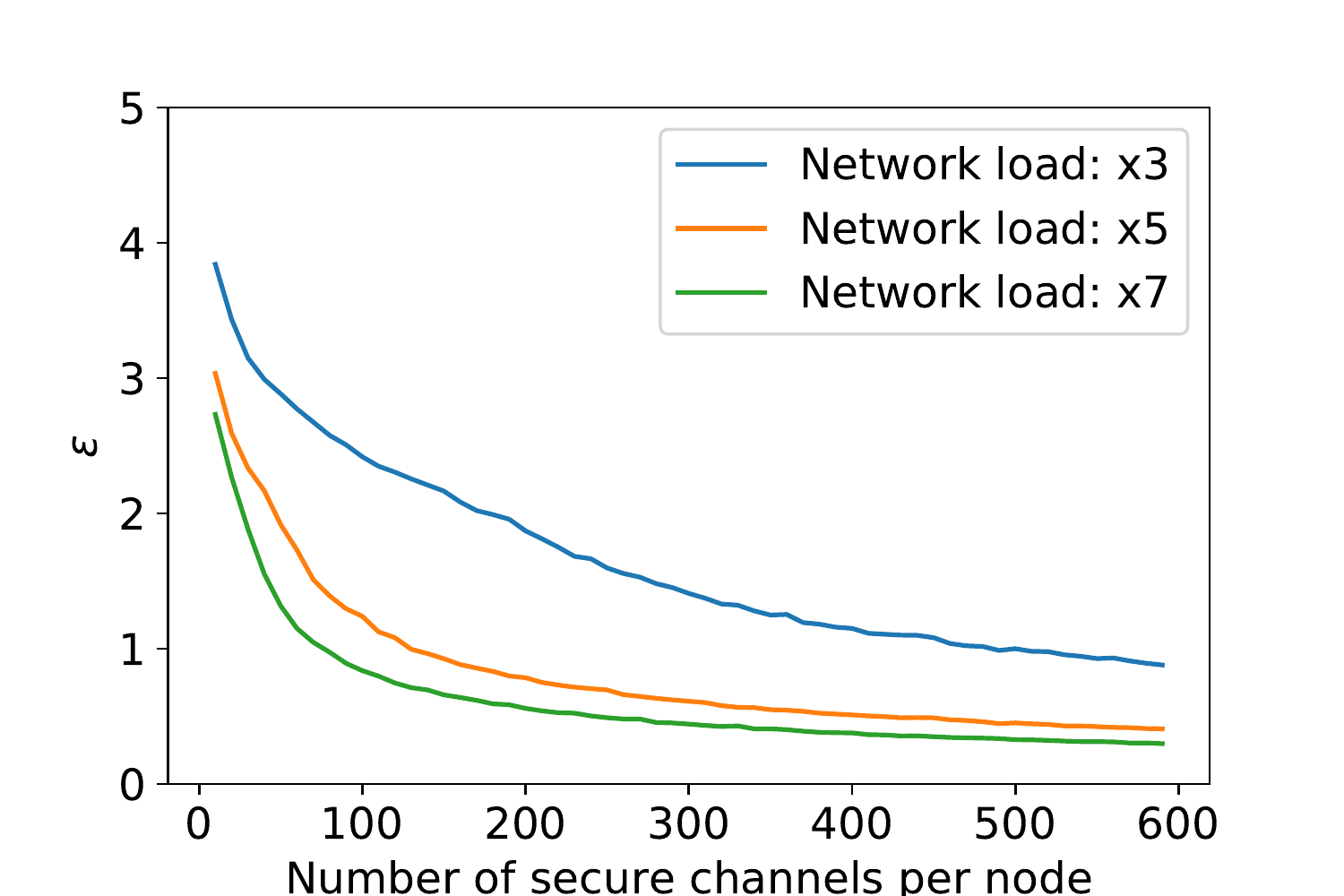}
    \caption{Aggregate: privacy vs individual load}
    \label{fig:gb-individual-load}
\end{subfigure}
\begin{subfigure}{.32\textwidth}
    \includegraphics[clip, trim=0 0 0 1cm, 
                width=\textwidth,keepaspectratio]{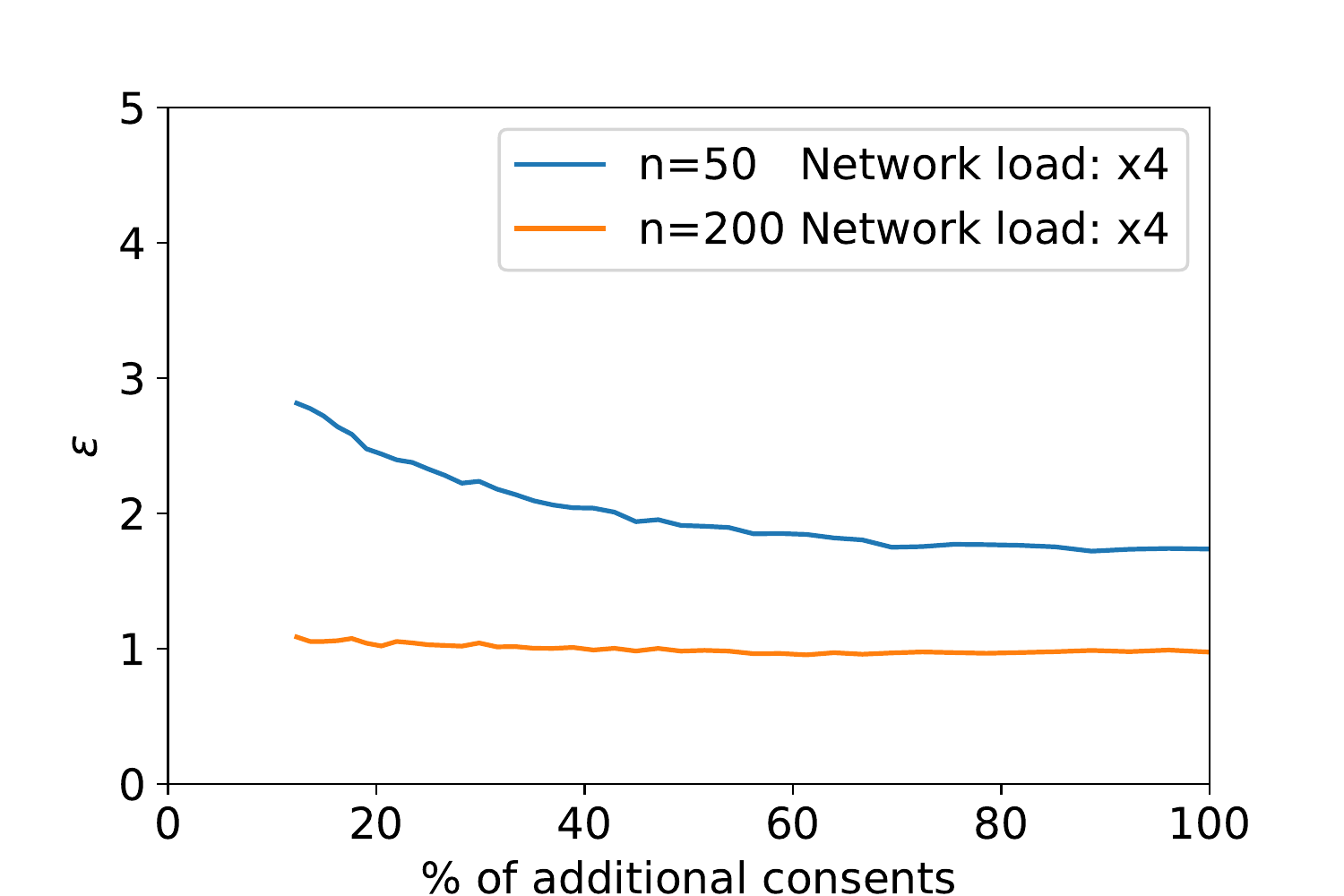}
    \caption{Aggregate: privacy vs consents}
    \label{fig:gb-consents}
\end{subfigure}

\begin{subfigure}{.32\textwidth}
    \includegraphics[clip, trim=0 0 0 1cm, 
                width=\textwidth,keepaspectratio]{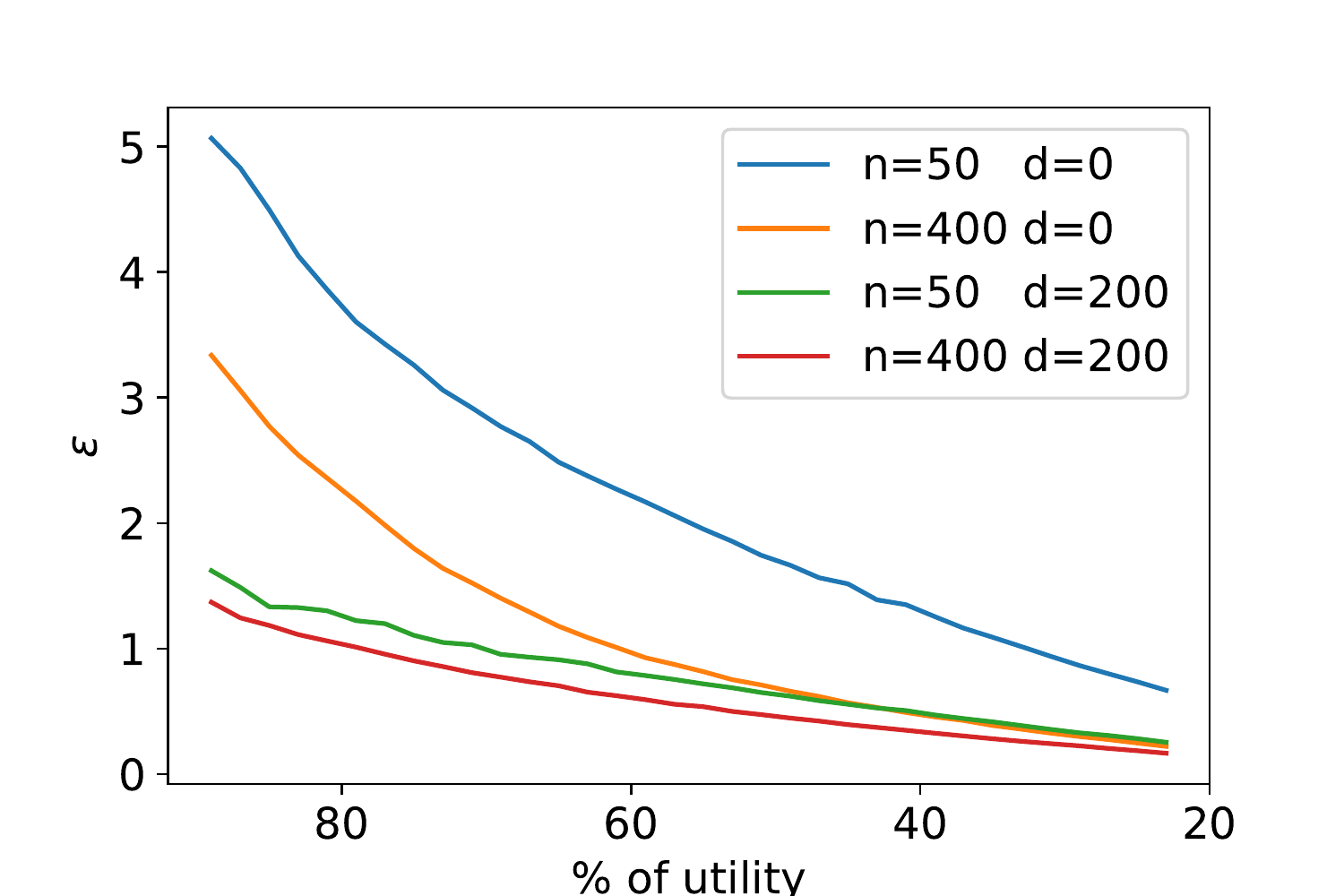}
    \caption{Aggregate: privacy vs utility}
   \label{fig:gb-utility}
\end{subfigure}
\begin{subfigure}{.32\textwidth}
    \includegraphics[clip, trim=0 0 0 1cm, 
                width=\textwidth,keepaspectratio]{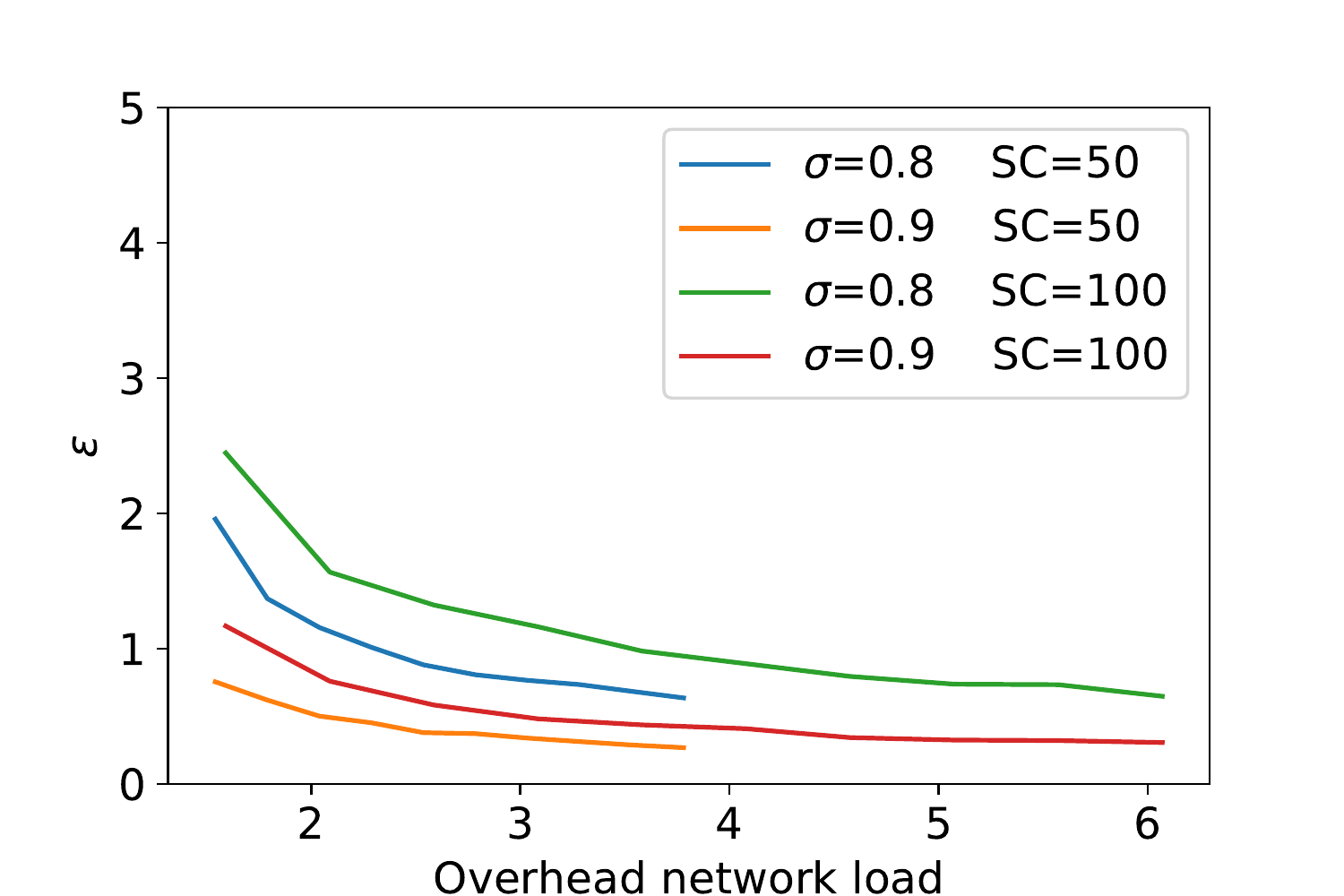}
    \caption{$K$-means: privacy vs global load}
    \label{fig:km-global-load}
\end{subfigure}
\begin{subfigure}{.32\textwidth}
    \includegraphics[clip, trim=0 0 0 1cm, 
                width=\textwidth,keepaspectratio]{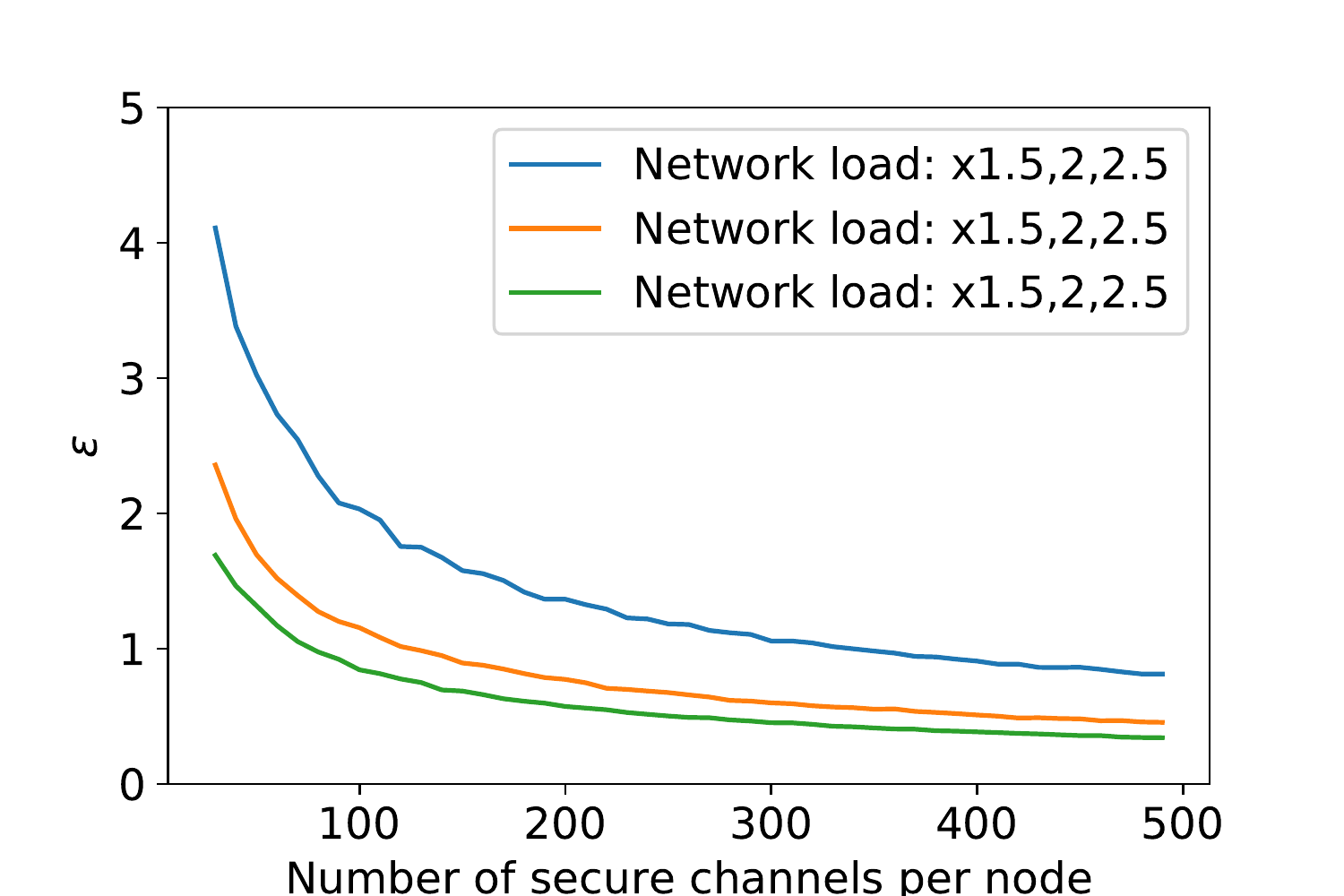}
    \caption{$K$-means: privacy vs individual load}
    \label{fig:km-individual-load}
\end{subfigure}

\caption{Trade-offs between privacy, utility and efficiency in Aggregate and $K$-means.}
\label{fig:tradeoffs}
\vspace{-1em}
\end{figure*}

The curves shown in \Cref{fig:tradeoffs,fig:figG}
 show our results. In each curve, we evaluate the trade-offs between efficiency/utility parameters (X-axis) and privacy (Y-axis). 
 
\emph{Network load vs privacy (\Cref{fig:gb-global-load,fig:km-global-load,fig:figG})}. 
We vary the number $d$ of dummies introduced by the scrambler nodes, fixing the number of tuples effectively used to $10k$, for different configurations. We consider low sampling rates for Aggregate (to maximize utility) and high sampling rates for $K$-Means (without utility loss). Network overhead exists even without dummy (left side of the curves) due to the introduction of scrambler nodes. When $d$ is increased, the privacy gains are significant, especially for the first dummies. Configurations with good privacy ($\epsilon \leq 1$) and acceptable network load can be achieved.

\begin{table}%
\parbox{0.15\textwidth}{
        \begin{tabular}{|C{0.8cm}|C{2.1cm}|}
         \hline
         Name & Range \\ \hline
        $C$ & 20 ($\times G$ $\times I$) \\
        $\mathrm{SC}$ & 20-100 ($\times G$ $\times I$) \\
        $I$ & 10 \\
        $G$ & 1-4 \\
        $S$ & $10k-20k$ \\
        $\epsilon$& $0-5$\\
        $\delta$ & $10^{-4}$\\
        $n$ & 10-600 \\
        $d$ & 0-1000 \\
        $\sigma$ & $0-1$ \\
         \hline
        \end{tabular}
    \captionof{table}{\centering Range of parameters for measures.}
    \label{tab:experiments-param}
}
\qquad
\begin{minipage}[c]{0.33\textwidth}%
    \includegraphics[clip, trim=0 0 0 1cm, width=\textwidth]{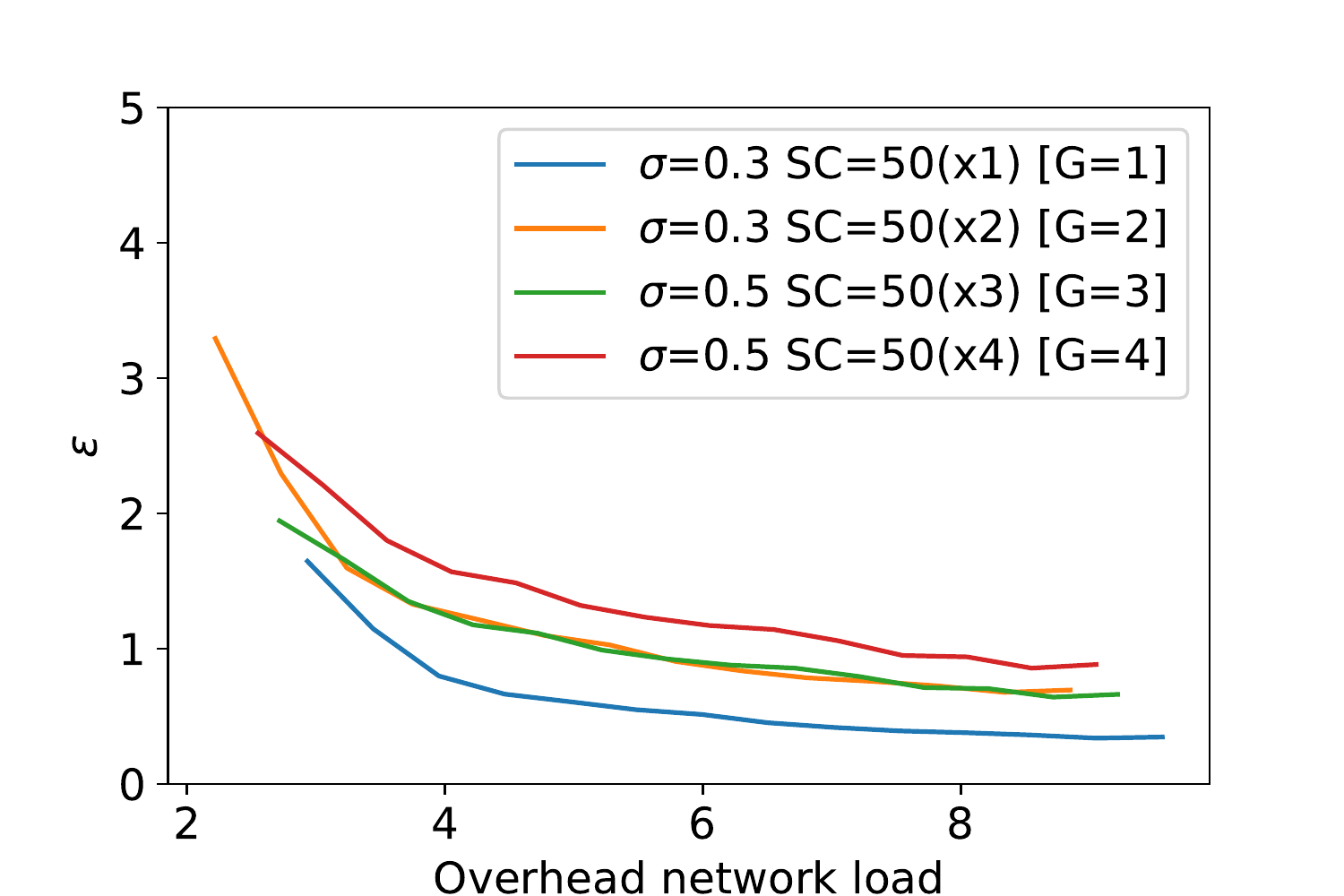}
\captionof{figure}{Aggr.: privacy vs net. load.}
\label{fig:figG}
\end{minipage}
\vspace{-2em}
\end{table}
\renewcommand{\arraystretch}{0.5} 
\setlength{\tabcolsep}{0.05cm} 

%
\emph{Additional users' consents vs privacy (\Cref{fig:gb-consents,fig:gb-utility})}. 
Fig.~\ref{fig:gb-consents} shows privacy at constant utility ($10k$ significant contributions) for fixed network overheads, increasing the sampling rate $\sigma$ hence the number $S$ of users' consents (up to $20k$), with the number $d$ of dummies kept compliant with the target network overhead. We observe that it is more privacy efficient to increase $\sigma$ (users' consents) than to increase $d$, especially with few messages per scrambler (low $n$, then higher $\mathrm{SC}$ and lower $d$). Similarly, Fig.~\ref{fig:gb-utility} shows that at fixed number of users' consents ($S$=10k) and dummies, sampling has a drastic effect on privacy, especially with fewer dummies.


\emph{Individual load vs privacy (\Cref{fig:gb-individual-load,fig:km-individual-load})}. 
Individual load is typically determined by the application context (i.e. limitations of individual participants).
\Cref{fig:gb-individual-load,fig:km-individual-load} show that increasing individual load on scramblers
yields better privacy, however this effect also diminishes when individual load increases. In particular, these figures suggest that a few tens to a few hundreds channels are enough in most cases.


Overall, depending on the constraints of the use case (acceptable utility loss, maximum individual load, \ldots), a number of good configurations can be reached by tuning the number of scramblers, the sampling rate and the number of dummies.

\nicolas{TODO : compléter avec des refs sur les algos listés et leur evaluation en
  distribué (data dependance pour raisons de perfs). Parler aussi des algos
  parallels de jointure (?). Ca peut aider aussi à  montrer que le cas de S sources qui distribuent des tuples à T targets est un cas général } 

\aurelien{use-cases ML dans lesquels on aurait de la communication data-dependent. Voici quelques éléments,
ça me permet à moi aussi de récapituler l'ensemble :\\

1/ clustering : bon candidat, tâche super classique en ML et analyse de données en général, et qui peut servir comme première étape ou composante dans du ML fédéré (voir 3/ ci-dessous). Les algos centroid-based (k-means, k-medians, k-medoids) semblent les plus adaptés à notre cas. De manière générale ces algos fonctionnent en alternant affectation des points au centroide le plus "proche" (cette notion peut varier entre les algos) et mise à jour des C centroides à partir des points qui leur sont affectés. Dans un contexte distribué comme le nôtre, il est naturel de procéder à l'affectation des points en local par les sources et à affecter une target à chaque centroid qui reçoit les points qui y sont affectés à chaque itération, met à jour le centroide et le renvoie aux sources. 

Voici quelques justifications :
\begin{itemize}
    \item L'adversaire peut déduire des communications les points qui sont
associés au même centroide (donc sont proches). S'il connait les
cendroides (ou certains points qui y sont affectés), il peut carrément
avoir une bonne idée d'où se trouve un point.
    \item Une stratégie de distribution alternative (mentionnée par Nico) est
de splitter les sources entre les targets de manière data-independent:
mais dans ce cas, on a besoin d'un round de communication supplémentaire
(pour aggréger les centroides partiels) ou bien de C fois plus de
communication (si on envoie directement les centroides partiels aux
sources). On peut aussi noter que cette approche est impossible pour
k-means et k-medoids, lequel le calcul de la mise à jour d'un centroide
nécessite l'accès à tous les points

\item Exemple avec MapReduce: dans le papier tout simple et très cité
suivant, allouer un reducer par centroide est ce qui permet de
paralléliser au max
\url{https://www.researchgate.net/profile/Qing-He/publication/225695804_Parallel_K-Means_Clustering_Based_on_MapReduce/links/5768a0f508ae8ec97a424884/Parallel-K-Means-Clustering-Based-on-MapReduce.pdf}

\item Autre exemple avec k-medoids, qui est particulièrement complexe à
paralléliser car la mise à jour classique d'un centroide nécessite de
regarder tous les points (pas seulement ceux affectés à ce centroide):
des versions parallèles proposent de faire simplement un "local search",
càd de ne considérer que les points affectés \\
\url{ http://article.nadiapub.com/IJSIP/vol7_no2/13.pdf}\\
\url{https://ieeexplore.ieee.org/document/6926527}

\end{itemize}

2/ On peut citer évoquer un autre algo de (density-based) clustering,
DBSCAN. J'ai regardé le papier évoqué par Nico la dernière fois:
\url{http://dm.kaist.ac.kr/jaegil/papers/sigmod18.pdf}

Les approches précédentes de DBSCAN distribué (citées dans l'article
ci-dessus) utilisent des splits de données data-dependent (par contre
ils sont eux-même coûteux à calculer : ça demande un preprocessing).
L'approche RP-DBSCAN proposée utilisent du pseudo-random splitting : en
gros ils découpent l'espace en cellule : les points qui tombent dedans
font partie du même split, et chaque split est un ensemble aléatoire de
cellules. Dans ce cas pas besoin de preprocessing et les communications
dépendent bien des données, même si cela peut révéler un peu moins
d'infos que dans le cas des algos de centroides par exemple.

3/ Le clustering permet de faire une transition naturelle vers des
approches récentes en Federated Learning qu'on appelle "Clustered
Federated Learning". L'idée est d'apprendre un modèle (type réseau de
neurones ou autre) mais au lieu d'un apprendre un seul pour tous les
clients, on va essayer d'apprendre un modèle par cluster de clients.
Cela permet d'avoir des modèles plus personnalisés qui sont adaptés à
des données/tâches hétérogènes entre clients. Dans ce cas, l'idée de
paralléliser en faisant qu'un serveur (target) soit chargé de la mise à
jour des modèles d'un cluster en aggrégeant les updates des clients
associés à ce cluster est hyper naturelle dans plein d'algos existants :
\begin{itemize}
    \item Dans certaines approches, le clustering des clients est fait comme un
preprocessing, par exemple en clusterisant les modèles appris localement
par les clients. Du coup il y a potentiellement des coms data-dependent
pour l'étape de clustering (par ex le 1er papier ci-dessous fait du
k-means), mais surtout une fois le clustering effectué, l'apprentissage
devient indépendant par cluster, donc "embarassingly parallel" avec des
coms data-dependent. \url{https://arxiv.org/pdf/1906.06629.pdf} \url{https://arxiv.org/pdf/1910.01991.pdf}

        \item Dans d'autres approches, le clustering des clients est effectué et
mis à jour en même temps que l'apprentissage des modèles. Les papiers
ci-dessous utilisent un seul serveur qui aggrège tout, mais en pratique
il est là encore super naturel de paralléliser en laissant un serveur
s'occuper de chaque cluster (càd recevoir les updates des clients
associés à ce cluster, mettre à jour le centroide, et le renvoyer aux
clients): \url{https://arxiv.org/pdf/2002.10619.pdf} \url{https://arxiv.org/pdf/2005.01026.pdf}
\url{https://arxiv.org/pdf/2006.04088.pdf}
\end{itemize}

4/ Moins lié au clustering, il y a des approches de Federated Learning
qui proposent de mieux passer à l'échelle d'un grand nombre de clients
en les organisant en sous-groupes qui bossent en parallèle et dont les
updates sont aggrégées par groupe avant d'être aggrégée entre groupes.
Ces groupes sont typiquement data-dependent afin d'assurer idéalement
une distribution des données similaire dans chaque groupe pour favoriser
la convergence, voir par exemple:
\url{https://arxiv.org/pdf/2012.03214.pdf}

5/ Un dernier point que je n'ai pas trop creusé, c'est l'apprentissage
de decision trees (algos de type ID3, C4.5, CART). A priori la
construction d'un decision tree nécessite de répondre à une séquence
adaptive de requêtes de type "quelle est la proportion des points de
chaque classe qui ont la valeur v pour l'attribut a?". Cela ressemble
donc fort au type de requêtes dont vous avez l'habitude (c'est un group
by ou un truc du style?). Je me dis donc que distribuer le calcul avec
des coms data-dependent pourrait être pertinent là aussi.

Conclusion: à mon avis on a des billes pour mettre en avant des
use-cases ML. Il faut voir comment le rédiger de manière claire (je
pourrai évidemment m'en charger). Si on veut des expériences, il vaut
mieux sans doute s'en tenir à un bon vieux k-means pour que ce soit pas
trop compliqué :-)

== Fin mail}

\section{Related Work} 
\label{sec:rw}
\emph{Anonymous communications.}
Providing ways of communicating anonymously is far from a new problem. While our
appoach is focused on tackling data dependency in communications, it is closely
related to various lines of work seeking to hide endpoints of communications.

The first and probably simplest way of providing anonymous communications is
to use mix networks (mixnets) \cite{Mixnets}. Recent work has studied how shuffling messages with mixnets can amplify local differential privacy guarantees (see Section~\ref{sec:RW-DP}). While
we take inspiration from such works in our use of scramblers (which we do not
deliberately call mixnets or shufflers as they also have the function of adding dummy messages), simply
using mixnets would not provide differential privacy guarantees in
our case as they do not hide the number of messages sent to each
target. Similar solutions seeking to achieve anonymous routing (e.g TOR
\cite{TOR}) have the same problem, and typically induce a
fairly high overhead in terms of communications and cryptographic
computations at client side.

Differentially private messaging is more closely related to our goal. Vuvuzela \cite{vuvuzela} and a number of following
works \cite{karaoke} seek to provide differentially private communications. These would
satisfy our goal of hiding data dependency in a differentially private
manner. However, these aim at a larger goal, which is to fully hide who communicates with
who (and even the fact that users are communicating at all) rather than
restricting the problem to hiding data dependency. In our case, we are willing to
disclose the fact that a source is talking to a target, we simply want to hide
which specific target. As a consequence, these works have a very high overhead
as they need to drown all actual traffic within fake traffic (the system should
behave in roughly the same way whether people are communicating or not), leading
to network load being orders of magnitude greater than the number of actual
messages sent. In our work we leverage the fact that distributed
computations typically do not exhibit arbitrary communication patterns to obtain a
much more efficient solution.

Finally, the work of \cite{AnoA} aims at modeling and providing tools for
analyzing protocols where cryptographic guarantees and differential guarantees
coexist and providing differential privacy notions suitable for anonymous communications. Our adversary model is largely inspired by this work: in particular, our reduction from computational differential privacy and the adjacency notion for communication graphs in clusters is similar to the one in \cite{AnoA}.

\nicolas{todo gs : (1) ajouter les refs; (2) vérifier qu'il ne manque rien, par ex: Prochlo, karaoke, voir aussi :
 Bittau, A., Erlingsson, Ú., Maniatis, P., Mironov, I., Raghunathan, A., Lie, D., ... Seefeld, B. (2017, October). Prochlo: Strong privacy for analytics in the crowd. In Proceedings of the 26th Symposium on Operating Systems Principles (pp. 441-459).
 Wang, C., Bater, J., Nayak, K.,  Machanavajjhala, A. (2021, June). DP-Sync: Hiding Update Patterns in Secure Outsourced Databases with Differential Privacy. In Proceedings of the 2021 International Conference on Management of Data (pp. 1892-1905).
 Lazar, D., Gilad, Y.,  Zeldovich, N. (2018). Karaoke: Distributed private messaging immune to passive traffic analysis. In 13th {USENIX} Symposium on Operating Systems Design and Implementation ({OSDI} 18) (pp. 711-725).

A voir aussi sur le lien crypto + DP: 
Sameer Wagh, Xi He, Ashwin Machanavajjhala, Prateek Mittal:
DP-cryptography: marrying differential privacy and cryptography in emerging applications. Commun. ACM 64(2): 84-93 (2021)
https://dl.acm.org/doi/pdf/10.1145/3418290

}



\emph{Differentially private data analysis.}
\label{sec:RW-DP}
In terms of techniques used in this paper our work is closely related to differential privacy in the shuffle model \cite{Cheu2019,amp_shuffling, Balle2019} which provides an intermediate model between central and local DP. The shuffle model can reduce the utility cost of the local model by passing randomized data points through a secure shuffler (mixnet) before they are shared with an untrusted third party. However, many queries do not admit accurate solutions in the shuffle model \cite{shuffle_limits}. In contrast we rely here on user-side trusted environments for distributed query evaluation, which provides different trade-offs. In particular, we avoid the loss in utility of local DP without requiring a trusted third party, and allow to accurately evaluate general queries (albeit at a potentially large cost in efficiency if trusted execution environments must be used to secure user-side computation).
An original aspect of our work is to leverage DP and amplification by shuffling to guarantee the privacy of \emph{data-dependent communications patterns} (and thereby mitigate traffic analysis attacks), while the above work on the shuffle and local models uses DP to guarantee the privacy of the \emph{content of messages} with data-independent communication patterns. 
We also stress the fact that our approach nicely composes with central DP in use-cases where the result of the query evaluated in our framework is released in a differentially private way. In particular, if we provide an $(\epsilon_1,\delta_1)$-DP guarantee for communication patterns and an $(\epsilon_2,\delta_2)$-DP guarantee for releasing the query result, then by the composition property of DP we obtain an $(\epsilon_1+\epsilon_2,\delta_1+\delta_2)$-DP guarantee against an adversary who observes both the communication patterns and the final result.

\emph{Hiding memory access patterns and input/output size.}
Many solutions were proposed to hide the query and/or the size of inputs and outputs of operators in execution plans \cite{bater2018shrinkwrap, bater2020saqe, xu2019hermetic, ren2020hybridx}. 
While the goal of these works is to hide data dependency in execution flows, these approaches tackle a different problem from ours. Indeed, only trees are considered (i.e. any operator has a single successor) and the goal is to hide the query executed within an operator and/or the result size, for example, hiding the number of tuples returned by a selection query. In contrast, our proposal supports any query plan, and prevents attackers from inferring information about individual tuples by observing the communications.

Another line of work consists in hiding memory access patterns in distributed cloud computations and local computations with multiple processors. In particular, \cite{DP-Orimenkho, DP-Mazloom} provide differential privacy guarantees for memory access patterns. Specifically, \cite{DP-Mazloom} offers modifications to secure two-party computations between two non-colluding servers that hold individual data, and \cite{DP-Orimenkho} offers a definition of oblivious differential privacy which is roughly the counterpart of ours when considering memory access pattern rather than communications but focuses on a single process rather than on interactions between multiple processes.
Interestingly, these approaches can be used to protect individual nodes (which may perform complex computations), and the differential privacy guarantees they provide would compose nicely with ours.

\emph{Hiding data dependency in communication patterns.}
\label{sec:RW-side}
Several existing works use various anonymous communication techniques to hide data exchange between nodes in distributed query plans in a cloud setting \cite{Prochlo, M2R, zheng2017opaque}. While these are somewhat related to ours, they assume users have a (very) large amount of data and offer either unsatisfactory guarantees in our massively decentralized setting or massive overheads.


A more direct way to hide the dependency between communication patterns and private data values is to make all communications data-independent, as done in \cite{sh:observing} by padding and clipping messages in MapReduce computations for confidential computing in the cloud. However, this technique would produce massive overheads or very imprecise results in our case, where there is no prior knowledge on the data distribution to appropriately tune the padding and clipping parameters.

\section{Conclusion} 
\label{sec:conclusion}

In this paper, we proposed a differentially private solution to mitigate the leakage from data-dependent communications in massively distributed computations. We leveraged recent work on privacy amplification by shuffling to formally prove privacy guarantees for our solution. We also showed how to balance privacy, utility and efficiency on two use-cases representative of distributed computations, highlighting the genericity of our solution.


We hope that our proposal will contribute to the development of new decentralized models for Data Altruism \cite{EUDataGovAct2020}, in which citizens contribute the computation of socially useful information, with community control over the computation performed on the user side.
Many research questions remain open, such as formulating and validating a collective computing ``manifesto''. Another concrete challenge relates to the implementation of a platform to support these technologies. An interesting future work is to build on the emergence of Personal Data Management Systems (PDMS) \cite{DBLP:journals/is/AnciauxBBNPPS19, urquhart2018realising}, which provide new tools for individuals to collect their personal data and control how they share results of local computations. 


\section*{Acknowledgments}

This work was supported by the French National Research Agency (ANR) through grants ANR-16-CE23-0016 (Project PAMELA) and ANR-20-CE23-0015
(Project PRIDE).

\bibliographystyle{plain}
\bibliography{references}

\clearpage
\appendices

\section{Computational differential privacy}
\label{app:comp-diff-priv}
We explain how we can reduce our attention to the privacy notion defined in Definition~\ref{def:dp1} under our hypotheses. Similarly to \cite{ComputationalDP}, we start from a notion which explicitly restricts the power of the adversary.

\begin{definition}[Computational DP for execution plans]
\label{def:dp0}
  An execution plan $\mathcal{N}$ is $(\epsilon, \delta)$-differentially private
  if for any neighboring $\mathcal{D}_0, \mathcal{D}_1$ (differing by at most one
  tuple), and any probabilistic polynomial time adversary $\mathcal{A}$ we have:
  \begin{multline*}
    P[\mathcal{A} \text{ interacting with }\mathcal{N}(\mathcal{D}_0) \text{ guesses } 0] \\
    \leq e^\epsilon P[\mathcal{A} \text{ interacting with }\mathcal{N}(\mathcal{D}_1) \text{ guesses } 0] + \delta.
  \end{multline*}
\end{definition}

We can make two immediate simplifications to this definition using our assumption that communication channels are secure. The first one is to exclude adversaries that
influence the computation by injecting messages (as such an adversary would
need to break secure channel integrity). We can therefore restrict ourselves
to a passive adversary. The second one is that we may also omit the content of
messages, as under secure channels all messages can be replaced
by random messages without the adversary being able to differentiate the two
situations. Finally, as two identical communication graphs are trivially
indistinguishable when the content of messages is random, the question reduces to
whether different neighboring datasets generate similar communication
graphs, leading to our Definition~\ref{def:dp1} in the main text.


\section{Proof of \Cref{thm:privacy_local}}
\label{app:proof-priv-local}

\begin{proof}
Let $\mathcal{G}$ and $\mathcal{G}'$ be two neighboring communication graphs and let $(s,t)$ and $(s,t')$ with $t\neq t'$ be the messages that differ in $\mathcal{G}$ and $\mathcal{G}'$ respectively. To prove that $\Algo_{\Plie}^{d}$ satisfies $\epsilon$-differential privacy, we need to show that for any possible output $\mathcal{O}$:
$$\frac{P[\Algo_{\Plie}^{d}(\mathcal{G})=\Output]}{P[\Algo_{\Plie}^{d}(\mathcal{G}') = \Output]} \leq e^\epsilon.$$
Since each message is processed independently by $\Randomizer_{\Plie}^{d}$, and $\mathcal{G}$ and $\mathcal{G}'$ are neighboring, we have:
\begin{equation}
\label{eq:A_R}
\frac{P[\Algo_{\Plie}^{d}(\mathcal{G})=\Output]}{P[\Algo_{\Plie}^{d}(\mathcal{G}') = \Output]} =\frac{P[\Randomizer_{\Plie}^{d}(s,t)=\Output]}{P[\Randomizer_{\Plie}^{d}(s,t') = \Output]}.
\end{equation}

We now seek to bound the above ratio for the worst-case output. By construction of $\Randomizer_\Plie^d$, the probability of producing a message $(s,t'')$ with $t''\neq t\neq t'$ is the same for $\Randomizer_\Plie^d(s,t)$ and $\Randomizer_\Plie^d(s,t')$. Therefore, we can focus on four cases for the output $\Output$ produced by $\Randomizer_{\Plie}^{d}$: (i) $(s,t) \in \Output$ and $(s,t') \in \Output$, (ii) $(s,t) \notin \Output$ and $(s,t') \notin \Output$, (iii) $(s,t)\in \Output$ and $(s,t') \not\in \Output$ and (iv) $(s,t) \not\in \Output$ and $(s,'t) \in \Output$.

The probability of the first two cases is the same under both inputs, hence their ratio is 1. The two last cases are symmetric, so without loss of generality we consider an output $\Output$ such that $(s,t) \in \Output$ and $(s,t') \not\in \Output$.
Consider sampling $d+1$ elements without replacement from $\Target$: for two distinct $t,t'\in \Target$, let $E_{\neg t'}$ be the event where $t'$ is not selected, $E_{\neg t \wedge \neg t'}$ the event where neither $t$ nor $t'$ are selected and $E_{t \wedge \neg t'}$ the event where $t$ is selected but not $t'$. We have:
\begin{align*}
P[E_{t \wedge \neg t'}] &= P[E_{\neg t'}] - P[E_{\neg t \wedge \neg t'}] \\
& = \frac{T-d-1}{T}-\frac{(T-d-1)(T-d-2)}{T(T-1)}\\
&= (d + 1) \frac{T- d - 1}{T(T-1)}.
\end{align*}
Using the above, we can compute the ratio of probabilities in Eq.~\ref{eq:A_R}:
\begin{align}
\frac{P[\Randomizer_{\Plie}^{d}(s,t)=\Output]}{P[\Randomizer_{\Plie}^{d}(s,t') = \Output]} &=\frac{(1-\Plie) \frac{T-(d+1)}{T-1}  + \Plie P[E_{t \wedge \neg t'}]}{\Plie  P[E_{t \wedge \neg t'}]}\nonumber\\
&= \frac{(1-\Plie) \frac{T-(d+1)}{T-1} + \Plie (d + 1) \frac{T-d - 1}{T(T-1)} }{\Plie  (d + 1)  \frac{T- d - 1}{T(T-1)}}\nonumber\\
&= \frac{(1-\Plie)T}{\Plie ( d + 1 )}+1\leq e^\epsilon,\label{eq:ldp}
\end{align}
which combined with Eq.~\ref{eq:A_R} shows that $\Algo_{\Plie}^{d}$ satisfies $\epsilon$-DP. 
\end{proof}
Incidentally, Eq.~\ref{eq:ldp} shows that the local randomizer $\Randomizer_{\Plie}^{d}$ satisfies $\epsilon$-local differential privacy.

\section{Proof of Theorem~\ref{theorem-exact}}
\label{app:Appendix1}

To facilitate the reading, we introduce additional notations. Let $\mathcal{G}$ and $\mathcal{G}'$ be two neighboring communication graphs and let $(s,t)$ and $(s,t')$ with $t\neq t'$ be the messages that differ in $\mathcal{G}$ and $\mathcal{G}'$ respectively. We abstract the graphs $\mathcal{G}$ as $\mathcal{G}=\{ x, y, z\}$ where $x$ is the number of messages targeting $t$, $z$ is the number of messages targeting $t'$ and $y$ is the number of messages targeting $t'' \in \Target \setminus \{t,t'\}$  with $t \neq t'$. Following the same principle we denote the output as $\Output=\{ \alpha, \beta, \gamma \}$ where $\alpha$ is the number of messages the scrambler sends to the target $t$, $\gamma$ is the number of messages are sent to the target $t'$ and $\beta$ is the number of messages are sent to $t'' \in \Target \setminus \{t,t'\}$.
The other used notations are summarized below:

\renewcommand{\arraystretch}{0.3} 
\setlength{\tabcolsep}{0.05cm} 

\begin{itemize}

    \item $\PND{\alpha}{\beta}{\gamma}{x}{y}{z}$ is the probability to get an output $\Output=\{ \alpha, \beta, \gamma \}$ with the algorithm $\Abar_\Plie^{n,d}$ given a graph $\mathcal{G}=\{ x, y, z\}$ 
  
    \item $R_{n}^{d} \Inn{\alpha}{\beta}{\gamma} {x}{y}{z}$ is the ratio $\frac{ \PND{\alpha}{\beta}{\gamma}{x\!+\!1}{y}{z}} {\PND{\alpha}{\beta}{\gamma}{x}{y}{z\!+\!1}}$ and it is equal to $e^\epsilon$
    
    \item $\PNDum{k_1}{k_2}{k_3}{\alpha-k_1}{\beta-k2}{\gamma-k_3}$ is the probability to send $k_1$ dummies (resp. $k_2$, $k_3$) to the target $t$ (resp. $t'' \in \Target \setminus \{t,t'\}$, $t')$.
    
    \item $\Phi_{u,v} =$ is the probability to draw $u$ times $\alpha$, $v$ times $\gamma$ and $x+z-u-v$ times $\beta$ (i.e $\PNZ{u}{x+z-u-v}{v}{x}{0}{z}$).
    
    \item $\I(\text{conditions})$ is the indicator function (equal to 1 when the conditions are met and 0 otherwise).
\end{itemize}

\paragraph{Useful formulas} based on the notations above, we provide below some useful formulas used in the remaining of the chapter.

\begin{equation}
\PND{\alpha}{\beta}{\gamma}{x}{y}{z}
=\!\!\!\!\!\! \underset{k_1+k_2= d}{\sum} \PNDum{k_1}{\!d\!-\!k_1\!-\!k_2}{k_2}{\alpha-k_1}{\!\beta\!-\!(\!d\!-\!k_1\!-\!k_2\!)\!}{\gamma-k_2} \cdot \PNZ{\alpha-k_1}{\beta-(d-k_1-k_2)}{\gamma-k_2}{x}{y}{z}
\label{eq:tirage_dummies}
\end{equation}

\begin{equation}
\PNZ{\alpha}{\beta}{\gamma}{0}{y}{0}
= \frac{y!{\truth}^{\beta} \lie^{y-\beta}}
       {\alpha!\beta!\gamma!} 
  \I\Ink{\alpha \geq 0}{\beta \geq 0}{\gamma \geq 0}
\label{eq:p_n_0_y}
\end{equation}

\begin{equation}
\PNZ{\alpha}{\beta}{\gamma}{x}{y}{z}= \sum_{u+v \leq x+z}^{}{ \Phi_{u,v}
\PNZ{\alpha - u}{\beta - (x+z-u-v)}{\gamma -v)}{0}{y}{0}}
\label{eq:p_n_0_xyz}
\end{equation}

By replacing formula (\ref{eq:p_n_0_y}) in formula (\ref{eq:p_n_0_xyz}) we obtain:

\begin{equation}
\begin{aligned}
\PNZ{\alpha}{\beta}{\gamma}{x}{y}{z}
=\!\!\!\!\!\! \underset{u + v \leq x + z}{ \sum}\!\!\!\!\!\! & \Phi_{u,v} 
  \frac{y! {\truth}^{\beta-v} \lie^{y-\beta+v}} 
       {(\alpha-u)!(\beta-(x+z-u-v))!(\gamma-v)!}\\&\times
        \I\Ink{u \leq \alpha} {\!\! x\!\! +\!\! z\!\! -\!\! u\!\! -\!\! v\!\! \leq \!\! \beta\!\!} {v \leq \gamma}
\end{aligned}
\label{eq:p_n_0_x_y_z}
\end{equation}

\begin{proof}
We first need to determine the input and the output producing the higher ratio\footnote{Ignoring the symmetric case where the ratio is minimum.} for two neighboring graphs. In other words we need to find $\mathcal{G}=\{x+1, y, z\}$, $\mathcal{G}'=\{x, y, z+1\}$ and $\Output=\{\alpha, \beta, \gamma \}$
such that $R_{n}^{d} \Inn{\alpha}{\beta}{\gamma} {x}{y}{z}=\frac{P[\Algo(\mathcal{G}) \in \Output]} {P[\Algo(\mathcal{G}') \in \Output)]}$ is maximum.\\

We have: 
\begin{equation*}
R_{n}^{d} \Inn{\alpha}{\beta}{\gamma} {x}{y}{z}
= \frac{ \PND{\alpha}{\beta}{\gamma}{x\!+\!1}{y}{z}} {\PND{\alpha}{\beta}{\gamma}{x}{y}{z\!+\!1}} 
\end{equation*}

We start by developing the numerator: 

\begin{equation*}
    \PND{\alpha}{\beta}{\gamma}{x\!+\!1}{y}{z}=\underset{k_1+k_2= d}{\sum} \PNDum{k_1}{d\!\!-\!\!k_1\!\!-\!\!k_2}{k_2}{\alpha-k_1}{\beta\!\!-\!\!(\!d\!\!-\!\!k_1\!\!-\!\!k_2\!)}{\gamma\!\!-\!\!k_2} \cdot \PNZ{\alpha\!\!-\!\!k_1}{\beta\!\!-\!\!(\!d\!\!-\!\!k_1\!\!-\!\!k_2\!)}{\gamma\!\!-\!\!k_2}{x\!\!+\!\!1}{y}{z}
\end{equation*} 

\begin{align*}
\PND{\alpha}{\beta}{\gamma}{x\!+\!1}{y}{z}=\underset{k_1+k_2= d}{\sum}& \PNDum{k_1}{d\!-\!k_1\!-\!k_2}{k_2}{\alpha-k_1}{\beta\!-\!(\!d\!-\!k_1\!-\!k_2\!)}{\gamma\!-\!k_2}\\
&\cdot \Bigg( \truth\PNZ{\alpha\!-\!k_1\!-\!1}{\beta\!-\!(\!d\!-\!k_1\!-\!k_2\!)}{\gamma\!-\!k_2}{x\!}{y}{z}\\
&+\lie\PNZ{\alpha\!-\!k_1}{\beta\!-\!(\!d\!-\!k_1\!-\!k_2\!)\!-\!1}{\gamma\!-\!k_2}{x\!}{y}{z}\\
&+\lie\PNZ{\alpha\!-\!k_1}{\beta\!-\!(\!d\!-\!k_1\!-\!k_2\!)}{\gamma\!-\!k_2\!-\!1}{x\!}{y}{z}\Bigg)
\end{align*}

We then apply the formula \ref{eq:p_n_0_x_y_z} to each $\mathbb{P}_{n,0}$:

\begin{footnotesize}
\begin{align*}
\PNZ{\alpha\!-\! k_1\!-\!1}{\beta\!-\!(\!d\!-\!k_1\!-\!k_2\!)}{\gamma\!-\!k_2\!}{x}{y}{z}
=
\end{align*}
\begin{align*}
\underset{u + v \leq x  +z}{ \sum}\!\!\!\!\!  &
  \frac{y! {\truth}^{\beta\!-\!(\!d\!-\!k_1\!-\!k_2\!)-v} \lie^{y-\beta\!-\!(\!d\!-\!k_1\!-\!k_2\!)+v}}
       {(\alpha\!-\! k_1\!-\!1-u)!(\beta\!-\!(\!d\!-\!k_1\!-\!k_2\!)\!-\!(x+z-u-v))!(\gamma\!-\!k_2\!-v)!}\\ 
  &\cdot \Phi_{u,v} \cdot \I\Ink{u \leq \alpha\!-\! k_1\!-\!1} {\!\! x\!\! +\!\! z\!\! -\!\! u\!\! -\!\! v\!\! \leq \!\! \beta\!-\!(\!d\!-\!k_1\!-\!k_2\!)\!\!} {v \leq \gamma\!-\!k_2\!} 
\end{align*}
\begin{align*}  
  \PNZ{\alpha\!-\! k_1\!-\!1}{\beta\!-\!(\!d\!-\!k_1\!-\!k_2\!)}{\gamma\!-\!k_2\!}{x}{y}{z}&= \underset{u + v \leq x + z}{ \sum} f\Ink{\alpha,k_1}{\beta,k_2}{\gamma,k_3} (\alpha\!-\!k_1\!-\!u) \I\Ink{\alpha \!-\! k_1\! -\! u\! \geq\! 1}{1}{1}
\end{align*}
\end{footnotesize}
where:
\begin{footnotesize}
\begin{align*} 
 f\Ink{\alpha,k_1}{\beta,k_2}{\gamma,k_3}=& \frac{y!{\truth}^{\beta\!-\!(\!d\!-\!k_1\!-\!k_2\!)-v} \lie^{y-\beta\!-\!(\!d\!-\!k_1\!-\!k_2\!)+v}} {(\alpha\!-\! k_1\!-u)!(\beta\!-\!(\!d\!-\!k_1\!-\!k_2\!)-(x+z-u-v))!(\gamma\!-\!k_2\!-v)!} \\
 &  \Phi_{u,v} \cdot \I\Ink{u \leq \alpha\!-\! k_1\!} {\!\! x\!\! +\!\! z\!\! -\!\! u\!\! -\!\! v\!\! \leq \!\! \beta\!-\!(\!d\!-\!k_1\!-\!k_2\!)\!\!} {v \leq \gamma\!-\!k_2\!} \\
\end{align*}
\end{footnotesize}

In the same way we obtain for the two other $\mathbb{P}_{n,0}$: 

\begin{small}
\begin{align*}
\PNZ{\alpha\!-\! k_1\!}{\beta\!-\!(\!d\!-\!k_1\!-\!k_2\!)-\!1}{\gamma\!-\!k_2\!}{x}{y}{z}
=
\end{align*}
\begin{align*}
\underset{u + v \leq x + z}{ \sum} f\Ink{\alpha,k_1}{\beta,k_2}{\gamma,k_3} 
  \frac{\lie (\beta\!-\!(\!d\!-\!k_1\!-\!k_2\!)-(x+z-u-v))}
       {\truth}\\
  \cdot \I\Ink{1}{(\beta\!-\!(\!d\!-\!k_1\!-\!k_2\!)-(x+z-u-v))-1\geq 0}{1}
\end{align*}

\begin{align*}
\PNZ{\alpha\!-\! k_1\!}{\beta\!-\!(\!d\!-\!k_1\!-\!k_2\!)}{\gamma\!-\!k_2\!-\!1}{x}{y}{z}
=\!\!\!\! \underset{u + v \leq x + z}{ \sum}\!\! f\Ink{\alpha,k_1}{\beta,k_2}{\gamma,k_3} (\gamma-\!k_2-v) \I\Ink{1}{1}{\gamma-\!k_2\!-\!v\!-\!1\!}
\end{align*}
\end{small}

By replacing in the numerator we obtain:
\begin{align*}
    \PND{\alpha}{\beta}{\gamma}{x\!+\!1}{y}{z}&=\underset{u+v\leq x+z}{\underset{k_1+k_2\leq d}{\sum}} \!\!\Omega\!\Ink{\alpha,k_1}{\beta,k_2}{\gamma,k_3} \!\cdot\! 
\Bigg(
    \!\!\truth(\alpha-\!k_1-u) 
       \I\Ink{\!\!\alpha \!-\!\! k_1\!\!-\!\! u\! \geq\! 1\!\!}{1}{1}\\
     &+ \frac{\lie^2 (\beta\!-\!(\!d\!-\!k_1\!-\!k_2\!)-(x+z-u-v))}
            {\truth} \\
         &\cdot \I\Ink{1}{\!\!\!(\!\beta\!\!-\!\!(\!d\!\!-\!\!k_1\!\!-\!\!k_2\!)\!\!-\!\!(\!x\!\!+\!\!z\!\!-\!\!u\!\!-\!\!v))\!\!\geq\!\!1\!\!}{1}\\
    & + \lie(\gamma-\!k_2-v) 
       \I\Ink{1}{1}{\!\!\gamma-\!k_2\!-\!v\!\geq\!1\!\!}
\!\!\Bigg)
\end{align*}

where $ \Omega\Ink{\alpha,k_1}{\beta,k_2}{\gamma,k_3}= \PNDum{k_1}{d\!-\!k_1\!-\!k_2}{k_2}{\alpha-k_1}{\beta\!-\!(\!d\!-\!k_1\!-\!k_2\!)}{\gamma\!-\!k_2} f\Ink{\alpha,k_1}{\beta,k_2}{\gamma,k_3} $.\\

With the same reasoning for the denominator, we obtain:

\begin{small}
\begin{align*}
    \PND{\alpha}{\beta}{\gamma}{x}{y}{z\!+\!1}&=\!\!\!\!\!\!\underset{u+v\leq x+z}{\underset{k_1+k_2\leq d}{\sum}} \!\!\!\!\Omega\!\Ink{\alpha,k_1}{\beta,k_2}{\gamma,k_3} \!\!
\Bigg(
    \!\!\lie(\alpha-\!k_1-u) 
       \I\Ink{\!\!\alpha \!-\!\! k_1\!\!-\!\! u\! \geq\! 1\!\!}{1}{1}\\
     &+ \frac{\lie^2 (\beta\!-\!(\!d\!-\!k_1\!-\!k_2\!)-(x+z-u-v))}
            {\truth}\\
        &\cdot \I\Ink{1}{\!\!\!(\!\beta\!\!-\!\!(\!d\!\!-\!\!k_1\!\!-\!\!k_2\!)\!\!-\!\!(\!x\!\!+\!\!z\!\!-\!\!u\!\!-\!\!v))\!\!\geq\!\!1\!\!}{1}\\
     &+ \truth(\gamma-\!k_2-v) 
       \I\Ink{1}{1}{\!\!\gamma-\!k_2\!-\!v\!\geq\!1\!\!}
\!\!\Bigg)
\end{align*}
\end{small}

The ratio can then be written as:

\begin{align*}
R_{n}^{d} \Inn{\alpha}{\beta}{\gamma} {x}{y}{z}
= \frac{\underset{u+v\leq x+z}{\underset{k_1+k_2\leq d}{\sum}} \!\!\Omega\!\Ink{\alpha,k_1}{\beta,k_2}{\gamma,k_3} \!\cdot\! 
\left(
    \truth \chi_1
     + \chi_2
     + \lie \chi_3
\right)
}
{\underset{u+v\leq x+z}{\underset{k_1+k_2\leq d}{\sum}} \!\!\Omega\!\Ink{\alpha,k_1}{\beta,k_2}{\gamma,k_3} \!\cdot\! 
\left(
    \lie \chi_1
     + \chi_2
     + \truth\chi_3
\right)
}
\end{align*}

where:
$$\chi_1=(\alpha-\!k_1-u) 
       \I\Ink{\!\!\alpha \!-\!\! k_1\!\!-\!\! u\! \geq\! 1\!\!}{1}{1} $$
$$\chi_2= \frac{\lie^2 (\beta\!-\!(\!d\!-\!k_1\!-\!k_2\!)-(x+z-u-v))}
            {\truth} $$
$$\chi_3= (\gamma-\!k_2-v) 
       \I\Ink{1}{1}{\!\!\gamma-\!k_2\!-\!v\!\geq\!1\!\!}$$

As $\truth\gg\lie$, to maximize the ratio, one need to maximize $\chi_1$ and minimize $\chi_3$. On the one hand $\chi_1$ is maximal when $\alpha$ is maximal (i.e. $\alpha=n$, as the maximum one can send to the same target with algorithm $\Abar_\Plie^{n,d}$ is $n$). On the other hand, $\chi_3$ is minimal when $\gamma=0$. Thus, the output producing the higher ratio is $\Output=\{n,d,0\}$.

To further increase the ratio, we need to minimize $k_1$ and $u$. The first one depends on $d$, a fixed parameter of the algorithm we cannot change. The later one, $u$, is varying from 0 to $x+z$ and takes its minimal value when $x+z=0$. We deduce from this that the two inputs producing the higher ratio are $\mathcal{G}=\{1,n-1,0\}$ and $\mathcal{G}'=\{0,n-1,1\}$.

By replacing the new indices in the ratio we obtain:

\begin{footnotesize}
$$ R_{n}^{d} \Inn{n}{d}{0} {0}{\!n\!-\!1\!}{0}
\!=\frac{  \sum_{k=0}^{d}{\binom{d}{k} \binom{n-1}{k} \truth^{k}\lie^{n-k-1} \Big( \truth+ k \cdot  \frac{\lie^{2}}{\truth} \Big)  }}{\sum_{k=0}^{d}{\binom{d}{k} \binom{n-1}{k} \truth^{k}\lie^{n-k-1} \Big( \frac{\sigma}{T-1}+ k \cdot \frac{\lie^{2}}{\truth}  \Big) }}\!=\!e^\epsilon $$
\end{footnotesize}
which leads to the result.

\end{proof}

\section{Details of Numerical Simulation}
\label{app:simu}
The results of our numerical simulation (Figure~\ref{fig:exp}) were obtained as follows.

First, given a number of target $T$, we fix an input computation graph with $n$ messages drawn from uniform and skewed target distributions. Then, two neighboring communication graphs $\mathcal{G}_1$ and $\mathcal{G}_2$ are generated by randomly changing the target of one message.

Second, for fixed parameters ($\Plie$, $d$ and $n$), we run our algorithm many times on the neighboring communication graphs  with different random seeds. 
For each run and each output $\mathcal{O}$ encountered, we count how many times $\mathcal{O}$ appeared for each communication graph:
\begin{align*}
    c_1(\mathcal{O}) &= \{\text{\# times }\mathcal{O}\text{ occurred when the input was }\mathcal{G}_1\},\\
    c_2(\mathcal{O}) &= \{\text{\# times }\mathcal{O}\text{ occurred when the input was }\mathcal{G}_2\},
\end{align*}
which correspond to the red and blue bars shown in Figure~\ref{fig:exp}.
We then compute an estimate of the log-ratio of the probabilities for each output $\mathcal{O}$ as:
$$r(\mathcal{O}) = \ln \Big(\max\Big\{\frac{c_1(\mathcal{O})}{c_2(\mathcal{O})}, \frac{c_2(\mathcal{O})}{c_1(\mathcal{O})}\Big\}\Big),$$
which correspond to the dotted green line in Figure~\ref{fig:exp}.

The total number of runs are set such that nearly all possible outputs are drawn at least once. To make sure that we did enough runs, we applied the Capture-recapture \cite{white1982capture} counting technique used in biology to estimate the size of populations of animals.\riad{une ref serait cool :)}
More precisely, we follow a two-step approach to estimate the percentage of unseen outputs. First, we divide the number of runs into different batches. We execute a batch and record all the different output drawn. Second, we draw another batch and count the number of new outputs (i.e., not seen in the first batch). The percentage of new outputs in the second batch represents the percentage of all possible outputs that have not been seen in any batch.  

\aurelien{On sait calculer le nombre total d'outputs possibles non? donc je vois pas tout à fait pourquoi on a besoin de cela.}

Interestingly, we observed that the choice of input has a negligible impact on the results: although the probability of individual outputs obviously depend on the input, the overall shape of the output probability distribution (and thus the results in Figure~\ref{fig:exp}) remains essentially the same.

\section{Technical Details and Proofs for the Results of Section~\ref{sec:amplification}}
\label{app:shuffling_proof}

In this section, we provide a detailed exposition of our analysis leading to Theorem~\ref{thm:main_amplification} in the main text.
We first introduce some key technical concepts in Section~\ref{app:key}. In Section~\ref{app:generic_sec}, to keep our analysis as general as possible (see Remark~\ref{rem:general}), we first prove $(\epsilon,\delta)$-DP results for a generic local randomizer $\Randomizer$. Finally, in Section~\ref{app:instantiate}, we apply our general result to our specific context.

For notational convenience, for any integer $n\geq 1$, we will denote the set $\{1,\dots,n\}$ by $[n]$.

\subsection{Key Concepts}
\label{app:key}

In this section, we review some key concepts from Balle et al. \cite{Balle2019} that we need to prove our results.

\paragraph{Decomposition of local randomizers.}
Let $\mathcal{X}$ and $\mathcal{Y}$ be some input and output domains respectively.
Let $\Randomizer:\mathcal{X}\rightarrow\mathcal{Y}$ be a \emph{local randomizer} taking an input $x\in\mathcal{X}$ and returning a randomized output $y\in\mathcal{Y}$.
The \emph{total variation similarity} $\gamma_\Randomizer$ of a local randomizer $\Randomizer$ measures the probability that $\Randomizer$ produces an output which is independent from its input. When this happens, the output is sampled from some distribution $\omega_\Randomizer$, which is called the \emph{blanket distribution} of $\Randomizer$. When it is clear from the context, we drop the subscript and simply write $\gamma$ and $\omega$.

We will leverage a decomposition of $\Randomizer$ as a mixture between an input-dependent and input-independent mechanism. Specifically, denoting by $\mu_x$ the output distribution of $\Randomizer(x)$, we write $\mu_x=(1-\gamma)\upsilon_x + \gamma\omega$. For a particular $\Randomizer:\mathcal{X}\rightarrow\mathcal{Y}$, the largest possible $\gamma$ is given by $\gamma=\int \inf_x \mu_x(y)dy$ and the corresponding blanket distribution $\omega$ is given by $\omega(y)=\inf_x \mu_x(y)/\gamma$. Balle et al. \cite{Balle2019} show that $\gamma\geq e^{-\epsilon_0}$ for any $\epsilon_0$-DP local randomizer, but it is possible to compute the exact value of $\gamma$ for common local randomizers (see Lemma~5.1 in \cite{Balle2019}).

We can illustrate these concepts on $\Randomizer_\Plie$, the local randomizer used in our algorithm. Since $\Randomizer_\Plie$ only randomizes the target, we can abstract away the source node and we have $\mathcal{X}=\mathcal{Y}=\Target$, $\gamma_{\Randomizer_\Plie}=\sigma$, $\upsilon_{\Randomizer_\Plie, t}(t') = \I[t = t']$ and $\omega_{\Randomizer_\Plie}(t') = 1/k$ for all $t'\in\Target$.

\paragraph{Hockey-stick divergence.} Differential privacy can be conveniently expressed as a divergence between distributions. Divergences come with known results and properties that provide useful technical tools to derive differential privacy guarantees. Below, we will use the characterization of  $(\epsilon,\delta)$-DP based on the so-called \emph{hockey-stick divergence}. Precisely, the hockey-stick divergence of order $e^\epsilon$ between distributions $\mu$ and $\mu'$ is defined as:
$$\Div_{e^\epsilon}(\mu||\mu') = \int [\mu(y)-e^\epsilon\mu'(y)]_+dy,$$
where $[\cdot]_+=\max(0,\cdot)$. The following lemma from \cite{Balle2019} shows the direct connection to $(\epsilon,\delta)$-DP.

\begin{lemma}
An algorithm $\mathcal{A}:\mathcal{X}^n\rightarrow \mathcal{Y}^m$ is $(\epsilon,\delta)$-DP if and only if $\Div_{e^\epsilon}(\mathcal{A}(\Dataset)||\mathcal{A}(\Dataset'))\leq \delta$ for any neigboring datasets $\Dataset=\{x_1,\dots,x_{n-1},x_n\}$ and $\Dataset'=\{x_1,\dots,x_{n-1},x'_n\}$.
\end{lemma}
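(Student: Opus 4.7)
The plan is to prove both directions by recognizing that the hockey-stick divergence $\Div_{e^\epsilon}(\mu\|\mu')$ is exactly the supremum of $\mu(S) - e^\epsilon \mu'(S)$ over measurable sets $S$. Once this is established, the equivalence with the standard $(\epsilon,\delta)$-DP condition $P[\Algo(\Dataset)\in S] \leq e^\epsilon P[\Algo(\Dataset')\in S] + \delta$ becomes immediate by rearrangement. Concretely, I would first rewrite the DP inequality as $\mu(S) - e^\epsilon \mu'(S) \leq \delta$ (with $\mu = \Algo(\Dataset)$ and $\mu' = \Algo(\Dataset')$) and note that the DP condition requires this to hold \emph{uniformly} over all measurable $S$, i.e., $\sup_S \bigl(\mu(S) - e^\epsilon \mu'(S)\bigr) \leq \delta$.

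The core step is then the identity $\sup_S \bigl(\mu(S) - e^\epsilon \mu'(S)\bigr) = \Div_{e^\epsilon}(\mu\|\mu')$. To see this, I would introduce the optimal ``likelihood-ratio'' set $S^* = \{y : \mu(y) > e^\epsilon \mu'(y)\}$. On $S^*$ the integrand $\mu(y) - e^\epsilon \mu'(y)$ is positive and equals $[\mu(y)-e^\epsilon\mu'(y)]_+$; off $S^*$ it is non-positive and $[\mu(y)-e^\epsilon\mu'(y)]_+=0$. Thus
\[
  \mu(S^*) - e^\epsilon \mu'(S^*) = \int_{S^*} (\mu(y)-e^\epsilon\mu'(y))\,dy = \int [\mu(y)-e^\epsilon\mu'(y)]_+\,dy = \Div_{e^\epsilon}(\mu\|\mu').
\]
For any other measurable $S$, I bound $\mu(S) - e^\epsilon \mu'(S) = \int_S (\mu-e^\epsilon\mu') \leq \int_S [\mu-e^\epsilon\mu']_+ \leq \Div_{e^\epsilon}(\mu\|\mu')$, showing $S^*$ is indeed a maximizer.

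With this identity in hand, both implications are straightforward. For the forward direction, assume $(\epsilon,\delta)$-DP: plugging $S = S^*$ into the DP inequality gives $\Div_{e^\epsilon}(\mu\|\mu') = \mu(S^*) - e^\epsilon \mu'(S^*) \leq \delta$. For the converse, assume $\Div_{e^\epsilon}(\mu\|\mu') \leq \delta$: then for \emph{any} measurable $S$, the chain of inequalities above yields $\mu(S) - e^\epsilon \mu'(S) \leq \Div_{e^\epsilon}(\mu\|\mu') \leq \delta$, which rearranges exactly to the $(\epsilon,\delta)$-DP condition. The argument must be stated for arbitrary neighboring $\Dataset, \Dataset'$ but since the identification of $\Div_{e^\epsilon}$ with the worst-case distinguishing advantage holds distribution-wise, quantifying over pairs is immediate.

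There is no real obstacle here; the only care needed is to handle both discrete and continuous output spaces uniformly, which is achieved by writing the integral with respect to a dominating measure (e.g., $\mu + \mu'$) so that densities $\mu(y), \mu'(y)$ are well-defined, and by using measurable sets throughout. This is purely notational and does not affect the argument. I would present the proof as a short two-paragraph argument: one establishing the sup characterization of $\Div_{e^\epsilon}$, and one deducing the equivalence in both directions.
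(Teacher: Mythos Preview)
Your argument is correct and is the standard proof of this characterization: identify $\Div_{e^\epsilon}(\mu\|\mu')$ with $\sup_S(\mu(S)-e^\epsilon\mu'(S))$ via the optimal set $S^*=\{y:\mu(y)>e^\epsilon\mu'(y)\}$, then read off both implications. The paper does not actually prove this lemma---it simply cites it from Balle et al.~\cite{Balle2019}---so there is nothing to compare against, but your proof is exactly what one would supply to make the paper self-contained.
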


\subsection{Privacy Guarantee for a Generic Local Randomizer}
\label{app:generic_sec}

Let $\epsilon,\epsilon_0\geq0$, $\delta\in(0,1)$ and $d\in\mathbb{N}$. In this section, we prove an $(\epsilon,\delta)$-DP result for algorithms $\mathcal{A}:\mathcal{X}^n\rightarrow \mathcal{Y}^{n+d}$ of the form $\mathcal{A}=\mathcal{S}_{\Randomizer,d}\circ \Randomizer^n$, where:
\begin{itemize}
    \item $\Randomizer^n:\mathcal{X}^n\rightarrow\mathcal{Y}^n$ is such that $$\Randomizer^n(x_1,\dots,x_n)=(\Randomizer(x_1),\dots,\Randomizer(x_n))$$
    where $\Randomizer:\mathcal{X}\rightarrow\mathcal{Y}$ is an arbitrary local randomizer satisfying $\epsilon_0$-DP.
    \item $\mathcal{S}_{\Randomizer,d}:\mathcal{Y}^n\rightarrow\mathcal{Y}^{n+d}$ randomly samples $d$ ``dummy'' messages from the blanket distribution $\omega_\Randomizer$ and shuffles (i.e., applies a random permutation to it) the multiset composed of the $n$ input messages and the $d$ dummy messages.
\end{itemize}
Note that $\mathcal{S}_{\Randomizer,0}$ (no dummy) corresponds to a standard shuffler: this is the setting covered by recent results on privacy amplification by shuffling, in particular those of \cite{Balle2019}.
The purpose of this section is to extend these results to account for the use of dummy messages, i.e., when $d>0$.

\paragraph{Step 1: Bounding the divergence in terms of i.i.d. random variables.}
Let $\Dataset=\{x_1,\dots,x_{n-1},x_n\}$ and $\Dataset'=\{x_1,\dots,x_{n-1},x'_n\}$ be two neighboring datasets that differ only in their last points $x_n$ and $x_n'$.
The key technical step of the proof is to bound the divergence $\Div_{e^\epsilon}(\mathcal{A}(\Dataset)||\mathcal{A}(\Dataset'))$ in terms of a sum of i.i.d. realizations of a ``privacy amplification'' random variable $L_\epsilon^{x_n,x_n'}$ defined as:
\begin{equation}
    \label{eq:amp_rv}
    L_\epsilon^{x_n,x_n'} = \frac{\mu_{x_n}(W) - e^\epsilon\mu_{x_n'}(W)}{\omega(W)},
\end{equation}
where $W\sim \omega$. We have the following result, which is the analog to Lemma~5.3 of \cite{Balle2019} for the case with dummies.

\begin{lemma}
\label{lem:2}
Let $\epsilon>0$ and let $\Dataset=\{x_1,\dots,x_{n-1},x_n\}$ and $\Dataset'=\{x_1,\dots,x_{n-1},x'_n\}$ two neighboring datasets with $x_n\neq x'_n$. We have:
\begin{align*}
&\Div_{e^\epsilon}(\mathcal{A}(\Dataset)||\mathcal{A}(\Dataset'))\\ 
&\leq
\sum_{m=0}^{n-1} \frac{1}{m+d+1}C^{n-1}_{m}\gamma^{m}(1-\gamma)^{n-1-m} \mathbb{E}\left[\sum_{i=1}^{m+d+1} L_i \right]_+\\
&= \frac{1}{\gamma n}\sum_{m=1}^{n} \frac{m}{m+d}C^{n}_{m}\gamma^{m}(1-\gamma)^{n-m} \mathbb{E}\left[\sum_{i=1}^{m+d} L_i \right]_+,
\end{align*}
where $L_1,\dots,L_{m+d}$ are i.i.d. copies of $L_\epsilon^{x_n,x'_n}$.
\end{lemma}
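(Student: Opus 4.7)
The plan is to extend the amplification-by-shuffling argument of \cite{Balle2019} by observing that the $d$ dummy blanket draws injected by $\mathcal{S}_{\Randomizer,d}$ are distributionally indistinguishable from the blanket component of the $n-1$ common inputs, and hence simply raise the effective count of ``input-independent'' positions from $m$ to $m+d$. First I would invoke the exact mixture decomposition $\mu_{x_i}=(1-\gamma)\upsilon_{x_i}+\gamma\omega$ for each $i\in\{1,\dots,n-1\}$ and introduce latent Bernoulli$(\gamma)$ indicators $B_i$, where $B_i=1$ means $\Randomizer(x_i)$ was drawn from the blanket $\omega$ and $B_i=0$ means it was drawn from $\upsilon_{x_i}$. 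Set $m=\sum_{i=1}^{n-1}B_i\sim\mathrm{Binomial}(n-1,\gamma)$. This sampling rule is identical under $\Dataset$ and $\Dataset'$ because the two datasets share $x_1,\dots,x_{n-1}$.

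Next I would condition on $B$ and use joint convexity of the hockey-stick divergence to get
\[
\Div_{e^\epsilon}(\mathcal{A}(\Dataset)\|\mathcal{A}(\Dataset')) \le \mathbb{E}_B\,\Div_{e^\epsilon}\!\big(\mathcal{A}(\Dataset)\!\mid\! B\,\big\|\,\mathcal{A}(\Dataset')\!\mid\! B\big).
\]
For the conditional divergence, I would reorganize the sampling as (i) drawing the $m+d$ blanket/dummy samples i.i.d.\ from $\omega$, (ii) drawing one sample from $\mu_{x_n}$ (resp.\ $\mu_{x'_n}$), (iii) shuffling these $m+d+1$ items, and (iv) merging in the $n-1-m$ $\upsilon$-samples via a final uniform shuffle. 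Step (iv) is a dataset-independent randomized transformation because the joint distribution of the $\upsilon$-samples depends only on $x_1,\dots,x_{n-1}$; by data processing it cannot inflate the divergence. So the conditional divergence is bounded by the divergence of the $(m+d+1)$-item shuffled multiset; writing the density of that multiset as a uniform mixture over the $m+d+1$ candidate positions of the differing draw and pulling $\prod_i\omega(y_i)$ out of the definition of $\Div_{e^\epsilon}$ yields the exact identity
\[
\Div_{e^\epsilon}\!\big(\text{preshuffle}_\Dataset\,\big\|\,\text{preshuffle}_{\Dataset'}\big) = \frac{1}{m+d+1}\,\mathbb{E}\!\left[\sum_{i=1}^{m+d+1}L_i\right]_+\!,
\]
where the $L_i$ are i.i.d.\ copies of $L_\epsilon^{x_n,x'_n}$.

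Averaging over $m\sim\mathrm{Binomial}(n-1,\gamma)$ then produces the first stated bound. The equivalent second form follows from the substitution $m\mapsto m+1$ combined with the combinatorial identity $\binom{n-1}{m-1}=\frac{m}{n}\binom{n}{m}$ and a brief rearrangement of $\gamma$ factors. The main obstacle I anticipate is step (iv) of the reorganization: one must check that, conditional on $B$, the $\upsilon$-samples are mutually independent of the blanket/dummy draws and of the differing draw, and that the final shuffle is dataset-independent, so that merging the $\upsilon$-samples truly amounts to a data processing step. Once this is justified, the remainder of the argument is a direct extension of the computation in \cite{Balle2019}, with $m+d$ playing the role that $m$ plays there.
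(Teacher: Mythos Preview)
Your proposal is correct and follows essentially the same route as the paper's proof: condition on the blanket indicators of the $n-1$ shared inputs via joint convexity, drop the $\upsilon$-samples by a data-processing step (the paper invokes Lemma~A.2 of \cite{Balle2019} for this), then compute the divergence of the shuffled $(m+d+1)$-element multiset exactly as a mixture over positions, and finally average over $m\sim\mathrm{Binomial}(n-1,\gamma)$ and reindex. The only cosmetic difference is that you phrase the removal of the $\upsilon$-samples as ``merging them in afterwards is a dataset-independent channel,'' whereas the paper phrases it as ``ignoring them can only increase the divergence''; these are two sides of the same data-processing inequality.
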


\begin{proof}
Recall that $\mathcal{A}=\mathcal{S}_{\Randomizer,d}\circ \Randomizer^n$. 
For a fixed input dataset $\Dataset=\{x_1,\dots,x_{n-1},x_n\}$, we define the random variable $Y_i\sim\mu_{x_i}$ for $i\in[n]$, where $\mu_x=(1-\gamma)\upsilon_x + \gamma\omega$ is the output distribution of $\Randomizer(x)$ as defined in Section~\ref{app:key}. For $i\in[d]$, we also define the random variable $Z_i\sim\omega$ . Using these notations, the output of $\mathcal{A}(\mathcal{D})$ can be seen as a realization of the random multiset $\mathcal{O}=\{Y_1,\dots,Y_n,Z_1,\dots,Z_d\}\in\mathbb{N}_{n+d}^\mathcal{Y}$, where $\mathbb{N}_{n+d}^\mathcal{Y}$ denotes the collection of all multisets of size $n+d$ of elements of $\mathcal{Y}$. Similarly, for a neighboring dataset $\Dataset'=\{x_1,\dots,x_{n-1},x'_n\}$, the output of $\mathcal{A}(\mathcal{D}')$ is a realization of the random multiset $\mathcal{O}'=\{Y_1,\dots,Y'_n,Z_1,\dots,Z_d\}$. Our goal is thus to bound $\Div_{e
^\epsilon}(\mathcal{O}||\mathcal{O}')$, where we use a slight abuse of notation by applying the divergence to random variables rather than distributions.

To exploit the mixture decomposition $\mu_x$, we define additional random variables. Let $V_i\sim\upsilon_{x_i}$ and $W_i\sim \omega$ for $i\in[n-1]$. Hence we have:
$$\def\arraystretch{1.2}
Y_i = \left\{ \begin{array}{ll} V_i & \text{with probability }1-\gamma,\\ W_i & \text{with probability }\gamma.\end{array} \right.$$
Finally, we define $\mathcal{B}\subseteq [n-1]$ to be the random subset of inputs among the first $n-1$ who sampled from the blanket, and let $\bar{\mathcal{B}} = [n-1] \setminus \mathcal{B}$. Note that for any $B\subseteq [n-1]$ we have $P[\mathcal{B}=B] = \gamma^{|B|}(1-\gamma)^{n-1-|B|}$.
With these notations, conditioned on a particular $B$, we have
$$\mathcal{O}|\{\mathcal{B}=B\} = \mathcal{W}_B \cup \mathcal{V}_{\bar{B}} \cup \mathcal{Z}_d \cup \{Y_n\},$$
where $\mathcal{W}_B = \{W_i\}_{i\in B}$, $\mathcal{V}_{\bar{B}} = \{V_i\}_{i\in[n-1]\setminus B}$ and $\mathcal{Z}_d = \{Z_i\}_{i=1}^d$.

By standard properties of the hockey-stick divergence (see Lemma A.1 in \cite{Balle2019}), we have:
\begin{equation}
\label{eq:divbound1}
\Div_{e^\epsilon}(\mathcal{O}||\mathcal{O}') \leq \sum_{B\subseteq [n-1]} \gamma^{|B|}(1-\gamma)^{n-1-|B|}\Div_{B,d}^{(1)}
\end{equation}
where $\Div_{B,d}^{(1)}=\Div_{e^\epsilon}(\mathcal{W}_B \cup \mathcal{V}_{\bar{B}} \cup \mathcal{Z}_d \cup \{Y_n\}||\mathcal{W}_B \cup \mathcal{V}_{\bar{B}} \cup \mathcal{Z}_d \cup \{Y'_n\})$.

By applying Lemma A.2 from \cite{Balle2019}, we further show that we can ignore the contributions of the first $n-1$ inputs who did not sample from the blanket. Precisely:
\begin{equation}
\label{eq:dbbound}
\Div_{B,d}^{(1)} \leq \Div_{e^\epsilon}(\mathcal{W}_B \cup \mathcal{Z}_d \cup \{Y_n\}||\mathcal{W}_B \cup \mathcal{Z}_d \cup \{Y'_n\}).
\end{equation}

Since the $W_i$'s are i.i.d., the distribution of $\mathcal{W}_B$ depends on $B$ only through its cardinality $m=|B|$. We thus define $\mathcal{W}_m=\{W_1,\dots,W_{m}\}$ for any $m\in[n-1]$, with $\mathcal{W}_0=\emptyset$. Rewriting \eqref{eq:divbound1} and \eqref{eq:dbbound}, we have shown that:
\begin{equation}
\label{eq:divbound2}
\Div_{e^\epsilon}(\mathcal{O}||\mathcal{O}') \leq \sum_{m=0}^{n-1} C^{n-1}_{m}\gamma^{m}(1-\gamma)^{n-1-m} \Div_{B,d}^{(2)}
\end{equation}
where $\Div_{B,d}^{(2)}=\Div_{e^\epsilon}(\mathcal{W}_m \cup \mathcal{Z}_d \cup \{Y_n\}||\mathcal{W}_m \cup \mathcal{Z}_d \cup \{Y'_n\})$.

We now upper bound the right-hand side of \eqref{eq:divbound2} in terms of the privacy amplification variables $L_1,\dots,L_{m+d}$ arising from the $m$ inputs who sampled from the blanket and the $d$ dummy messages.

Let $y\in\mathcal{Y}^{m+d}$ be a tuple of elements from $\mathcal{Y}$ and $Y\in\mathbb{N}_{m+d}^\mathcal{Y}$ be the corresponding multiset. We have:
\begin{align*}
& P[\mathcal{W}_{m-1} \cup \mathcal{Z}_d \cup \{Y_n\}=Y]\\
&= \frac{1}{(m+d)!}\sum_\tau P[(W_1,\dots,W_{m-1},Y_n,Z_1,\dots,Z_d)=y_\tau],    
\end{align*}
where $\tau$ ranges over all permutations of $\{1,\dots,m+d\}$ and we write $y_\tau=(y_{\tau(1)},\dots,y_{\tau(m+d)})$. Since $W_i\sim \omega$, $Y_n\sim \mu_{x_n}$ and $Z\sim \omega$, we have:
\begin{align*}
& P[(W_1,\dots,W_{m-1},Y_n,Z_1,\dots,Z_d)=y_\tau]\\
&=\omega(y_{\tau(1)})\dots\omega(y_{\tau(m-1)})\mu_{x_n}(y_{\tau(m)})\omega(y_{\tau(m+1)})\dots\omega(y_{\tau(m+d)}).
\end{align*}
Summing this expression over all permutations $\tau$ and factoring out $P[\mathcal{W}_{m} \cup \mathcal{Z}_d=Y]$ gives:
\begin{align*}
   &\frac{1}{(m+d)!} \sum_\tau \big(\omega(y_{\tau(1)})\dots\omega(y_{\tau(m-1)})\mu_{x_n}(y_{\tau(m)})\\
   &\quad\qquad\qquad\ \ \times\omega(y_{\tau(m+1)})\dots\omega(y_{\tau(m+d)})\\
   &\quad = \left(\prod_{i=1}^{m+d}\omega(y_i)\right)\frac{1}{m+d}\sum_{i=1}^{m+d} \frac{\mu_{x_n}(y_i)}{\omega(y_i)}\\
   &\quad = P[\mathcal{W}_m\cup \mathcal{Z}_d=Y]\frac{1}{m+d}\sum_{i=1}^{m+d} \frac{\mu_{x_n}(y_i)}{\omega(y_i)}.
\end{align*}

Plugging this in the definition of $\Div_{e^\epsilon}$, we get:
\begin{align*}
    &\Div_{e^\epsilon}(\mathcal{W}_{m-1} \cup \mathcal{Z}_d \cup \{Y_n\}||\mathcal{W}_{m-1} \cup \mathcal{Z}_d \cup \{Y'_n\})\\
    & = \int_{\mathbb{N}_{m}^\mathcal{Y}} \big[ P[\mathcal{W}_{m-1} \cup \mathcal{Z}_d \cup \{Y_n\}=Y]\\
    & \quad\qquad\ \  - e^\epsilon P[\mathcal{W}_{m-1} \cup \mathcal{Z}_d \cup \{Y'_n\}=Y] \big]_+ dY\\
    & = \int_{\mathbb{N}_{m}^\mathcal{Y}} P[\mathcal{W}_m\cup \mathcal{Z}_d=Y]\left[ \frac{1}{m+d}\sum_{i=1}^{m+d} \frac{\mu_{x_n}(y_i) - e^\epsilon\mu_{x'_n}(y_i)}{\omega(y_i)} \right]_+ dY\\
    & = \mathbb{E}\left[ \frac{1}{m+d}\sum_{i=1}^{m+d} \frac{\mu_{x_n}(y_i) - e^\epsilon\mu_{x'_n}(y_i)}{\omega(y_i)} \right]_+
    = \mathbb{E}\left[ \frac{1}{m+d}\sum_{i=1}^{m+d} L_i \right]_+.
\end{align*}

Plugging this into \eqref{eq:divbound2} completes the proof.
\end{proof}

\paragraph{Step 2: Bounding the sum of privacy amplification variables.}
To control the term $\mathbb{E}[\sum_{i=1}^{m+d} L_i ]_+$ in Lemma~\ref{lem:2}, we need to resort to concentration inequalities for sums of i.i.d. random variables.
We can trivially adapt Lemma 5.5 (based on Hoeffding's inequality) and Lemma 5.6 (based on Bennett's inequality) from \cite{Balle2019} by replacing $m$ by $m+d$.

\begin{lemma}
\label{lem:3}
Let $L_1,\dots,L_{m+d}$ be i.i.d. bounded random variables with $\mathbb{E}[L_i]=-a \leq 0$. Suppose that $b_- \leq L_i \leq b_+$ and let $b=b_+-b_-$. Then, by Hoeffding's inequality, the following holds:
$$\mathbb{E}\left[ \sum_{i=1}^{m+d} L_i \right]_+ \leq \frac{b^2}{4a}e^{-\frac{2(m+d)a^2}{b^2}}.$$
If furthermore we have $\mathbb{E}[L_i^2] \leq c$, then by Bennett's inequality:
$$\mathbb{E}\left[ \sum_{i=1}^{m+d} L_i \right]_+ \leq \frac{b_+}{a(m+d)\log(1+\frac{ab_+}{c})}e^{-\frac{(m+d)c}{b_+^2}\phi(\frac{ab_+}{c})},$$
where $\phi(u) = (1+u)\log(1+u) - u$.
\end{lemma}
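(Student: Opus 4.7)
The plan is to observe that, since the $L_i$'s are i.i.d.\ and bounded with $\mathbb{E}[L_i]=-a \leq 0$, this lemma is the direct adaptation of Lemmas~5.5 and 5.6 of Balle et al.~\cite{Balle2019}, obtained by replacing the number of summands $m$ in their statements by $m+d$. The only quantities that matter in their proofs are the number of summands, the per-summand bounds $b_-, b_+$, the mean $-a$, and (for Bennett) the second moment bound $c$, all of which appear identically in our setting. So the core work is to verify that the integration of the tail bound goes through with the same calculation.

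For the Hoeffding case, I would start from the standard identity
\[
\mathbb{E}\Bigl[\sum_{i=1}^{m+d} L_i\Bigr]_+ = \int_0^\infty P\Bigl(\sum_{i=1}^{m+d} L_i \geq t\Bigr)\,dt,
\]
and then, using that $\mathbb{E}[\sum_i L_i] = -(m+d)a$, rewrite $\{\sum_i L_i \geq t\}$ as a deviation event $\{\sum_i L_i - \mathbb{E}[\sum_i L_i] \geq t + (m+d)a\}$. Applying Hoeffding's inequality (with range $b = b_+ - b_-$ per variable) gives
\[
P\Bigl(\sum_{i=1}^{m+d} L_i \geq t\Bigr) \leq \exp\!\left(-\frac{2(t+(m+d)a)^2}{(m+d)b^2}\right).
\]
After the change of variables $u = t+(m+d)a$, the tail integral becomes $\int_{(m+d)a}^\infty \exp(-2u^2/((m+d)b^2))\,du$, which I bound using the elementary Gaussian estimate $\int_x^\infty e^{-\alpha u^2}\,du \leq \tfrac{1}{2\alpha x}e^{-\alpha x^2}$ (valid for $x>0$). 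With $\alpha = 2/((m+d)b^2)$ and $x=(m+d)a$, the prefactor $\tfrac{1}{2\alpha x}$ simplifies exactly to $\tfrac{b^2}{4a}$, and the exponent becomes $-2(m+d)a^2/b^2$, yielding the claimed Hoeffding bound.

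For the Bennett case, the structure is identical but uses Bennett's inequality, which additionally exploits $\mathbb{E}[L_i^2]\leq c$. Applied to the centered sum $\sum_i(L_i + a)$ with per-variable upper bound (after centering) of the order $b_+$, Bennett yields a tail bound of the form $\exp(-\tfrac{(m+d)c}{b_+^2}\phi(\cdot))$ with $\phi(u)=(1+u)\log(1+u)-u$. The hardest step is integrating this bound in $t$: I would do it by the same substitution $u = t+(m+d)a$ and then use convexity of $\phi$ (in particular, that $\phi(u)/u$ is nondecreasing) to factor out the value $\phi(ab_+/c)$ at the ``boundary'' deviation $u=(m+d)a$, leaving an elementary exponential integral that produces the prefactor $\tfrac{b_+}{a(m+d)\log(1+ab_+/c)}$.

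The main obstacle is the Bennett tail integral, since the Gaussian trick used for Hoeffding does not apply verbatim; it requires the convexity-based manipulation sketched above to extract a clean closed-form prefactor. However, this calculation is exactly the one performed in Lemma~5.6 of \cite{Balle2019} and transfers to our setting by substituting $m \mapsto m+d$, so there is no additional subtlety introduced by the dummies.
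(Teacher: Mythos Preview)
Your proposal is correct and matches the paper's own treatment exactly: the paper simply states that Lemma~\ref{lem:3} is obtained by ``trivially adapt[ing] Lemma 5.5 (based on Hoeffding's inequality) and Lemma 5.6 (based on Bennett's inequality) from \cite{Balle2019} by replacing $m$ by $m+d$,'' without giving any further detail. Your sketch of the tail-integral argument is in fact more explicit than what the paper provides.
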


Bounding the sum of privacy amplification random variables with the above lemma requires the knowledge of their expected value as well as bounds on the values they can take. 
Bennett's inequality further requires a bound on the second moment: this can lead to a tighter (albeit more complex) bound compared to Hoeffding's.
Balle et al. \cite{Balle2019} provide such bounds that hold for any local randomizer $\Randomizer$ satisfying $\epsilon_0$-DP, as we recall below.

\begin{lemma}
\label{lem:4}
Let $\Randomizer:\mathcal{X}\rightarrow\mathcal{Y}$ be an $\epsilon_0$-DP local randomizer with total variation similarity $\gamma$. For any $\epsilon\geq 0$ and any $x,x'\in\mathcal{X}$, the privacy amplification variable $L=L_\epsilon^{x,x'}$ satisfies the following:
\begin{enumerate}
    \item $\mathbb{E}[L] = 1-e^\epsilon$,
    \item $\gamma e^{-\epsilon_0}(1-e^{\epsilon+2\epsilon_0}) \leq L \leq \gamma e^{\epsilon_0}(1-e^{\epsilon-2\epsilon_0})$,
    \item $\mathbb{E}[L^2] \leq \gamma e^{\epsilon_0} (e^{2\epsilon} + 1) - 2\gamma^2e^{\epsilon-2\epsilon_0}$.
\end{enumerate}
\end{lemma}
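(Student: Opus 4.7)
The plan is to exploit the two-sided ratio bounds $r_x(W) := \mu_x(W)/\omega(W) \in [e^{-\epsilon_0}\gamma,\, e^{\epsilon_0}\gamma]$ that follow from the blanket decomposition together with $\epsilon_0$-DP. Once these bounds are in place, all three items reduce to integrating polynomial expressions in $r_x$ and $r_{x'}$ against $\omega$. Observe that by definition $\gamma\omega(y) = \inf_z \mu_z(y)$, so $\mu_x(y) \geq \gamma\omega(y)$, hence $r_x(y) \geq \gamma \geq e^{-\epsilon_0}\gamma$. Applying $\epsilon_0$-DP pointwise, $\mu_z(y) \geq e^{-\epsilon_0}\mu_x(y)$ for every $z$, and taking the infimum over $z$ yields $\gamma\omega(y) \geq e^{-\epsilon_0}\mu_x(y)$, i.e.\ $r_x(y) \leq e^{\epsilon_0}\gamma$. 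The same bounds hold for $r_{x'}$. These symmetric bounds, slightly weaker than the sharper $[\gamma, e^{\epsilon_0}\gamma]$, are what produces the $\pm 2\epsilon_0$ terms in the stated inequalities.

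For item (1), write $L = r_x(W) - e^\epsilon r_{x'}(W)$ and take expectation under $W\sim\omega$; since $\int \omega(y)\cdot\mu_x(y)/\omega(y)\,dy = \int \mu_x(y)\,dy = 1$, linearity gives $\mathbb{E}[L] = 1 - e^\epsilon$. For item (2), bound $L$ pointwise: the upper bound uses $r_x(W) \leq e^{\epsilon_0}\gamma$ together with $r_{x'}(W) \geq e^{-\epsilon_0}\gamma$, giving $L \leq e^{\epsilon_0}\gamma - e^{\epsilon-\epsilon_0}\gamma = \gamma e^{\epsilon_0}(1 - e^{\epsilon-2\epsilon_0})$, and the lower bound is symmetric via $r_x \geq e^{-\epsilon_0}\gamma$ and $r_{x'} \leq e^{\epsilon_0}\gamma$.

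For item (3), the approach is to expand
\begin{equation*}
    \mathbb{E}[L^2] = \int \frac{\mu_x^2}{\omega}\,dy - 2e^\epsilon \int \frac{\mu_x \mu_{x'}}{\omega}\,dy + e^{2\epsilon} \int \frac{\mu_{x'}^2}{\omega}\,dy
\end{equation*}
and treat each term separately. For the squared terms, use $\mu_x/\omega \leq e^{\epsilon_0}\gamma$ to obtain $\int \mu_x^2/\omega\,dy \leq e^{\epsilon_0}\gamma \int \mu_x\,dy = e^{\epsilon_0}\gamma$, and similarly for $\mu_{x'}$. For the cross term, which enters with a negative sign and therefore requires a \emph{lower} bound, use $\mu_x/\omega \geq e^{-\epsilon_0}\gamma$ and $\mu_{x'}/\omega \geq e^{-\epsilon_0}\gamma$ pointwise so that $\mu_x(y)\mu_{x'}(y)/\omega(y) \geq e^{-2\epsilon_0}\gamma^2\,\omega(y)$, integrating to $e^{-2\epsilon_0}\gamma^2$. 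Assembling the three bounds yields $\mathbb{E}[L^2] \leq \gamma e^{\epsilon_0}(1 + e^{2\epsilon}) - 2\gamma^2 e^{\epsilon - 2\epsilon_0}$, as claimed.

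The main subtlety is ensuring the correct directions of inequalities in item (3): the cross term has a negative coefficient, so a naive application of upper bounds there would flip the sign and give an invalid inequality. Using the pointwise lower bound $r_x r_{x'} \geq e^{-2\epsilon_0}\gamma^2$ (rather than a Cauchy--Schwarz-type manipulation) is the cleanest route and preserves the symmetric $\pm 2\epsilon_0$ form that propagates to the analytical result of Theorem~\ref{thm:main_amplification}.
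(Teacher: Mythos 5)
Your proof is correct. The paper does not actually prove this lemma itself --- it recalls it from Balle et al.~\cite{Balle2019} --- and your derivation (the two-sided bounds $e^{-\epsilon_0}\gamma \le \mu_x(y)/\omega(y) \le e^{\epsilon_0}\gamma$ obtained from $\gamma\omega(y)=\inf_z\mu_z(y)$ together with pointwise $\epsilon_0$-DP, followed by term-by-term integration, with the cross term in $\mathbb{E}[L^2]$ correctly \emph{lower}-bounded to respect its negative coefficient) is precisely the standard argument underlying that cited result.
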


\paragraph{Step 3: Putting everything together.}
Based on the intermediate results above, we can obtain an $(\epsilon,\delta)$-DP guarantee that holds for any $\epsilon_0$-DP local randomizer. We illustrate this using the simpler Hoeffding's inequality in Lemma~\ref{lem:3}.

\begin{theorem}
Let $\Randomizer:\mathcal{X}\rightarrow\mathcal{Y}$ be an $\epsilon_0$-DP local randomizer with total variation similarity $\gamma$, and $d\geq 0$. The algorithm $\mathcal{A}=\mathcal{S}_{\Randomizer,d}\circ\Randomizer^n$ is $(\epsilon,\delta)$-DP for any $\epsilon$ and $\delta$ satisfying
$$\frac{1}{\gamma n}\sum_{m=1}^{n} \frac{m}{m+d}C^{n}_{m}\gamma^{m}(1-\gamma)^{n-m} \frac{b^2}{4a}e^{-\frac{2(m+d)a^2}{b^2}}\leq \delta,$$
where $a=1-e^\epsilon$ and $b = \gamma(1+e^\epsilon)(e^{\epsilon_0} - e^{-\epsilon_0})$.
\end{theorem}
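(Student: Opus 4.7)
The plan is to combine the three preceding lemmas with the hockey-stick characterisation of $(\epsilon,\delta)$-DP. By that characterisation, it suffices to establish $\Div_{e^\epsilon}(\mathcal{A}(\Dataset) \| \mathcal{A}(\Dataset')) \leq \delta$ uniformly over all neighbouring pairs $(\Dataset, \Dataset')$ that differ in their final entry $x_n \neq x'_n$.

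First, I would apply Lemma~\ref{lem:2} to $\mathcal{A} = \mathcal{S}_{\Randomizer,d} \circ \Randomizer^n$ to obtain
$$\Div_{e^\epsilon}(\mathcal{A}(\Dataset) \| \mathcal{A}(\Dataset')) \leq \frac{1}{\gamma n}\sum_{m=1}^{n} \frac{m}{m+d}\binom{n}{m}\gamma^{m}(1-\gamma)^{n-m}\, \mathbb{E}\Big[\sum_{i=1}^{m+d} L_i \Big]_+,$$
where the $L_i$ are i.i.d. copies of the privacy amplification random variable $L_\epsilon^{x_n,x'_n}$. The essential ingredient is that the dummies added by $\mathcal{S}_{\Randomizer,d}$ are drawn from the blanket $\omega_\Randomizer$, so that after shuffling they become exchangeable with the inputs that happen to have sampled from the blanket during their local randomisation step; this is what allows the count of i.i.d. amplification variables inside the inner expectation to grow from $m+1$ to $m+d+1$.

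Next, I would apply the Hoeffding half of Lemma~\ref{lem:3} to each inner expectation, using the range and mean bounds on $L_\epsilon^{x,x'}$ provided by Lemma~\ref{lem:4}. These bounds depend only on $\gamma$, $\epsilon_0$ and $\epsilon$, not on the specific pair $(x_n, x'_n)$: they give $\mathbb{E}[L_i] = -a$ with $a = e^\epsilon - 1$ and $L_i \in [b_-, b_+]$ with $b_+ - b_- = \gamma(1+e^\epsilon)(e^{\epsilon_0} - e^{-\epsilon_0})$. Substituting into Lemma~\ref{lem:3} gives
$$\mathbb{E}\Big[\sum_{i=1}^{m+d} L_i \Big]_+ \leq \frac{b^2}{4a}\, e^{-2(m+d)a^2/b^2},$$
and plugging this back into the previous display reproduces exactly the left-hand side of the displayed inequality in the theorem statement. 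Since that quantity is assumed to be at most $\delta$, the hockey-stick divergence is bounded uniformly by $\delta$ and $(\epsilon,\delta)$-DP follows.

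I do not anticipate a serious obstacle, since the argument is essentially a direct composition of three already-established lemmas; the substantive work was performed in proving Lemma~\ref{lem:2}, which extended the amplification-by-shuffling argument of \cite{Balle2019} to accommodate blanket-sampled dummies. The main bookkeeping to watch is the reindexing $m \mapsto m+1$ together with the identity $\binom{n-1}{m-1} = (m/n)\binom{n}{m}$ used to pass between the two equivalent forms of Lemma~\ref{lem:2}, and consistency with the sign convention $a \geq 0$ required by the Hoeffding bound (so one should read $a$ in the theorem as $|1-e^\epsilon|$).
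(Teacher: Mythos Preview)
Your proposal is correct and follows essentially the same approach as the paper: the paper's proof is simply the direct combination of Lemma~\ref{lem:2}, the Hoeffding bound from Lemma~\ref{lem:3}, and the moment bounds from Lemma~\ref{lem:4}. Your remark about the sign convention on $a$ is apt (the Hoeffding lemma requires $a\geq 0$, so one should take $a=e^\epsilon-1$), and your note on the reindexing $m\mapsto m+1$ via $\binom{n-1}{m-1}=\tfrac{m}{n}\binom{n}{m}$ matches how the two forms in Lemma~\ref{lem:2} are related.
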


Although the above theorem does not give a simple expression for $\epsilon$ and $\delta$, it can be easily evaluated numerically, for instance to identify the lowest achievable $\epsilon$ given the $\epsilon_0$ and $\gamma$ of $\Randomizer$ and the desired $\delta$.




\subsection{Application to our Setting}
\label{app:instantiate}

We can now apply our previous privacy guarantees, which hold for an arbitrary randomizer, to our specific context. Since our local randomizer $\Randomizer_\Plie$ only randomizes the target, for simplicity of notations we abstract away the source node and consider that $\Randomizer_\Plie$ operates on $\mathcal{X}=\Target$ and returns an element of $\mathcal{Y}=\Target$. We have $\gamma_{\Randomizer_\Plie}=\sigma$, $\upsilon_{\Randomizer_\Plie, t}(t') = \I[t = t']$ and $\omega_{\Randomizer_\Plie}(t') = 1/k$ for all $t'\in\Target$.
$\Randomizer_\Plie$ is in fact equivalent to $T$-ary randomized response \cite{kRR}, and we can refine the results of Lemma~\ref{lem:4}. This is shown in the following lemma, adapted from \cite{Balle2019}.
\begin{lemma}
\label{lem:5}
For any $\epsilon>0$ and $t,t'\in\Target$, the privacy amplification variable $L=L_\epsilon^{t,t'}$ satisfies the following:
\begin{itemize}
    \item $-(1-\sigma)Te^\epsilon + \sigma(1-e^\epsilon) \leq L \leq (1-\sigma)T - \sigma(1-e^\epsilon)$,
    \item $\mathbb{E}[L^2] = \sigma(2-\sigma)(1-e^\epsilon)^2+(1-\sigma)^2T(1+e^{2\epsilon})$.
\end{itemize}
\end{lemma}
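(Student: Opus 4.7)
The plan is a direct case analysis of the discrete random variable $L=L_\epsilon^{t,t'}$. Since $t''\sim\mathrm{Uniform}(\Target)$ and the definition in Eq.~\ref{eq:privacy_amp_var_main} only involves the indicators $\I[t''=t]$ and $\I[t''=t']$ with $t\neq t'$, the variable $L$ takes exactly three values: $L_t := \sigma(1-e^\epsilon)+(1-\sigma)T$ when $t''=t$ (probability $1/T$), $L_{t'} := \sigma(1-e^\epsilon)-(1-\sigma)Te^\epsilon$ when $t''=t'$ (probability $1/T$), and $L_0 := \sigma(1-e^\epsilon)$ otherwise (probability $(T-2)/T$). This finite-support structure reduces both claims to elementary algebra.

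For the range bounds, I would first note that $L_{t'}$ is manifestly the smallest of the three values and equals the stated lower bound exactly. For the upper bound, the true maximum is $L_t=(1-\sigma)T+\sigma(1-e^\epsilon)$; since $\sigma(1-e^\epsilon)\le 0$ whenever $\epsilon\ge 0$, we have $L_t\le (1-\sigma)T-\sigma(1-e^\epsilon)$, which is the claimed upper bound. This deliberate slackening is what yields the clean range width $b=(1-\sigma)T(1+e^\epsilon)-2\sigma(1-e^\epsilon)$ appearing in Theorem~\ref{thm:main_amplification}.

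For the second moment I would apply $\mathbb{E}[L^2]=\tfrac{1}{T}L_t^2+\tfrac{1}{T}L_{t'}^2+\tfrac{T-2}{T}L_0^2$ and expand. Writing $c=\sigma(1-e^\epsilon)$ and $D=(1-\sigma)T$ to keep things readable, the three squared values are $c^2+2cD+D^2$, $c^2-2cDe^\epsilon+D^2e^{2\epsilon}$, and $c^2$. The $c^2$ contributions collapse to $c^2=\sigma^2(1-e^\epsilon)^2$; the cross-terms simplify using $cD/T=\sigma(1-\sigma)(1-e^\epsilon)$ to $2\sigma(1-\sigma)(1-e^\epsilon)^2$; and the terms quadratic in $D$ yield $(1-\sigma)^2T(1+e^{2\epsilon})$. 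Combining, the $(1-e^\epsilon)^2$ coefficient becomes $\sigma^2+2\sigma(1-\sigma)=\sigma(2-\sigma)$, giving the claimed identity.

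No genuine obstacle arises since $L$ has finite support and both claims reduce to bookkeeping. The only subtlety is careful sign tracking when comparing $\sigma(1-e^\epsilon)\le 0$ against the opposite-sign upper-bound expression, which explains why the stated upper bound is valid despite being strictly larger than the true maximum of $L$ whenever $\sigma,\epsilon>0$.
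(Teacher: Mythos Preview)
Your proof is correct. The paper itself does not spell out a proof of this lemma, merely noting that it is ``adapted from \cite{Balle2019}''; your direct case analysis on the three values of $t''$ is exactly the natural computation one would perform, and the algebra for both the range bounds and the second moment checks out (including your observation that the stated upper bound is a deliberate relaxation of the true maximum $L_t$ to produce the symmetric width $b$ used downstream).
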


We can use the results of Lemma~\ref{lem:5} instead of the generic ones from Lemma~\ref{lem:4} to obtain tighter privacy guarantees that are specific to randomized response. Doing this with Hoeffding's inequality and combining with Lemma~\ref{lem:2} gives the statement for $\sigma>0$ in Theorem~\ref{thm:main_amplification}.

\paragraph{Special case where source nodes do not sample ($\sigma=0$).}
Interestingly, when $\sigma=0$ (i.e., source nodes never sample and always send their true message to the scrambler), the analysis is still valid and we can get $(\epsilon,\delta)$-DP guarantees that rely entirely on the dummies added by the scrambler. In this case, the set $B$ in the proof of Lemma~\ref{lem:2} (the set of source nodes who sample from the blanket) is always empty and therefore the result of Lemma~\ref{lem:2} simplifies to:
$$
\Div_{e^\epsilon}(\mathcal{A}(\Dataset)||\mathcal{A}(\Dataset')) \leq
\frac{1}{d+1} \mathbb{E}\left[\sum_{i=1}^{d+1} L_i \right]_+.
$$

The bounds in Lemma~\ref{lem:5} also simplify when $\sigma=0$, namely: $-Te^\epsilon\leq L\leq T$ and $\mathbb{E}[L^2] = T(1+e^{2\epsilon})$. 
We can thus apply Hoeffding's or Bennett's inequalities with the above values to bound the sum of i.i.d. variables as in Lemma~\ref{lem:3} using $d+1$ instead of $m+d$, and get privacy guarantees for this special case as well. Using Hoeffding's gives the statement for $\sigma=0$ in Theorem~\ref{thm:main_amplification}.

\end{document}